\newenvironment{dmGLdefnblock}[3][]{ \framebox{\mbox{#2}} \quad #3 \\[0pt]}{}
\newcommand{\dmGLnt}[1]{\mathit{#1}}
\newcommand{\dmGLmv}[1]{\mathit{#1}}
\newcommand{\dmGLsym}[1]{#1}
\newcommand{\dmGLdrulename}[1]{\textsc{#1}}
        \newcommand{\sfoperator}[1]{\operatorname{\mathsf{#1} } }
\newcommand{\dmGLdruleGXXsubusageName}[0]{\dmGLdrulename{G\_subusage}}
\newcommand{\dmGLdruleGXXweakName}[0]{\dmGLdrulename{G\_weak}}
\newcommand{\dmGLdruleGXXconvertName}[0]{\dmGLdrulename{G\_convert}}
\newcommand{\dmGLdruleGXXunitElimName}[0]{\dmGLdrulename{G\_unitElim}}
\newcommand{\dmGLdruleGXXradjIntroName}[0]{\dmGLdrulename{G\_radjIntro}}
\newcommand{\dmGLdruleMXXidName}[0]{\dmGLdrulename{M\_id}}
\newcommand{\dmGLdruleMXXsubusageName}[0]{\dmGLdrulename{M\_subusage}}
\newcommand{\dmGLdruleMXXweakName}[0]{\dmGLdrulename{M\_weak}}
\newcommand{\dmGLdruleMXXexchangeName}[0]{\dmGLdrulename{M\_exchange}}
\newcommand{\dmGLdruleMXXladjIntroName}[0]{\dmGLdrulename{M\_ladjIntro}}
\newcommand{\dmGLdruleMXXladjElimName}[0]{\dmGLdrulename{M\_ladjElim}}
\newcommand{\dmGLdruleMXXradjElimName}[0]{\dmGLdrulename{M\_radjElim}}
\newenvironment{daldefnblock}[3][]{ \framebox{\mbox{#2}} \quad #3 \\[0pt]}{}
\newcommand{\dalnt}[1]{\mathit{#1}}
\newcommand{\dalmv}[1]{\mathit{#1}}
\newcommand{\dalsym}[1]{#1}
		\newcommand*{\doubleplus}{\null+\kern-.8em+\kern0.3em\null}
\theoremstyle{plain}
\newtheorem{theorem}{Theorem}[section]
\newtheorem{lem}[theorem]{Lemma}
\newtheorem{cor}[theorem]{Corollary}
\theoremstyle{definition}
\newtheorem{defn}[theorem]{Definition}
\newtheorem{exmp}[theorem]{Example}
\newtheorem{rem}[theorem]{Remark}
\newtheorem{conv}[theorem]{Convention}
\newcommand*{\mb}{\mathbf}
\newcommand*{\mc}{\mathcal}
\newcommand*{\mf}{\mathfrak}
\newcommand*{\ms}{\mathsf}
\newcommand*{\mbb}{\mathbb}
\newcommand*{\N}{\mbb N}
\newcommand*{\nat}{\mb {Nat}}
\newcommand*{\dmGL}{{\normalfont\textsc{dmGL}}}
\newcommand*{\proves}{\vdash}
\newcommand*{\from}{\colon}
\newcommand*{\at}{\odot}
\DeclareMathOperator{\Weak}{\ms{Weak}}
\newcommand*{\Modes}{\ms{Modes}}
\newcommand*{\True}{\ms{True}}
\newcommand*{\False}{\ms{False}}
\newcommand*{\covar}{{\uparrow\uparrow}}
\newcommand*{\invar}{{\sim\sim}}
\newcommand*{\contvar}{{\downarrow\downarrow}}
\newcommand*{\unvar}{{??}}
\newcommand{\LNLD}{\ensuremath{\text{LNL}_{\text{D}}}}
\newcommand{\GraD}{\textsc{Grad}}
\newcommand{\Grad}{\GraD}
\newcommand{\GlaD}{\textsc{Glad}}
\newcommand{\glad}{\GlaD}
\newcommand*{\Cont}{\ms{Cont}}
\newcommand*{\namedrule}[2][\relax]{%
  \def\pr@fix{#1XX}%
  \def\n@me{#2}%
  \edef\ou@t{\expandafter\noexpand\csname dmGLdrule\pr@fix\n@me\endcsname}%
  \ou@t{}
}
\newcommand*{\drulename}[2][\relax]{%
  \def\pr@fix{#1XX}%
  \def\n@me{#2Name}%
  \edef\ou@t{\expandafter\noexpand\csname dmGLdrule\pr@fix\n@me\endcsname}%
  \ou@t{}\hspace{-0.25em}%
}
\newcommand*{\drulenames}[2][\relax]{%
  \foreach \@iter in {#2}{%
    \drulename[#1]{\@iter},
  }}
\newcommand*{\pprules}[2][\relax]{%
  \small%
    \foreach \@iter in {#2}{%
      \namedrule[#1]{\@iter} \and%
    }%
}
\newcommand*{\dalrulename}[2][\relax]{%
  \def\pr@fix{#1XX}%
  \def\n@me{#2Name}%
  \edef\ou@t{\expandafter\noexpand\csname daldrule\pr@fix\n@me\endcsname}%
  \ou@t{}\hspace{-0.25em}%
}
\newcommand*{\nameddalrule}[2][\relax]{%
  \def\pr@fix{#1XX}%
  \def\n@me{#2}%
  \edef\ou@t{\expandafter\noexpand\csname daldrule\pr@fix\n@me\endcsname}%
  \ou@t{}
}
\newcommand*{\dalrules}[2][\relax]{%
  \small%
  \begin{mathpar}%
    \foreach \@iter in {#2}{%
      \nameddalrule[#1]{\@iter} \and%
    }%
  \end{mathpar}%
}
\newcommand*{\proofitem}[1]{%
  \item{\itshape #1}%
}
\begin{document}

\title{Combining dependency, grades, and adjoint logic}

\author{Peter Hanukaev}
\email{phanukaev@augusta.edu}
\orcid{0009-0006-0056-3091}
\author{Harley Eades III}
\email{harley.eades@gmail.com}
\orcid{0000-0001-8474-5971}
\affiliation{%
  \institution{School of Computer and Cyber Sciences}
  \streetaddress{1120 15th Street}
  \city{Augusta}
  \state{Georgia}
  \country{USA}
  \postcode{30912}
}

\begin{abstract}
  We propose two new dependent type systems. The first, is a dependent
graded/linear type system where a graded dependent type system is
connected via modal operators to a linear type system in the style of
Linear/Non-linear logic. We then generalize this system to support
many graded systems connected by many modal operators through the
introduction of modes from Adjoint Logic.  Finally, we prove several
meta-theoretic properties of these two systems including graded
substitution.

\end{abstract}

\begin{CCSXML}
<ccs2012>
   <concept>
       <concept_id>10003752.10003790.10003801</concept_id>
       <concept_desc>Theory of computation~Linear logic</concept_desc>
       <concept_significance>500</concept_significance>
       </concept>
   <concept>
       <concept_id>10003752.10003790.10011740</concept_id>
       <concept_desc>Theory of computation~Type theory</concept_desc>
       <concept_significance>500</concept_significance>
       </concept>
   <concept>
       <concept_id>10003752.10003790.10003792</concept_id>
       <concept_desc>Theory of computation~Proof theory</concept_desc>
       <concept_significance>300</concept_significance>
       </concept>
   <concept>
       <concept_id>10003752.10010124.10010131.10010137</concept_id>
       <concept_desc>Theory of computation~Categorical semantics</concept_desc>
       <concept_significance>300</concept_significance>
       </concept>
 </ccs2012>
\end{CCSXML}

\ccsdesc[500]{Theory of computation~Linear logic}
\ccsdesc[500]{Theory of computation~Type theory}
\ccsdesc[300]{Theory of computation~Proof theory}
\ccsdesc[300]{Theory of computation~Categorical semantics}

\keywords{linear logic, graded types, adjoint logic, dependent types, semantics, resource tracking}

\maketitle

\section{Introduction}
\label{sec:introduction}
In this paper we consider the combination of dependent types, graded
types, and adjoint logic.  We propose two new dependent type
systems. The first, is a graded version of Krishnaswami
et. al's~\cite{Krishnaswami2015} Dependent LNL system, where the
dependently typed side is replaced with a graded dependent type system
called $\GraD$~\cite{Choudhury2021}. As is the case for LNL style systems
there are two systems, a linear one and a non-linear one, connected by
a set of modal operators for moving between the two. We then consider
how to merge these two systems and support many such modal operators
through the introduction of modes from Adjoint
Logic~\cite{Pruiksma2018}.

Linear logic~\cite{Girard1987} and bunched
implications~\cite{OHearn:1999} are two examples of substructural
logics that have been instrumental in the study of type-based
resource tracking.  In the former, the structural rules for weakening
and contraction have been restricted to only be usable when types are
annotated with a modal operator called of-course and denoted by $!A$. This
restriction enforces that variables can only be used exactly once
unless their type is marked by the of-course modality. Thus, if we
always annotate our types, then we are essentially programming in a
standard functional programming language.  Linear types can be given a
resource-based semantics which allows the type system to enforce that
certain resources must be used or a type error is produced.  For
example, we can use linear types for the types of file handles which
if they are used after they are closed, then a type error is produced.
Linear types are also the logical foundation of session type systems
which can be used to reason about message passing
systems~\cite{Caires:2010}.

Bunched implications is the logical foundation of separation
logic~\cite{OHearn:1999,OHearn:2001} which is used to reason about
imperative programs.  Linear logic can be seen a combination of linear
types and non-linear types through the of-course modality.  Now
Bunched implications is a similar combination, but rather than through
a modality, linear types and non-linear types coexist in the same
system without any modality.  Thus, one can program in either one
without any annotations.  Then both systems interact through the two
notions of implication from the linear and non-linear sides.

Linear logic and Bunched implications can be seen as two ends of a
spectrum based on how much we need to annotate types.  One difficulty
with Bunched implications is that it can be difficult to implement,
and it is unclear how to generalize it with more than two systems.
Benton's Linear/Non-linear logic (LNL)~\cite{Benton1995} is a
compromise between being fully annotated and not needing annotations.
LNL combines linear and non-linear types in such a way that one can
program using both without the need to annotate non-linear types with
the of-course modality, but when one wishes to mix the two fragments
one can use two modal operators that allows transporting a non-linear
type to the linear fragment, and a linear type to the non-linear
fragment.  If one composes these two modal operators, then one obtains
the of-course modality.  Furthermore, we know how to generalize LNL to
support more than two systems.

Adjoint Logic~\cite{Pruiksma2018} generalizes LNL by combining a
family of logics with varying degrees of structural rules as opposed
to just two fragments of non-linear (allowing weakening and
contraction) and linear (no structural rules).  This is accomplished
by taking a linear base system where all types are annotated with a
mode. Each mode is then assigned which structural rules will be
allowed in the fragment the mode represents.  Finally, one can
transport types from one mode to another through modalities similar to
the ones found in LNL. This generalization greatly increases the
expressiveness of the logic.  For example, LNL is easily an instance
by taking two modes one allowing both weakening and contraction and
one that does not.  We could also add a third mode allowing only
contraction resulting in combining non-linear logic, relevance logic,
and linear logic.

Dependent types allow one to specify and prove properties of programs
within the same language they are written~\cite{Nordstrom:1990}.
Linear logic has the benefit of affording the ability to specify and
prove properties of imperative programs.  Krishnaswami et
al.~\cite{Krishnaswami2015} show how to integrate dependent types with
linear types by generalizing the non-linear fragment of LNL to a
dependent type system $\LNLD$ where the modality from the non-linear
fragment now transports a dependent type to the linear fragment.  Then
using this new mixture of dependent and linear types they show how to
specify and prove properties of imperative programs in the style of
Bunched implications.  The modes found in adjoint logic have also been
used to design dependent type system similar to $\LNLD$ with more than
two fragments, but with an eye towards combining dependent types and a
family of modal logics~\cite{10.1145/3373718.3394736} rather than just
controlling the existence of structural rules.

Graded types are a rather recent addition to linear types where types
are annotated with a resource annotation describing how variables of
those types can be used; essentially controlling their dataflow.  The
type system is parameterized by an ordered semiring whose elements are
the grades.  The grades on types offer more fine grained control
over resource usage.  For example, if the ordered semiring is taken to be
the natural numbers, then the grade describes exactly the number of
times the variable is allowed to be used.  Furthermore, graded types
also have been shown to be a means of combining linear types with
dependent types~\cite{Moon2021,Orchard2019,Choudhury2021,Atkey2018}.

\textbf{Contributions.} We combine dependent types, graded types, and
the modes of adjoint logic to define a new system capable of combining
lots of substructural logics.  First, we generalize $\LNLD$ into a new
system called $\dmGL$, and then we generalize $\dmGL$ by adding modes
producing our final system called $\GlaD$. All of our contributions
are as follows:
\begin{itemize}
\item We replace the dependent type system of $\LNLD$ with $\GraD$ a
  graded dependent type system. This system called $\dmGL$ gives more
  control over resource usage~\cite{Choudhury2021} producing a graded
  dependent linear/non-linear system.  Then we prove:
  \begin{itemize}
  \item Substitution for the entire system ensuring that typed graded
    composition is preserved.
  \item Context and type well formedness.
  \item Graded contraction and weakening are admissible in the mixed
    linear/non-linear fragment.
  \item Subject reduction for the entire system.
  \end{itemize}
\item The previous system has two explicit fragments, but with grading
  on one side. Now we generalize this system one step further by
  introducing the modes from adjoint logic.  This system is
  parameterized by a family of modes and preordered semirings where
  each mode is paired with a potentially different preordered
  semiring.  Then each type is annotated with both a mode and a grade
  (element of the preordered semiring).  Then we prove:
  \begin{itemize}
  \item Substitution for the entire system ensuring that typed graded
    composition is preserved.
  \item Contraction is admissible in the mixed fragment.
  \end{itemize}
\end{itemize}


\section{A Dependent Mixed Graded and Linear Type System}
\label{sec:dmGL}
In this section we present our first type system,
which combines the graded dependently typed system \GraD{} with linear logic.
We call this system \dmGL{} (dependent mixed graded linear).
As with previously proposed dependently typed graded systems
\cite{Atkey2018, Choudhury2021, Moon2021},
variables in \dmGL{} are annotated by grades drawn from a semiring
which captures a computational notion of resource usage.

\begin{defn}[Grades]
  \dmGL{} is parametrized by a preordered semiring
  $ (R, 0, 1, +, \cdot, \leq ) $,
  that is $ R $ is equipped with a preorder $ \leq $
  and a semiring structure
  $ ( R, 0, 1, +, \cdot ) $
  such that the operations $ ( + ) $ and $ ( \cdot ) $ are monotonic in both arguments.
  Elements of $ R $ are called \emph{grades}
  and denoted $ \dmGLnt{r}, \dmGLnt{p}, \dmGLnt{q} $.
\end{defn}

\begin{exmp}[Variable Re-use]
  \label{exmp:var-reuse}
  We take $ R = \N $ the semiring of natural numbers,
  with the usual addition and multiplication.
  In the judgment
  \[
	x :^2 \nat \proves
    \text{\sffamily if Even$ (x) $ then $ x / 2 $ else $ 3 x + 1 $} : \nat
  \]
  the annotation of $ 2 \in \N $,
  indicates that the variable $ x $ is used two times in the computation of the consequent term.
  This grading was originally introduced by Girard for Bounded Linear Logic~\cite{Girard:1992}
  and used to characterize polynomial time computation,
  but has also been used, for example, for automated garbage collection~\cite{Choudhury2021}.
  The preorder we choose on $ \N $ is also relevant:
  It will be used to control the discarding of resources.
  If we choose the ordinary preorder $ \leq $ on $ \N $,
  then we could replace the annotation $ 2 $ above by some other integer $ k \ge 2 $.
  In this case, the annotation $ k $ would mean that the variable $ x $
  is used up to $ k $ times in the computation.
  On the other hand, choosing the preorder to be the trivial one with $ m \leq n \iff m = n $,
  would guarantee a usage of two times exactly.
\end{exmp}

\begin{exmp}[Quantitative Semirings]
  \label{exmp:quant-semiring}
  Continuing from the previous example,
  call a semiring $ R $ \emph{quantitative}%
  \footnote{Here, we follow terminology by Moon et al. \cite{Moon2021}}
  if it satisfies
  \begin{enumerate} [label = \roman*)]
  \item
    $ 0 \neq 1 $

  \item
    $ r + p = 0 \implies r = p = 0 $

  \item
    $ r \cdot p = 0 \implies r = 0 \lor p = 0 $
    
  \end{enumerate}
  Choudhury et al. proved of \GraD{} that a variable graded with $ 0 $ in such a semiring
  is guaranteed to be computationally irrelevant
  and since our system is based on \Grad{}, a similar result is expected to hold for \dmGL{}.
  Therefore, such semirings allow the tracking of computationally relevant vs. irrelevant data.
  This is particularly relevant in dependently typed programs,
  as it allows a distinction between variables which are only used in type checking
  and those which are used in the execution of a program.
  Examples of such semirings are the natural numbers,
  and the following two which we elaborate upon in more detail.

  The boolean semiring is $ R = \{ 0, 1 \} $ with $ 1 + 1 = 1 $.
  This semiring's tracking of variable usage is coarse grained with
  $ 0 $ meaning computational irrelevance and and $ 1 $ representing some usage.
  We have not yet discussed whether $ 0 \leq 1 $ should hold in this semiring.
  \dmGL{} features a subusage rule which assert a variable graded with $ r \in R $
  may also be graded with $ q $, so long as $ r \leq q $.
  If we choose $ 0 \leq 1 $ to be true, the subusage rule will allow
  us to discard variables graded $ 1 $.
  On the other hand if we choose $ 0 \leq 1 $ to not be true,
  variables graded $ 1 $ are guaranteed to be computationally relevant.
  
  The \emph{none-one-tons} semiring is $ R = \{ 0, 1, \omega \} $ in which we have
  $ 1 + 1 = 1 + \omega = \omega $.
  This semiring offers slightly more fine grained tracking,
  with $ 1 $ now representing linear use, and $ \omega $ representing unrestricted use.
  We take $ 0 \leq \omega $ to allow the discarding of unrestricted variables, and
  $1 \leq \omega$ to allow promotion of linear variables to the unrestricted case.
  If we make $ 1 $ incomparable by $ \leq $ with the other elements,
  we can guarantee that variables graded $ 1 $ are in fact used linearly.
\end{exmp}

For the remainder of this section, we fix a preordered semiring 
$ (R, 0, 1, +, \cdot, \leq ) $.

The syntax of terms and types in \dmGL{} is given in \Cref{fig:syntax}
and will be explained throughout the remainder of this section as it becomes relevant.

\begin{defn}
  \emph{Grade vectors} are finite lists of grades, and denoted by $ \delta $.
  They have the syntax
  \[
    \delta := \emptyset \mid \delta ,  \dmGLnt{r}
  \]
  with $ \emptyset $ denoting the empty grade vector.
  We write $ \delta ,  \delta' $ for the concatenation of grade vectors $ \delta $ and $ \delta' $
  and use $ \vec{0} $ to denote any grade vector consisting of only $ 0 $s.
  We extend the operations $ + $ and $ \leq $ to grade vectors of equal length pointwise
  and define scalar multiplication $ \dmGLnt{r}   \cdot   \delta $ in the obvious way.
  \begin{mathpar}
    \emptyset  +  \emptyset = \emptyset
    \and
    \dmGLsym{(}   \delta ,  \dmGLnt{r}   \dmGLsym{)}  +  \dmGLsym{(}   \delta' ,  \dmGLnt{r'}   \dmGLsym{)} = \delta  +  \delta' ,  \dmGLnt{r}  +  \dmGLnt{r'}
    \\
    \emptyset  \leq  \emptyset \iff \True
    \and
    \dmGLsym{(}   \delta ,  \dmGLnt{r}   \dmGLsym{)}  \leq  \dmGLsym{(}   \delta' ,  \dmGLnt{r'}   \dmGLsym{)} \iff \delta  \leq  \delta' \land \dmGLnt{r}  \leq  \dmGLnt{r'}
    \\
    \dmGLnt{r}   \cdot   \emptyset = \emptyset
    \and
    \dmGLnt{r}   \cdot   \dmGLsym{(}   \delta ,  \dmGLnt{q}   \dmGLsym{)} = \dmGLnt{r}   \cdot   \delta  ,  \dmGLnt{r}  \cdot  \dmGLnt{q}
  \end{mathpar}
\end{defn}

The basic structure of \dmGL{} is similar to other mixed
linear/non-linear type systems
\cite{Benton1995, mGL, Krishnaswami2015}.
\dmGL{} consist of two fragments:
A purely graded fragment and a mixed fragment.
Terms and types are divided into graded and linear as well.
Typing judgments in the graded fragment may only have graded hypotheses,
and produce \emph{graded terms} which belong to \emph{graded types}.
In the mixed fragment,
typing judgments produce \emph{linear terms} belonging to \emph{linear types},
but assumptions may consist of both linear and graded formulas.
The graded fragment is the dependently typed system \GraD{},
a type system by Choudhury et al.~\cite{Choudhury2021}.
That work also provides a more detailed discussion of the rules presented here.
In the mixed fragment, linear types may dependent on graded variables but not on linear ones.
Because of this, we treat both graded and linear types as graded terms,
belonging to type universes $ \mathsf{Type} $ and $ \mathsf{Linear} $ respectively.
Graded types are denoted $ \dmGLnt{X}, \dmGLnt{Y}, \dmGLnt{Z}, \dmGLnt{W} $
and linear types are denoted $ \dmGLnt{A}, \dmGLnt{B}, \dmGLnt{C} $.
Their syntax is given as part of the complete syntax of \dmGL{}
in \Cref{fig:syntax}.

\begin{defn}[Contexts]
  Contexts are lists of typing assignments to variables.
  We use the letters $ \dmGLmv{x}, \dmGLmv{y}, \dmGLmv{z} $ for variables.
  While we make no syntactic distinction between variables assigned to graded or linear types,
  we do distinguish between \emph{graded} and \emph{linear contexts},
  denoted by $ \Delta $ and $ \Gamma $ respectively
  and assigning variables to only graded or linear types respectively.
  The graded fragment is dependently typed,
  and in the mixed fragment we allow types in the linear context to depend on variables
  appearing in the graded context.
  Because of this, our system requires judgments asserting that contexts are well formed.
  These judgment forms are $ \delta  \odot  \Delta  \vdash_\mathsf{G} \, \mathsf{ctx} $ and
  $ \delta  \odot  \Delta  \dmGLsym{;}  \Gamma  \vdash_\mathsf{M} \, \mathsf{ctx} $ respectively and their rules are given in \Cref{fig:ctxWellFormed}.
  Here, the formulas $ \dmGLmv{x} \, \notin \, \sfoperator{dom} \, \Delta $ and $ \dmGLmv{x} \, \notin \, \sfoperator{dom} \, \Gamma $ indicate
  that that the variable $ \dmGLmv{x} $ is not bound in context $ \Delta $ or $ \Gamma $ respectively.

  The former of the above judgment forms means that $ \Delta $ is well-formed context
  and also ensures that the attached grade vector has the same length as $ \Delta $.
  The latter ensures that all types appearing in linear context $ \Gamma $ are
  well formed over the graded context $ \Delta $.
\end{defn}

\begin{rem}
  Note that in the context extension rule for graded contexts,
  the grade vector $ \delta $ is extended by an arbitrary grade $ \dmGLnt{r} $.
  Because of this it is actually provable that if $ \delta  \odot  \Delta  \vdash_\mathsf{G} \, \mathsf{ctx} $,
  and $ \delta' $ is any grade vector with the same length as $ \delta $,
  then $ \delta'  \odot  \Delta  \vdash_\mathsf{G} \, \mathsf{ctx} $ and similarly for mixed contexts.
\end{rem}

\begin{figure}
  \begin{mdframed}
    \begin{align*}
        \dmGLnt{X} , \dmGLnt{Y}, \dmGLnt{Z}, \dmGLnt{W} := \ &
            \mathbf{J}
            \mid  ( \dmGLmv{x}  :^{ \dmGLnt{r} }  \dmGLnt{X} )  \boxtimes   \dmGLnt{Y} 
            \mid  \dmGLnt{X}  \boxplus  \dmGLnt{Y} 
            \mid  (  \dmGLmv{x}  :^{ \dmGLnt{r} }  \dmGLnt{X}  )  \to   \dmGLnt{Y}
      \\ &
            \mid  \mathcal{G} \, \dmGLnt{A}
            \mid  \mathsf{Type}
            \mid  \mathsf{Linear}
            \\
        \dmGLnt{t}, \dmGLnt{s} := \ &
        \dmGLmv{x}
            \mid \mathbf{j}
            \mid \sfoperator{let} \, \mathbf{j} \, \dmGLsym{=}  \dmGLnt{t_{{\mathrm{1}}}} \, \sfoperator{in} \, \dmGLnt{t_{{\mathrm{2}}}}
            \mid \dmGLsym{(}  \dmGLnt{t_{{\mathrm{1}}}}  \dmGLsym{,}  \dmGLnt{t_{{\mathrm{2}}}}  \dmGLsym{)}
      \\ &
            \mid \sfoperator{let} \, \dmGLsym{(}  \dmGLmv{x}  \dmGLsym{,}  \dmGLmv{y}  \dmGLsym{)}  \dmGLsym{=}  \dmGLnt{t_{{\mathrm{1}}}} \, \sfoperator{in} \, \dmGLnt{t_{{\mathrm{2}}}}
            \mid \sfoperator{inl} \, \dmGLnt{t}
            \mid \sfoperator{inr} \, \dmGLnt{t}
      \\ &
            \mid \sfoperator{case} _{ \dmGLnt{q} }  \dmGLnt{t}   \sfoperator{of}   \dmGLnt{t_{{\mathrm{1}}}}  ;  \dmGLnt{t_{{\mathrm{2}}}}
            \mid \lambda  \dmGLmv{x}  \dmGLsym{.}  \dmGLnt{t}
            \mid \dmGLnt{t_{{\mathrm{1}}}} \, \dmGLnt{t_{{\mathrm{2}}}}
            \mid \mathcal{G} \, \dmGLnt{l}
            \mid X
            \mid A
            \\
        \dmGLnt{l} := \ &
            \dmGLmv{x}
            \mid \mathbf{i}
            \mid \sfoperator{let} \, \mathbf{i} \, \dmGLsym{=}  \dmGLnt{l_{{\mathrm{1}}}} \, \sfoperator{in} \, \dmGLnt{l_{{\mathrm{2}}}}
            \mid \lambda  \dmGLmv{x}  \dmGLsym{.}  \dmGLnt{l}
            \mid \dmGLnt{l_{{\mathrm{1}}}} \, \dmGLnt{l_{{\mathrm{2}}}}
            \mid \dmGLsym{(}  \dmGLnt{l_{{\mathrm{1}}}}  \dmGLsym{,}  \dmGLnt{l_{{\mathrm{2}}}}  \dmGLsym{)}
            \\ &
            \mid \sfoperator{let} \, \dmGLsym{(}  \dmGLmv{x}  \dmGLsym{,}  \dmGLmv{y}  \dmGLsym{)}  \dmGLsym{=}  \dmGLnt{l_{{\mathrm{1}}}} \, \sfoperator{in} \, \dmGLnt{l_{{\mathrm{2}}}}
            \mid \mathcal{F} \, \dmGLsym{(}  \dmGLnt{t}  \dmGLsym{,}  \dmGLnt{l}  \dmGLsym{)}
            \\ &
            \mid \sfoperator{let} \, \mathcal{F} \, \dmGLsym{(}  \dmGLmv{x}  \dmGLsym{,}  \dmGLmv{y}  \dmGLsym{)}  \dmGLsym{=}  \dmGLnt{l_{{\mathrm{1}}}} \, \sfoperator{in} \, \dmGLnt{l_{{\mathrm{2}}}}
            \mid \mathcal G^{-1} \, \dmGLnt{t}\\
        \dmGLnt{A} , \dmGLnt{B}, \dmGLnt{C} := \ &
            \mathbf{I}
            \mid  \dmGLnt{A}  \multimap  \dmGLnt{B}
            \mid  \dmGLnt{A}  \otimes  \dmGLnt{B}
            \mid  \mathcal{F}  ( \dmGLmv{x}  :^{ \dmGLnt{r} }  \dmGLnt{X} ). \dmGLnt{A}
      \\
      \Delta := \ &
                     \emptyset
                     \mid \Delta  \dmGLsym{,}  \dmGLmv{x}  \dmGLsym{:}  \dmGLnt{X}
      \\
      \Gamma := \ &
                     \emptyset
                     \mid \Gamma  \dmGLsym{,}  \dmGLmv{x}  \dmGLsym{:}  \dmGLnt{A}
    \end{align*}
  \end{mdframed}
  \caption{Syntax of \dmGL{}}
  \label{fig:syntax}
\end{figure}

\begin{figure}
  \begin{mdframed}
    \begin{mathpar}
    \pprules[gradedCtx]{empty, extend}
    \pprules[mixedCtx]{empty, extend}
    \end{mathpar}
  \end{mdframed}
  \caption{Rules for well formed contexts}
  \label{fig:ctxWellFormed}
\end{figure}

Aside from the two type universes,
our system only contains two basic types,
namely the graded and linear unit types $ \mathbf{J} $ and $ \mathbf{I} $ respectively.
To construct more complex graded types,
our system includes coproduct types $ \dmGLnt{X_{{\mathrm{1}}}}  \boxplus  \dmGLnt{X_{{\mathrm{2}}}} $,
dependent function types $ (  \dmGLmv{x}  :^{ \dmGLnt{r} }  \dmGLnt{X}  )  \to   \dmGLnt{Y} $ and a dependent pair type $ ( \dmGLmv{x}  :^{ \dmGLnt{r} }  \dmGLnt{X} )  \boxtimes   \dmGLnt{Y} $.
We explain the roles of the grade annotations in the latter two below.
To form more complex linear types,
we have the linear function type $ \dmGLnt{A}  \multimap  \dmGLnt{B} $ and tensor product type $ \dmGLnt{A}  \otimes  \dmGLnt{B} $
at our disposal.
Lastly, we have the modal operators $ \mathcal{F} $ and $ \mathcal{G} $,
which mediate between the two fragments,
transforming linear types into graded ones and vice versa.
We give the complete type formation rules in \Cref{fig:typeFormation}.

\begin{figure}
  \begin{mdframed}
    \begin{mathpar}
      \pprules[G]{type,linear,unit,function,gradedPair,coproduct,linearFunction,tensor,ladj,radj}
    \end{mathpar}
  \end{mdframed}
  \caption{Rules for type formation}
  \label{fig:typeFormation}
\end{figure}

We now explain the typing rules of our type system in more detail.
The typing judgments for the graded and mixed fragment have the forms
\[
  \delta  \odot  \Delta  \vdash_\mathsf{G}  \dmGLnt{t}  \dmGLsym{:}  \dmGLnt{X}
  \quad \text{and} \quad
  \delta  \odot  \Delta  \dmGLsym{;}  \Gamma  \vdash_\mathsf{M}  \dmGLnt{l}  \dmGLsym{:}  \dmGLnt{A}
\]
respectively.
The annotations $ \ms G $ and $ \ms M $ on the turnstiles
indicate whether the judgment is in the graded or mixed fragment.
The rules enforce that the length of $ \delta $ and $ \Delta $
are equal in any provable judgment.
If $ \delta = \dmGLnt{r_{{\mathrm{1}}}} , \mathellipsis , \dmGLnt{r_{\dmGLmv{n}}} $
and $ \Delta = x_1 : X_1 , \mathellipsis , x_n : X_n $,
the above judgment forms indicate that variable $ x_i $ is used with grade
$ r_i $ in the construction of the term $ t $ (resp. $ l $).
Notice that both graded and linear types are themselves graded terms,
but the above judgment forms contain no information about the grades used
in the construction of the type $ X $ (resp. $ A $).

The rule \drulename[G]{weak} allows weakening, provided the newly added variable
is used with grade $ 0 $.
Similarly, for the variable rule \drulename[G]{var} we require that the variable in the conclusion of the rule
is used exactly with grade $ 1 $ and all other variables are used with grade $ 0 $.
Finally, we include a sub-usage rule \drulename[G]{subusage}
which asserts that we can make typing judgments with higher grades than necessary.
The graded unit type $ \mathbf{J} $ has one closed constructor $ \mathbf{j} $
and a term of unit type can be eliminated by pattern matching
The graded dependent pair type $ ( \dmGLmv{x}  :^{ \dmGLnt{r} }  \dmGLnt{X} )  \boxtimes   \dmGLnt{Y} $ comes with a grade annotation $ \dmGLnt{r} $.
This annotation means that to eliminate a term of the form $ \dmGLsym{(}  \dmGLnt{t_{{\mathrm{1}}}}  \dmGLsym{,}  \dmGLnt{t_{{\mathrm{2}}}}  \dmGLsym{)} $ of this type,
the grade at which the first component $ \dmGLnt{t_{{\mathrm{1}}}} $ of the pair is used
must be $ \dmGLnt{r} $ times the grade at which $ \dmGLnt{t_{{\mathrm{2}}}} $ is used.

The coproduct type $ \dmGLnt{X_{{\mathrm{1}}}}  \boxplus  \dmGLnt{X_{{\mathrm{2}}}} $ has the expected left and right injections
$ \sfoperator{inl} $ and $ \sfoperator{inr} $ as constructors
and its elimination form $ \sfoperator{case} _{ \dmGLnt{q} }  \dmGLnt{t}   \sfoperator{of}   \dmGLnt{s_{{\mathrm{1}}}}  ;  \dmGLnt{s_{{\mathrm{2}}}} $ works by case distinction.
The one caveat is that the functions $ \dmGLnt{s_{{\mathrm{1}}}} $ and $ \dmGLnt{s_{{\mathrm{2}}}} $,
which describe the two cases, must use their input with the same grade $ \dmGLnt{q} $,
which we include as annotation on the elimination form.

The dependent function type $ (  \dmGLmv{x}  :^{ \dmGLnt{r} }  \dmGLnt{X}  )  \to   \dmGLnt{Y} $ has a grade annotation
which indicates at which grade the variable $ \dmGLmv{x} $ of type $ \dmGLnt{X} $ must be used:
Introduction is done via lambda abstraction, with the constraint that the variable
that we are abstracting over must be used at grade $ \dmGLnt{r} $.
Similarly, if $ \dmGLnt{t} $ is of type $ (  \dmGLmv{x}  :^{ \dmGLnt{r} }  \dmGLnt{X}  )  \to   \dmGLnt{Y} $ and $ \dmGLnt{t'} $ of type $ \dmGLnt{X} $,
then the grades used to construct $ \dmGLnt{t'} $ are multiplied by $ \dmGLnt{r} $ when constructing
the application $ \dmGLnt{t} \, \dmGLnt{t'} $.

Since term judgments contain no information about the grades used in the type,
the grade annotations in the dependent function type $ (  \dmGLmv{x}  :^{ \dmGLnt{r} }  \dmGLnt{X}  )  \to   \dmGLnt{Y} $
and $ ( \dmGLmv{x}  :^{ \dmGLnt{r} }  \dmGLnt{X} )  \boxtimes   \dmGLnt{Y} $ do not need to be the same as the grade
with which $ x : X $ is used in the construction of $ Y $.
A similar remark holds for the left adjoint $ \mathcal{F}  ( \dmGLmv{x}  :^{ \dmGLnt{r} }  \dmGLnt{X} ). \dmGLnt{A} $ below.

\begin{figure}
  \begin{mdframed}
    \begin{mathpar}
      \pprules[G]{subusage,weak,var,convert,unitIntro,unitElim,gradedPairIntro,gradedPairElim,
        coproductInl,coproductInr,coproductElim,lambda,app,radjIntro}
    \end{mathpar}
  \end{mdframed}
  \caption{Graded system type assignment}
  \label{fig:GTyping}
\end{figure}

The mixed fragment behaves like a linear logic,
with an additional context of graded variables available.
When linear contexts are concatenated,
the grade vectors of the shared graded context are added.
The mixed fragment features a linear unit type $ \mathbf{I} $ with inhabitant $ \mathbf{i} $,
a linear function type $ \dmGLnt{A}  \multimap  \dmGLnt{B} $ and a pair type $ \dmGLnt{A}  \otimes  \dmGLnt{B} $.
The rules the mixed fragment are specified in \Cref{fig:MTyping} and \Cref{fig:MTyping2}.
Some rules in the mixed fragment feature concatenation of linear contexts.
In those cases we assume that variables are renamed to avoid name clashes.

\begin{figure}
  \begin{mdframed}
    \begin{mathpar}
      \pprules[M]{id,subusage,weak,exchange,convert,unitIntro,unitElim,lambda,app}
    \end{mathpar}
  \end{mdframed}
  \caption{Mixed system type assignment}
  \label{fig:MTyping}
\end{figure}

\begin{figure}
  \begin{mdframed}
    \begin{mathpar}
      \pprules[M]{tensorIntro,tensorElim,ladjIntro,ladjElim,radjElim}
    \end{mathpar}
  \end{mdframed}
  \caption{Mixed system type assignment continued}
  \label{fig:MTyping2}
\end{figure}

Finally, we discuss the modal operators $ \mathcal{F} $ and $ \mathcal{G} $.
The operator $ \mathcal{G} $ takes a linear type $ \dmGLnt{A} $ and produces
a graded type $ \mathcal{G} \, \dmGLnt{A} $.
It's function is analogous to the operators of Benton \cite{Benton1995}
and Krishnaswami et al. \cite{Krishnaswami2015}.
It transforms linear terms $ \dmGLnt{l} $ of type $ \dmGLnt{A} $
with no free linear variables into graded terms of type $ \mathcal{G} \, \dmGLnt{A} $.
Elimination of terms of type $ \mathcal{G} \, \dmGLnt{A} $ is handled by the operator $ \mathcal G^{-1} \, \dmGLnt{t} $ which
produces terms of type $ \dmGLnt{A} $ from terms of type $ \mathcal{G} \, \dmGLnt{A} $.
Here, we can see that linearity and grading line up, if $ \dmGLnt{l} $
is a linear term with free variable $ \dmGLmv{x} $ of linear type $ \dmGLnt{A} $,
there is a corresponding term $ \dmGLsym{[}  \mathcal G^{-1} \, \dmGLmv{y}  \dmGLsym{/}  \dmGLmv{x}  \dmGLsym{]}  \dmGLnt{l} $ with free variable $ \dmGLmv{y} $
of type $ \mathcal{G} \, \dmGLnt{A} $ and variable $ \dmGLmv{y} $ is graded $ 1 $,
see \Cref{prop:RadjLeftRule} below.
The operator $ \mathcal{F} $ is again similar to that of Kirshnaswami \cite{Krishnaswami2015}.
The type $ \mathcal{F}  ( \dmGLmv{x}  :^{ \dmGLnt{r} }  \dmGLnt{X} ). \dmGLnt{A} $ behaves like a dependent pair type
where the first component belongs to graded type $ \dmGLnt{X} $
and the second component belongs to linear type $ \dmGLnt{A} $.
Elimination from this type is done by pattern matching
$ \sfoperator{let} \, \mathcal{F} \, \dmGLsym{(}  \dmGLmv{x}  \dmGLsym{,}  \dmGLmv{y}  \dmGLsym{)}  \dmGLsym{=}  \dmGLnt{l_{{\mathrm{1}}}} \, \sfoperator{in} \, \dmGLnt{l_{{\mathrm{2}}}} $ and the grade annotation $ \dmGLnt{r} $
in the type forces the first component to be used with grade $ \dmGLnt{r} $
for the eliminator to be invoked.

\Cref{fig:cbn} contains the reduction rules (full $\beta$-reduction) for \dmGL{}.
Since reduction rules only see the syntactic form of terms, the grades are not involved at all,
and hence the reduction rules are as one would expect.
The application of a lambda abstraction to some term reduces by substitution,
and pattern matching expressions reduce when the term being matched on
is exactly of the form of the pattern, in which case the reduction is also by substitution.
Finally, the case expression for the coproduct eliminator
reduces if the scrutinee was constructed by one of the injections,
in which case the reduction works by applying the respective function.
We write $ \equiv $ for the congruence closure of $ \leadsto $,
on graded terms.
In other words, $ \equiv $ is the smallest equivalence relation
on graded terms that contains $ \leadsto $ and such that
$ \dmGLnt{t_{{\mathrm{1}}}} \equiv \dmGLnt{t_{{\mathrm{2}}}} $ implies $ \dmGLsym{[}  \dmGLnt{t_{{\mathrm{1}}}}  \dmGLsym{/}  \dmGLmv{x}  \dmGLsym{]}  \dmGLnt{t} \equiv \dmGLsym{[}  \dmGLnt{t_{{\mathrm{2}}}}  \dmGLsym{/}  \dmGLmv{x}  \dmGLsym{]}  \dmGLnt{t} $
for all terms $ t, t_1, t_2 $ and likewise with linear terms.
Similarly, we write $ \equiv $ for the smallest equivalence relation on
linear terms which contains $ \leadsto $ and satisfies the implications
$ t_1 \equiv t_2 \implies \dmGLsym{[}  \dmGLnt{t_{{\mathrm{1}}}}  \dmGLsym{/}  \dmGLmv{x}  \dmGLsym{]}  \dmGLnt{l} \equiv \dmGLsym{[}  \dmGLnt{t_{{\mathrm{2}}}}  \dmGLsym{/}  \dmGLmv{x}  \dmGLsym{]}  \dmGLnt{l} $
and
$ l_1 \equiv l_2 \implies \dmGLsym{[}  \dmGLnt{l_{{\mathrm{1}}}}  \dmGLsym{/}  \dmGLmv{x}  \dmGLsym{]}  \dmGLnt{l} \equiv \dmGLsym{[}  \dmGLnt{l_{{\mathrm{2}}}}  \dmGLsym{/}  \dmGLmv{x}  \dmGLsym{]}  \dmGLnt{l} $
for all $ t_1, t_2, l_1, l_2 $ and $ l $.
We will only use the relation $ \equiv $ in the type conversion rules, which are standard.

\begin{figure}
  \begin{mdframed}
    \begin{mathpar}
      \pprules[gRED]{unitBeta,pairBeta,coproductBetaLeft,coproductBetaRight,lambda,appL}
      \pprules[mRED]{unitBeta,tensorBeta,ladjBeta,radjBeta,lambda,appL}
    \end{mathpar}
  \end{mdframed}
  \caption{Reduction Rules}
  \label{fig:cbn}
\end{figure}

\subsection*{Metatheory}
\label{sec:metatheory}

We now turn our attention to the metatheory of \dmGL{}.  First we
state some well-formedness conditions.  Then we show how grading
interacts with substitution and linearity.  Finally, we show that the
full $\beta$-reduction rules preserve typing and grading.
Proofs of theorems stated in this section can be found in \Cref{app-sec:dmgl}.

In provable typing judgments,
the context in which the typing occurs is well-formed.
Furthermore, terms always have well-formed types.

\begin{restatable}{prop}{ContextWellFormedness}
  \label{prop:ctxWellFormed}
  The following hold by mutual induction:
  \begin{enumerate}[label=\roman*)]
    \item 
      If $ \delta  \odot  \Delta  \vdash_\mathsf{G}  \dmGLnt{t}  \dmGLsym{:}  \dmGLnt{X} $, then $ \delta  \odot  \Delta  \vdash_\mathsf{G} \, \mathsf{ctx} $.

    \item
      If $ \delta  \odot  \Delta  \dmGLsym{;}  \Gamma  \vdash_\mathsf{M}  \dmGLnt{l}  \dmGLsym{:}  \dmGLnt{A} $, then $ \delta  \odot  \Delta  \vdash_\mathsf{G} \, \mathsf{ctx} $
      and $ \delta  \odot  \Delta  \dmGLsym{;}  \Gamma  \vdash_\mathsf{M} \, \mathsf{ctx} $.
  \end{enumerate}
\end{restatable}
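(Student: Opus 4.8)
The plan is to proceed by simultaneous induction on the derivations of the two typing judgments $\delta \odot \Delta \vdash_\mathsf{G} t : X$ and $\delta \odot \Delta ; \Gamma \vdash_\mathsf{M} l : A$, proving (i) and (ii) in tandem. The induction must be \emph{mutual} because the two fragments refer to one another: the rule \drulename[G]{radjIntro} has a graded conclusion but a mixed premise (a linear term in an empty linear context), so proving (i) for that rule calls on the induction hypothesis for (ii); conversely, the mixed rules whose premises include a graded typing judgment—for instance the one typing the graded component of an $\mathcal{F}$-pair—call on (i). In every case the task is merely to reassemble the appropriate $\mathsf{ctx}$ judgment(s) for the context(s) of the conclusion out of the ones that the induction hypothesis supplies for the premises.

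For most rules the graded context $\Delta$, and in the mixed fragment also the linear context $\Gamma$, is carried over unchanged from a premise, so the induction hypothesis delivers the needed well-formedness almost directly. The one recurring wrinkle is that the grade vector on the conclusion is frequently a combination of the premise grade vectors—$\delta_1 + r \cdot \delta_2$ in \drulename[G]{app}, a pointwise sum in the tensor and $\mathcal{F}$ rules, a larger vector in \drulename[G]{subusage}—and so does not literally agree with the vector for which the induction hypothesis produces a $\mathsf{ctx}$ derivation. Here I would invoke the grade-irrelevance remark stated after the definition of contexts: because the extension rules attach an arbitrary grade, well-formedness of $\delta \odot \Delta$ depends only on $\Delta$ and not on $\delta$ among vectors of length $\len{\Delta}$ (and likewise for mixed contexts), so the grade vector may be freely swapped for the conclusion's. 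The base cases—\drulename[G]{var}, \drulename[G]{weak}, \drulename[M]{id}, and the unit rules—either carry the relevant $\mathsf{ctx}$ judgment as an explicit premise or conclude over the empty context, and are discharged at once.

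The substantive cases are the binders, which extend a context inside a premise: lambda abstraction and the dependent-function and dependent-pair eliminators extend $\Delta$ by $x :^{r} X$, while the linear lambda and the eliminators for $\otimes$ and $\mathcal{F}$ extend $\Gamma$ (and, for $\mathcal{F}$-elimination, $\Delta$ as well). For these the induction hypothesis yields well-formedness of the extended context; since a non-empty context can only have been formed by \drulename[gradedCtx]{extend} or \drulename[mixedCtx]{extend}, inverting the last rule applied returns the shortened $\mathsf{ctx}$ judgment—exactly what the conclusion needs, up to the grade-vector adjustment above. The graded part of (ii) is usually immediate, since the graded context is unchanged across the premise, whereas the mixed part requires stripping the freshly bound linear variable by this inversion. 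One extra case, \drulename[M]{exchange}, needs that mixed well-formedness is stable under permutations of $\Gamma$; this holds because linear types depend only on the graded context and never on other linear variables, so reordering $\Gamma$ cannot disturb well-formedness.

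The principal obstacle I anticipate is organizational rather than conceptual: keeping the two mutually recursive statements in step across the cross-fragment rules (\drulename[G]{radjIntro}, \drulename[M]{ladjIntro}, \drulename[M]{ladjElim}, \drulename[M]{radjElim}), and uniformly discharging the grade-vector mismatches through the grade-irrelevance remark so that a $\mathsf{ctx}$ derivation extracted from a premise can be reused with the conclusion's grade vector. Beyond inversion of the two context-extension rules and this grade-vector observation, no case should demand genuinely new machinery.
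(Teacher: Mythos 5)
Your proposal is correct and follows essentially the same route as the paper: a mutual induction on the two typing derivations, with the grade-vector mismatches discharged by the observation that context well-formedness is independent of the grade vector (the paper isolates this as \Cref{lem:ctx-change-vector}) and the mixed-context cases handled by reasoning about which types occur in $\Gamma$ (the paper's \Cref{lem:mixed-ctx-formation}, where you instead invert the extension rules — an equivalent, equally routine step). No gap.
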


\begin{restatable}{prop}{TypesWellFormed}
  \label{cor:TypesWellFormed}
  The following hold by mutual induction:
  \begin{enumerate}[label=\roman*)]
    \item 
      If $ \delta  \odot  \Delta  \vdash_\mathsf{G}  \dmGLnt{t}  \dmGLsym{:}  \dmGLnt{X} $, then $ \delta'  \odot  \Delta  \vdash_\mathsf{G}  \dmGLnt{X}  \dmGLsym{:}   \mathsf{Type} $
      for some grade vector $ \delta' $.

    \item 
      If $ \delta  \odot  \Delta  \dmGLsym{;}  \Gamma  \vdash_\mathsf{M}  \dmGLnt{l}  \dmGLsym{:}  \dmGLnt{A} $,
      then $ \delta'  \odot  \Delta  \vdash_\mathsf{G}  \dmGLnt{A}  \dmGLsym{:}   \mathsf{Linear} $ for some grade vector $ \delta' $.
  \end{enumerate}
\end{restatable}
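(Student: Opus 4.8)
The plan is to prove (i) and (ii) simultaneously by induction on the two typing derivations, proceeding rule by rule and using \Cref{prop:ctxWellFormed} to know that the ambient graded context $\Delta$ is always well-formed. A recurring convenience is the existential quantifier on $\delta'$: the statement asks only for \emph{some} grade vector witnessing well-formedness of the type, so I never have to track the grades used inside a type. This gives real slack whenever grades get rearranged, which happens in the structural and elimination cases.

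The formation rules and structural rules are the easy cases. When $\dmGLnt{t}$ is itself a type (the rules \drulename[G]{type}, \drulename[G]{linear}, \drulename[G]{unit}, \drulename[G]{function}, and so on) the type to be shown well-formed is a universe, so it suffices to derive $\delta' \odot \Delta \vdash_\mathsf{G} \mathsf{Type} : \mathsf{Type}$ or $\delta' \odot \Delta \vdash_\mathsf{G} \mathsf{Linear} : \mathsf{Type}$ over the (well-formed) context $\Delta$. For \drulename[G]{subusage} the type is unchanged and the existential on $\delta'$ absorbs the change of grades; for \drulename[G]{convert} the replacement type is already well-formed by a premise of the rule; and for \drulename[G]{weak} I weaken the well-formedness derivation supplied by the induction hypothesis by the single fresh variable, which is just another application of \drulename[G]{weak} (at grade $0$) using the well-formedness of the added type.

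The variable rule and the introduction rules combine the induction hypotheses with context well-formedness. For \drulename[G]{var} the type is read off from $\Delta$; from $\delta \odot \Delta \vdash_\mathsf{G} \mathsf{ctx}$ I invert the context-extension rule to get well-formedness of that type over the prefix of $\Delta$ at which it was introduced, and then weaken back along the remaining variables. For an introduction rule the conclusion type is assembled from its components, so I apply the induction hypotheses to the premises and re-apply the matching formation rule: for \drulename[G]{lambda}, for example, the conclusion type $ (  \dmGLmv{x}  :^{ \dmGLnt{r} }  \dmGLnt{X}  )  \to   \dmGLnt{Y}$ is well-formed because the induction hypothesis on the body gives $\dmGLnt{Y}$ well-formed over $\Delta, \dmGLmv{x} : \dmGLnt{X}$ while context well-formedness gives $\dmGLnt{X}$ well-formed over $\Delta$, and the function formation rule places no constraint on the annotation $\dmGLnt{r}$. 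The pair, injection, unit, tensor, and modal introductions are entirely analogous, and the linear introductions go through because linear types never depend on linear variables.

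The main obstacle is the dependent elimination rules, chiefly \drulename[G]{app} and \drulename[G]{gradedPairElim} (together with the graded dependency hidden inside the modal eliminators \drulename[M]{ladjElim} and \drulename[M]{radjElim}). In each, the type in the conclusion is a \emph{substitution instance} of a subcomponent --- for instance $ \dmGLnt{t} \, \dmGLnt{t'} $ receives the type $ \dmGLsym{[}  \dmGLnt{t'}  \dmGLsym{/}  \dmGLmv{x}  \dmGLsym{]}  \dmGLnt{Y} $ --- so establishing well-formedness of the conclusion requires that well-formedness of types be stable under the relevant graded substitution. Since the full substitution theorem is not yet in hand at this point in the development, I would discharge these cases by strengthening the statement carried through the induction to include exactly the needed substitution principle for well-formedness judgments, namely that substituting a well-typed term for a graded variable preserves well-formedness of the types in the tail of the context, and prove it in the same mutual induction. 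Here again the existential on $\delta'$ is what makes the bookkeeping tractable, since it spares me from matching the grade vector produced by the substitution against the grades in the eliminator's conclusion.
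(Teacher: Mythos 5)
A preliminary caveat: although the paper states that proofs for this section appear in \Cref{app-sec:dmgl}, the appendix never restates or proves \Cref{cor:TypesWellFormed}, so there is no official proof to compare against. Judged on its own terms, your plan is sound: induction on the typing derivations, with the formation, structural, variable and introduction cases handled essentially as you describe (in fact \drulename[G]{var}, \drulename[G]{app} and \drulename[M]{ladjElim} carry well-formedness of the relevant type as an explicit premise, so those cases are even more immediate than your sketch via context inversion suggests). You have also correctly isolated the real content of the proposition: the dependent eliminators (\drulename[G]{app}, \drulename[G]{gradedPairElim}, \drulename[G]{unitElim}, \drulename[G]{coproductElim}) assign a type of the form $\dmGLsym{[}  \dmGLnt{t'}  \dmGLsym{/}  \dmGLmv{x}  \dmGLsym{]}  \dmGLnt{Y}$, so the proof needs stability of type well-formedness under graded substitution.

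Where I would push back is on your remedy for that obstacle. You propose to strengthen the mutual induction to carry a substitution principle alongside the well-formedness claim, on the grounds that \Cref{thm:substitution} is ``not yet in hand.'' But the ordering of statements in the paper is presentational, not logical: the appendix proof of \Cref{thm:substitution} proceeds by induction on the derivation being substituted into and nowhere appeals to \Cref{cor:TypesWellFormed} (its only side conditions are free-variable observations), so there is no circularity in proving substitution first and then citing clause (iii) with $\Delta' = \emptyset$ in the \drulename[G]{app} case and its analogues. Folding a substitution argument into this induction would in effect mean re-proving the substitution theorem inside a proof organized around a different induction measure --- substitution recurses on the derivation of the judgment being substituted into, which here is a premise, not the derivation you are inducting over --- and that is considerably more delicate than you let on. One further small gap: for \drulename[M]{radjElim} the conclusion type $\dmGLnt{A}$ is neither a substitution instance nor a premise; you obtain $\delta'  \odot  \Delta  \vdash_\mathsf{G}  \dmGLnt{A}  \dmGLsym{:}   \mathsf{Linear}$ from well-formedness of $\mathcal{G} \, \dmGLnt{A}$ only by inverting the type-formation judgment through possible uses of \drulename[G]{weak}, \drulename[G]{subusage} and \drulename[G]{convert}, i.e.\ an inversion lemma in the style of \Cref{lem:lambdaInversion} that your sketch does not mention.
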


Next, we consider substitution.
Since we have a graded and a linear fragment
we need to state substitution for both fragments,
additionally, in the linear fragment substitution is split further into
cases where a variable is replaced by a graded or a linear term.
Since our system has dependent typing,
when a graded term is substituted for a variable we also need to 
substitute it in part of the context.
We therefore make the following definition:

\begin{defn}
  Let $ \Delta $ be a graded context, $ \Gamma $ a linear context,
  $ \dmGLmv{x} $ a term variable
  and suppose $ \dmGLmv{x} \, \notin \, \sfoperator{dom} \, \Delta $ and $ \dmGLmv{x} \, \notin \, \sfoperator{dom} \, \Gamma $.
  We define $ \dmGLsym{[}  \dmGLnt{t}  \dmGLsym{/}  \dmGLmv{x}  \dmGLsym{]}  \Delta $ and $ \dmGLsym{[}  \dmGLnt{t}  \dmGLsym{/}  \dmGLmv{x}  \dmGLsym{]}  \Gamma $ as follows:
  \begin{mathpar}
    \dmGLsym{[}  \dmGLnt{t}  \dmGLsym{/}  \dmGLmv{x}  \dmGLsym{]}  \emptyset = \emptyset
    \and
    \dmGLsym{[}  \dmGLnt{t}  \dmGLsym{/}  \dmGLmv{x}  \dmGLsym{]}  \dmGLsym{(}  \Delta  \dmGLsym{,}  \dmGLmv{y}  \dmGLsym{:}  \dmGLnt{Y}  \dmGLsym{)} = \dmGLsym{[}  \dmGLnt{t}  \dmGLsym{/}  \dmGLmv{x}  \dmGLsym{]}  \Delta  \dmGLsym{,}  \dmGLmv{y}  \dmGLsym{:}  \dmGLsym{[}  \dmGLnt{t}  \dmGLsym{/}  \dmGLmv{x}  \dmGLsym{]}  \dmGLnt{Y}
    \and
    \dmGLsym{[}  \dmGLnt{t}  \dmGLsym{/}  \dmGLmv{x}  \dmGLsym{]}  \dmGLsym{(}  \Gamma  \dmGLsym{,}  \dmGLmv{y}  \dmGLsym{:}  \dmGLnt{A}  \dmGLsym{)} = \dmGLsym{[}  \dmGLnt{t}  \dmGLsym{/}  \dmGLmv{x}  \dmGLsym{]}  \Gamma  \dmGLsym{,}  \dmGLmv{y}  \dmGLsym{:}  \dmGLsym{[}  \dmGLnt{t}  \dmGLsym{/}  \dmGLmv{x}  \dmGLsym{]}  \dmGLnt{A}
  \end{mathpar}
\end{defn}

We use an additional notational convention:
\begin{conv}
  \label{conv:list-match}
  We assume that lengths of grade vectors and corresponding contexts match in judgments.
  For example, when we write
  $ \delta ,  \dmGLnt{r}  ,  \delta'   \odot  \Delta  \dmGLsym{,}  \dmGLmv{x}  \dmGLsym{:}  \dmGLnt{X}  \dmGLsym{,}  \Delta'  \vdash_\mathsf{G}  \dmGLnt{t}  \dmGLsym{:}  \dmGLnt{Y} $,
  we assume that $ \delta $ has the same length as $ \Delta $
  and similarly for $ \delta' $ and $ \Delta' $.
\end{conv}

We can now state the substitution theorem.
Parallel composition is modeled by addition in the semiring,
while sequential composition is modeled by multiplication.

\begin{restatable}[Substitution]{theorem}{substitutionTheorem}
  \label{thm:substitution}
  The following hold by mutual induction:
  \begin{enumerate}[label=\roman*)]
    \item (Graded Contexts)
      If $ \delta_{{\mathrm{0}}}  \odot  \Delta  \vdash_\mathsf{G}  \dmGLnt{t_{{\mathrm{0}}}}  \dmGLsym{:}  \dmGLnt{X} $,
      and $ \delta ,  \dmGLnt{r}  ,  \delta'   \odot  \Delta  \dmGLsym{,}  \dmGLmv{x}  \dmGLsym{:}  \dmGLnt{X}  \dmGLsym{,}  \Delta'  \vdash_\mathsf{G} \, \mathsf{ctx} $
      then
      \[
        \delta  +   \dmGLnt{r}   \cdot   \delta_{{\mathrm{0}}}  ,  \delta'   \odot  \Delta  \dmGLsym{,}  \dmGLsym{[}  \dmGLnt{t_{{\mathrm{0}}}}  \dmGLsym{/}  \dmGLmv{x}  \dmGLsym{]}  \Delta'  \vdash_\mathsf{G} \, \mathsf{ctx}
      \]

    \item (Mixed Contexts)
      If $ \delta_{{\mathrm{0}}}  \odot  \Delta  \vdash_\mathsf{G}  \dmGLnt{t_{{\mathrm{0}}}}  \dmGLsym{:}  \dmGLnt{X} $,
      and $ \delta ,  \dmGLnt{r}  ,  \delta'   \odot  \Delta  \dmGLsym{,}  \dmGLmv{x}  \dmGLsym{:}  \dmGLnt{X}  \dmGLsym{,}  \Delta'  \dmGLsym{;}  \Gamma  \vdash_\mathsf{M} \, \mathsf{ctx} $
      then
      \[
        \delta  +   \dmGLnt{r}   \cdot   \delta_{{\mathrm{0}}}  ,  \delta'   \odot  \Delta  \dmGLsym{,}  \dmGLsym{[}  \dmGLnt{t_{{\mathrm{0}}}}  \dmGLsym{/}  \dmGLmv{x}  \dmGLsym{]}  \Delta'  \dmGLsym{;}  \dmGLsym{[}  \dmGLnt{t_{{\mathrm{0}}}}  \dmGLsym{/}  \dmGLmv{x}  \dmGLsym{]}  \Gamma  \vdash_\mathsf{M} \, \mathsf{ctx}
      \]

    \item (Graded)
      If $ \delta_{{\mathrm{0}}}  \odot  \Delta  \vdash_\mathsf{G}  \dmGLnt{t_{{\mathrm{0}}}}  \dmGLsym{:}  \dmGLnt{X} $
      and $ \delta ,  \dmGLnt{r}  ,  \delta'   \odot  \Delta  \dmGLsym{,}  \dmGLmv{x}  \dmGLsym{:}  \dmGLnt{X}  \dmGLsym{,}  \Delta'  \vdash_\mathsf{G}  \dmGLnt{t}  \dmGLsym{:}  \dmGLnt{Y} $,
      then
      \[
        \delta  +  \dmGLnt{r}  \cdot  \delta_{{\mathrm{0}}} ,  \delta'   \odot  \Delta  \dmGLsym{,}  \dmGLsym{[}  \dmGLnt{t_{{\mathrm{0}}}}  \dmGLsym{/}  \dmGLmv{x}  \dmGLsym{]}  \Delta'  \vdash_\mathsf{G}  \dmGLsym{[}  \dmGLnt{t_{{\mathrm{0}}}}  \dmGLsym{/}  \dmGLmv{x}  \dmGLsym{]}  \dmGLnt{t}  \dmGLsym{:}  \dmGLsym{[}  \dmGLnt{t_{{\mathrm{0}}}}  \dmGLsym{/}  \dmGLmv{x}  \dmGLsym{]}  \dmGLnt{Y}
      \]

    \item (Mixed Graded)
      If $ \delta_{{\mathrm{0}}}  \odot  \Delta  \vdash_\mathsf{G}  \dmGLnt{t_{{\mathrm{0}}}}  \dmGLsym{:}  \dmGLnt{X} $
      and $ \delta ,  \dmGLnt{r}  ,  \delta'   \odot  \Delta  \dmGLsym{,}  \dmGLmv{x}  \dmGLsym{:}  \dmGLnt{X}  \dmGLsym{,}  \Delta'  \dmGLsym{;}  \Gamma  \vdash_\mathsf{M}  \dmGLnt{l}  \dmGLsym{:}  \dmGLnt{A} $,
      then
      \[
        \delta  +  \dmGLnt{r}  \cdot  \delta_{{\mathrm{0}}}  \odot  \Delta  \dmGLsym{,}  \dmGLsym{[}  \dmGLnt{t_{{\mathrm{0}}}}  \dmGLsym{/}  \dmGLmv{x}  \dmGLsym{]}  \Delta'  \dmGLsym{;}  \dmGLsym{[}  \dmGLnt{t_{{\mathrm{0}}}}  \dmGLsym{/}  \dmGLmv{x}  \dmGLsym{]}  \Gamma  \vdash_\mathsf{M}  \dmGLsym{[}  \dmGLnt{t_{{\mathrm{0}}}}  \dmGLsym{/}  \dmGLmv{x}  \dmGLsym{]}  \dmGLnt{l}  \dmGLsym{:}  \dmGLsym{[}  \dmGLnt{t_{{\mathrm{0}}}}  \dmGLsym{/}  \dmGLmv{x}  \dmGLsym{]}  \dmGLnt{A}
      \]

    \item
      (Mixed Linear)
      If $ \delta_{{\mathrm{0}}}  \odot  \Delta  \dmGLsym{;}  \Gamma_{{\mathrm{0}}}  \vdash_\mathsf{M}  \dmGLnt{l_{{\mathrm{0}}}}  \dmGLsym{:}  \dmGLnt{A} $
      and $ \delta  \odot  \Delta  \dmGLsym{;}   \Gamma  \dmGLsym{,}  \dmGLmv{x}  \dmGLsym{:}  \dmGLnt{A}  ,  \Gamma'   \vdash_\mathsf{M}  \dmGLnt{l}  \dmGLsym{:}  \dmGLnt{B} $,
      then
      \[
        \delta  +  \delta_{{\mathrm{0}}}  \odot  \Delta  \dmGLsym{;}    \Gamma  ,  \Gamma_{{\mathrm{0}}}   ,  \Gamma'   \vdash_\mathsf{M}  \dmGLsym{[}  \dmGLnt{l_{{\mathrm{0}}}}  \dmGLsym{/}  \dmGLmv{x}  \dmGLsym{]}  \dmGLnt{l}  \dmGLsym{:}  \dmGLnt{B}
      \]
  \end{enumerate}
\end{restatable}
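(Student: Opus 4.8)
The plan is to prove all five statements by a single simultaneous induction. For the two context well-formedness claims (i) and (ii) I would induct on the structure of the context $\Delta'$ (equivalently, on the derivation of the context-formation judgment), while for the three typing claims (iii), (iv), and (v) I would induct on the derivation of the second hypothesis --- the judgment whose context contains the variable being eliminated. The grade bookkeeping is dictated by the slogan stated just before the theorem: substituting a graded term $t_0$ that consumes resources $\delta_0$ for a variable used at grade $r$ must contribute $r \cdot \delta_0$ to the resulting grade vector (sequential composition is multiplication), whereas the linear substitution in (v) contributes $\delta_0$ additively with no scaling, reflecting that a linear hypothesis is consumed exactly once.

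First I would dispatch the leaf rules of the graded fragment. In the variable rule there are three subcases. When the referenced variable is $x$ itself we have $\delta = \vec{0}$, $r = 1$, $\delta' = \vec{0}$, and $[t_0/x]X = X$ because $X$ is well-formed over $\Delta$ and does not mention $x$; the goal $\delta_0 , \vec{0} \odot \Delta , [t_0/x]\Delta' \vdash_\mathsf{G} t_0 : X$ then follows by iterating the built-in weakening rule to pad $[t_0/x]\Delta'$ with grade $0$, whose well-formedness is supplied by the inductive appeal to part (i). When the referenced variable lies in $\Delta$ (before $x$) we have $r = 0$, so $r \cdot \delta_0 = \vec{0}$ and the grade vector is unchanged; when it lies in $\Delta'$ (after $x$) the substitution leaves the term a variable but acts on its type, yielding $[t_0/x]Y$. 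The subusage rule is where I expect to need the semiring hypotheses: from $\delta \leq \delta''$ together with the induction hypothesis on the $\delta''$-derivation I must recover the substituted inequality, which holds precisely because $+$ and $\cdot$ are monotone in both arguments. The weakening rule is routine.

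Next come the congruence cases. For the binding constructors --- lambda abstraction, the dependent function and pair types, and the coproduct eliminator --- I would push the substitution under the binder, relying on the syntactic commutation lemma $[t_0/x][s/y]t = [[t_0/x]s/y]\,[t_0/x]t$ to align the substituted result type with the conclusion of the corresponding introduction or elimination rule, and combining the grade vectors returned by the induction hypotheses using distributivity of $\cdot$ over $+$. The conversion rule is handled by noting that $\equiv$ is closed under substitution by construction, so a conversion between $Y$ and $Y'$ transports to one between $[t_0/x]Y$ and $[t_0/x]Y'$. The fragment-crossing rules for the modalities $\mathcal{F}$, $\mathcal{G}$, and $\mathcal{G}^{-1}$ are the genuinely delicate congruence cases: establishing (iv) for them forces appeals to (iii) and vice versa, and one must check that the grade restriction built into $\mathcal{G}$'s introduction --- a linear term with no free linear variables --- survives substitution.

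Finally, part (v) is a linear substitution, proved by induction on the derivation of $l$ with the shared graded context $\Delta$ held fixed. The crux here, and what I expect to be the main obstacle overall, is the family of rules that split the linear context (tensor and linear-function introduction and elimination, together with the mixed modal rules): in each such rule I must determine which sub-derivation actually contains $x : A$, route the substitution of $l_0$ into that branch via the induction hypothesis, and then re-split the combined linear context $\Gamma , \Gamma_0 , \Gamma'$ so that the grade vectors of the shared graded hypotheses add correctly. Coordinating this context surgery with the dependent grade arithmetic of parts (i)--(iv), while ensuring every appeal to an induction hypothesis is made on a structurally smaller derivation, is the principal technical difficulty; the individual grade computations, by contrast, are routine applications of the semiring laws.
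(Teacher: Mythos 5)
Your proposal matches the paper's proof in all essentials: a simultaneous induction on the derivation of the second (context-containing) judgment, with the subusage case discharged by monotonicity of $+$ and $\cdot$, the binder and conversion cases by the substitution-commutation lemma and congruence of $\equiv$, and the linear-context-splitting rules of part (v) handled by locating $x : A$ in the appropriate branch and re-splitting $\Gamma, \Gamma_0, \Gamma'$. The only cosmetic difference is your three-way split in the variable case; since the paper's variable rule only ever references the last variable of the context, only two subcases actually arise there, and the weakening you invoke to pad $[t_0/x]\Delta'$ is unnecessary in the subcase where the referenced variable is $x$ itself.
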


\begin{cor}[Contraction]
  \dmGL{} admits graded contraction rules:
  \begin{enumerate}[label=\roman*)]
  \item
    If $ \delta ,  \dmGLnt{p}  ,  \dmGLnt{q}  ,  \delta'   \odot  \Delta  \dmGLsym{,}  \dmGLmv{x}  \dmGLsym{:}  \dmGLnt{X}  \dmGLsym{,}  \dmGLmv{y}  \dmGLsym{:}  \dmGLnt{X}  \dmGLsym{,}  \Delta'  \vdash_\mathsf{G}  \dmGLnt{t}  \dmGLsym{:}  \dmGLnt{Y} $,
    then
    $ \delta ,  \dmGLnt{p}  +  \dmGLnt{q}  ,  \delta'   \odot  \Delta  \dmGLsym{,}  \dmGLmv{x}  \dmGLsym{:}  \dmGLnt{X}  \dmGLsym{,}  \dmGLsym{[}  \dmGLmv{x}  \dmGLsym{/}  \dmGLmv{y}  \dmGLsym{]}  \Delta'  \vdash_\mathsf{G}  \dmGLsym{[}  \dmGLmv{x}  \dmGLsym{/}  \dmGLmv{y}  \dmGLsym{]}  \dmGLnt{t}  \dmGLsym{:}  \dmGLsym{[}  \dmGLmv{x}  \dmGLsym{/}  \dmGLmv{y}  \dmGLsym{]}  \dmGLnt{Y} $.

  \item
    If $ \delta ,  \dmGLnt{p}  ,  \dmGLnt{q}  ,  \delta'   \odot  \Delta  \dmGLsym{,}  \dmGLmv{x}  \dmGLsym{:}  \dmGLnt{X}  \dmGLsym{,}  \dmGLmv{y}  \dmGLsym{:}  \dmGLnt{X}  \dmGLsym{,}  \Delta'  \dmGLsym{;}  \Gamma  \vdash_\mathsf{M}  \dmGLnt{l}  \dmGLsym{:}  \dmGLnt{A} $,
    then
    $ \delta ,  \dmGLnt{p}  +  \dmGLnt{q}  ,  \delta'   \odot  \Delta  \dmGLsym{,}  \dmGLmv{x}  \dmGLsym{:}  \dmGLnt{X}  \dmGLsym{,}  \dmGLsym{[}  \dmGLmv{x}  \dmGLsym{/}  \dmGLmv{y}  \dmGLsym{]}  \Delta'  \dmGLsym{;}  \Gamma  \vdash_\mathsf{M}  \dmGLsym{[}  \dmGLmv{x}  \dmGLsym{/}  \dmGLmv{y}  \dmGLsym{]}  \dmGLnt{l}  \dmGLsym{:}  \dmGLsym{[}  \dmGLmv{x}  \dmGLsym{/}  \dmGLmv{y}  \dmGLsym{]}  \dmGLnt{A} $.
  \end{enumerate}
\end{cor}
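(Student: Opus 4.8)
The plan is to obtain both contraction rules as direct consequences of the Substitution theorem (\Cref{thm:substitution}), taking the substituted term to be a \emph{variable}: contracting $y$ into $x$ is exactly the substitution $[x/y]$. For part (i) I would first type $x$ over the prefix context $\Delta, x:X$ using the variable rule \drulename[G]{var}, obtaining $\vec{0}, 1 \odot \Delta, x:X \vdash_\mathsf{G} x : X$ with $\vec{0}$ of the length of $\Delta$; the well-formedness this rule requires follows from \Cref{prop:ctxWellFormed} applied to the hypothesis, together with the fact that any prefix of a well-formed context is well-formed. I would then instantiate the graded clause \Cref{thm:substitution}(iii) with ambient context $\Delta, x:X$, substituted variable $y$, substituted term $t_0 = x$ of grade vector $\delta_0 = \vec{0}, 1$, and grade $r = q$; its hypothesis is exactly the given judgment for $t$.

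The resulting conclusion carries the grade vector $(\delta, p) + q \cdot (\vec{0}, 1), \delta'$ on the context $\Delta, x:X, [x/y]\Delta'$, so the only computation left is the semiring arithmetic on grade vectors. Using the absorbing and identity laws $q \cdot 0 = 0$ and $q \cdot 1 = q$ we get $q \cdot (\vec{0}, 1) = \vec{0}, q$, whence $(\delta, p) + (\vec{0}, q) = \delta, p + q$, which is precisely the claimed judgment. Part (ii) runs identically, except that, since $x$ and $y$ are graded while $\Gamma$ is linear, I would apply the Mixed Graded clause \Cref{thm:substitution}(iv) instead; the same grade calculation yields $p + q$.

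I expect no genuine obstacle, since the Substitution theorem already does the real work: contraction is a corollary precisely because merging two hypotheses of the same type is the degenerate case of substituting a variable for a variable. The care needed is bookkeeping, namely splitting the context at $y$ so that $x$ is typed over the prefix $\Delta, x:X$ rather than over the whole context, and verifying the grade arithmetic so that $p$ and $q$ coalesce into $p + q$. One further point to flag in part (ii) is that \Cref{thm:substitution}(iv) also substitutes into the linear context, producing $[x/y]\Gamma$; this agrees with the $\Gamma$ written in the statement when $\Gamma$ does not mention $y$, and equals $[x/y]\Gamma$ in general.
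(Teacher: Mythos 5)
Your proposal is correct and is essentially the paper's own proof: the paper likewise applies \Cref{thm:substitution} to the judgment $\vec{0}, 1 \odot \Delta, x : X \vdash_\mathsf{G} x : X$ obtained from the variable rule, and you have merely spelled out the grade arithmetic $(\delta,p) + q\cdot(\vec{0},1) = \delta, p+q$ that the paper leaves implicit. Your remark that clause (iv) of substitution formally yields $[x/y]\Gamma$ in the linear context is a fair piece of extra care that the paper glosses over.
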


\begin{proof}
  Apply \Cref{thm:substitution} to the judgment
  $ \vec{0} ,   1    \odot  \Delta  \dmGLsym{,}  \dmGLmv{x}  \dmGLsym{:}  \dmGLnt{X}  \vdash_\mathsf{G}  \dmGLmv{x}  \dmGLsym{:}  \dmGLnt{X} $
  obtained from the variable rule.
\end{proof}

We now show how the modal operator $ \mathcal{G} $ interacts
with linearity and grading.
The following proposition essentially says that
we can embed the linear fragment into the graded one,
by making the grades on all variables from the linear fragment equal to $ 1 $.

\begin{restatable}{prop}{RadjLeftRule}
  \label{prop:RadjLeftRule}
  If $ \delta  \odot  \Delta  \dmGLsym{;}  \Gamma  \dmGLsym{,}  \dmGLmv{x}  \dmGLsym{:}  \dmGLnt{A}  \vdash_\mathsf{M}  \dmGLnt{l}  \dmGLsym{:}  \dmGLnt{B} $,
  then $ \delta ,   1    \odot  \Delta  \dmGLsym{,}  \dmGLmv{y}  \dmGLsym{:}  \mathcal{G} \, \dmGLnt{A}  \dmGLsym{;}  \Gamma  \vdash_\mathsf{M}  \dmGLsym{[}  \mathcal G^{-1} \, \dmGLmv{y}  \dmGLsym{/}  \dmGLmv{x}  \dmGLsym{]}  \dmGLnt{l}  \dmGLsym{:}  \dmGLnt{B} $.
\end{restatable}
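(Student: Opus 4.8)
The plan is to manufacture, from a fresh graded variable $y : \mathcal{G}\,A$, a linear term of type $A$ and then substitute it for $x$ using the Mixed Linear case of \Cref{thm:substitution}. The substituted term will be $\mathcal{G}^{-1}\, y$, which is exactly why the conclusion mentions $[\mathcal{G}^{-1}\, y / x]\,l$.

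First I would check that the extended graded context $\Delta, y : \mathcal{G}\,A$ is well formed. From the hypothesis and \Cref{prop:ctxWellFormed} the mixed context $\Gamma, x : A$ is well formed over $\Delta$, and \Cref{cor:TypesWellFormed} then supplies $\delta'' \odot \Delta \vdash_\mathsf{G} A : \mathsf{Linear}$ for some grade vector $\delta''$; the type-formation rule for $\mathcal{G}$ upgrades this to $\delta'' \odot \Delta \vdash_\mathsf{G} \mathcal{G}\,A : \mathsf{Type}$, legitimising the extension by $y$.

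Next I would build the witnessing linear term and align the contexts. The graded variable rule gives $\vec{0}, 1 \odot \Delta, y : \mathcal{G}\,A \vdash_\mathsf{G} y : \mathcal{G}\,A$, and the elimination rule for $\mathcal{G}$ turns this into the mixed judgment $\vec{0}, 1 \odot \Delta, y : \mathcal{G}\,A ; \emptyset \vdash_\mathsf{M} \mathcal{G}^{-1}\, y : A$, a term of type $A$ whose linear context is empty. Because the two premises of the linear substitution must live over a common graded context, I would then weaken the hypothesis by the fresh graded variable (at grade $0$), obtaining $\delta, 0 \odot \Delta, y : \mathcal{G}\,A ; \Gamma, x : A \vdash_\mathsf{M} l : B$.

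Finally I would apply the Mixed Linear case of \Cref{thm:substitution} with $l_0 := \mathcal{G}^{-1}\, y$, empty $\Gamma_0$ and empty trailing linear context, combining the grade vectors $\delta, 0$ and $\vec{0}, 1$ into their sum $\delta, 1$; the linear context collapses to $\Gamma$ and the term becomes $[\mathcal{G}^{-1}\, y / x]\,l$, which is precisely the claimed judgment. I expect the only real friction to be the bookkeeping: seeing that the grade $1$ on $y$ in the conclusion arises as $0 + 1$ from the weakened hypothesis and the variable rule, and that the graded contexts must be aligned by weakening before the substitution theorem is applicable. No type substitution occurs in the conclusion, since the fresh $y$ appears in neither $A$ nor $B$ and the linear type $B$ cannot depend on the linear variable $x$; hence the result type is simply $B$.
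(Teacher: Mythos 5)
Your proposal is correct and follows essentially the same route as the paper's own proof: derive $\vec{0},1 \odot \Delta, y : \mathcal{G}\,A ; \emptyset \vdash_\mathsf{M} \mathcal{G}^{-1}\,y : A$ from the variable rule and the $\mathcal{G}$-elimination rule, weaken the hypothesis by $y : \mathcal{G}\,A$ at grade $0$, and conclude by the Mixed Linear case of the substitution theorem, with the grade $1$ arising as $0 + 1$. The only difference is your explicit preliminary well-formedness check for the extended graded context, which the paper leaves implicit.
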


Finally, we state the subject reduction theorem.
Reduction does not only preserve typing,
it also preserves the grades used in typing judgments.
Since most reductions are by substitution,
the bulk of the work has already been done in \Cref{thm:substitution}.

\begin{restatable}[Subject Reduction]{theorem}{SubjectReduction}
  \label{prop:cbnSubjectReduction}
  \begin{enumerate}[label=\roman*)]
    \item 
      If $ \delta  \odot  \Delta  \vdash_\mathsf{G}  \dmGLnt{t}  \dmGLsym{:}  \dmGLnt{X} $ and $ \dmGLnt{t}  \leadsto  \dmGLnt{t'} $,
      then $ \delta  \odot  \Delta  \vdash_\mathsf{G}  \dmGLnt{t'}  \dmGLsym{:}  \dmGLnt{X} $.

    \item 
      If $ \delta  \odot  \Delta  \dmGLsym{;}  \Gamma  \vdash_\mathsf{M}  \dmGLnt{l}  \dmGLsym{:}  \dmGLnt{A} $ and $ \dmGLnt{l}  \leadsto  \dmGLnt{l'} $,
      then $ \delta  \odot  \Delta  \dmGLsym{;}  \Gamma  \vdash_\mathsf{M}  \dmGLnt{l'}  \dmGLsym{:}  \dmGLnt{A} $.
  \end{enumerate}
\end{restatable}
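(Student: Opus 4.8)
The plan is to prove both statements simultaneously by induction on the typing derivation, with a case analysis on its last rule. Observe first that $\leadsto$ (see \Cref{fig:cbn}) permits reduction only at the root of a redex or in the two congruence positions, namely under a $\lambda$ and in the function position of an application. Consequently, for every typing rule whose subject is a variable or a constructor — e.g.\ $\mathbf{j}$, a pair, an injection, $\mathcal{F}(\cdot,\cdot)$, or $\mathcal{G}\,\dmGLnt{l}$ — no $\leadsto$-step applies and the case is vacuous. The administrative rules \textsc{subusage} and \textsc{convert} (and \textsc{exchange} in the mixed fragment) are handled uniformly: they leave the subject unchanged, so the induction hypothesis applies to the premise and we re-apply the same rule to the reduced term, preserving the type and the grade vector. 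The real content lies in the elimination rules, and the engine for those is the Substitution Theorem (\Cref{thm:substitution}).

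For the two congruence rules the argument is bookkeeping-preserving. For \textsc{appL}, where $\dmGLnt{t_1}\,\dmGLnt{t_2} \leadsto \dmGLnt{t_1'}\,\dmGLnt{t_2}$, the application rule supplies grade vectors $\delta_1,\delta_2$ for $\dmGLnt{t_1},\dmGLnt{t_2}$ with conclusion grade $\delta_1 + \dmGLnt{r}\cdot\delta_2$; the induction hypothesis yields $\dmGLnt{t_1'}$ at the same $\delta_1$ and type, and re-applying the application rule reproduces $\delta$ exactly. The $\lambda$-congruence is identical. Because the induction hypothesis preserves grade vectors on the nose, these cases never perturb the grading.

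The $\beta$-cases arise under an elimination rule whose scrutinee is syntactically a constructor. Here I would apply a generation (inversion) lemma to the scrutinee derivation to expose typings of the constructor's components, and then invoke the matching clause of \Cref{thm:substitution}. The pair case is representative: under \textsc{gradedPairElim} applied to $\mathsf{let}\,(\dmGLmv{x},\dmGLmv{y}) = (\dmGLnt{t_1},\dmGLnt{t_2})\,\mathsf{in}\,\dmGLnt{t_3}$, generation on the scrutinee returns derivations of $\dmGLnt{t_1}$ and $\dmGLnt{t_2}$ whose grade vectors combine through the annotation $\dmGLnt{r}$ precisely as the \emph{Graded} clause consumes them; substituting $\dmGLnt{t_2}$ for $\dmGLmv{y}$ and then $\dmGLnt{t_1}$ for $\dmGLmv{x}$ by two nested applications of the \emph{Graded} clause types $[\dmGLnt{t_1}/\dmGLmv{x}][\dmGLnt{t_2}/\dmGLmv{y}]\dmGLnt{t_3}$ at the target type with the required grade vector. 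The unit case is immediate, and the coproduct cases reduce to forming an application after generation on the injection. In the mixed fragment, \textsc{tensorBeta} uses the \emph{Mixed Linear} clause twice; \textsc{ladjBeta} (eliminating $\mathcal{F}$) combines the \emph{Mixed Graded} clause for the graded component with the \emph{Mixed Linear} clause for the linear component; and \textsc{radjBeta}, i.e.\ $\mathcal{G}^{-1}\,(\mathcal{G}\,\dmGLnt{l}) \leadsto \dmGLnt{l}$, follows by generation on the two modal rules, recovering $\dmGLnt{l}$ directly (cf.\ \Cref{prop:RadjLeftRule}).

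The main obstacle is making the generation lemmas robust to \textsc{subusage} and \textsc{convert}. Because a constructor derivation may end in a string of such administrative rules, generation cannot return the syntax-directed premises verbatim: it returns component derivations whose grade vectors satisfy only $\delta' \leq \delta$ and whose types agree only up to $\equiv$. I would therefore state each generation lemma carrying this $\leq$- and $\equiv$-slack explicitly, discharge the type slack with \textsc{convert}, and propagate the grade slack through the substitution arithmetic using monotonicity of $+$ and $\cdot$: smaller component grades yield a combined grade below the original $\delta$, after which a final \textsc{subusage} step recovers the exact grade vector demanded by the statement. Threading this slack through the $\delta + \dmGLnt{r}\cdot\delta_0$ combinations of \Cref{thm:substitution} — and confirming it still lands at or below $\delta$ — is the delicate part; everything else is routine reassembly, with well-formedness of the resulting judgments supplied by \Cref{prop:ctxWellFormed} and \Cref{cor:TypesWellFormed}.
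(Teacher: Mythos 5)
Your proposal is correct and follows essentially the same route as the paper's proof: induction on the typing derivation, inversion (generation) lemmas that carry $\leq$-slack on grade vectors and $\equiv$-slack on types precisely because derivations may end in \textsc{subusage}/\textsc{convert}, and the Substitution Theorem to discharge the $\beta$-cases, with a final subusage step absorbing the grade slack. If anything, your explicit treatment of propagating the $\leq$-slack through the $\delta + r\cdot\delta_0$ arithmetic is slightly more careful than the paper's write-up of the pair-elimination case.
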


\section{A graded type system in the style of adjoint logic}
\label{sec:dal}
We now present a further generalization of the previous type system,
which is inspired by Adjoint Logic \cite{Pruiksma2018},
and which we cal \glad{}.
Adjoint Logic offers a smooth way of combining an arbitrary number of
substructural logics which are identified by \emph{modes}.
Each mode $ \mathfrak{m} $ is assigned a set $ \sigma( \mathfrak{m} ) \subseteq \{\ms W, \ms C\} $
of the structural rules weakening (\textsf{W}) and contraction (\textsf{C})
satisfied by its logic,
and the modes are arranged in a preorder such that the map $ \sigma $ is monotone,
i.e. if $ \mathfrak{m}_{{\mathrm{1}}} \leq \mathfrak{m}_{{\mathrm{2}}} $, then $ \sigma( \mathfrak{m}_{{\mathrm{1}}}) \subseteq \sigma( \mathfrak{m}_{{\mathrm{2}}}) $.
One of the key insights of Adjoint Logic is that judgments
\[
  A_{ \mathfrak{m}_{{\mathrm{1}}}}^1 , \mathellipsis ,A_{ \mathfrak{m}_{\dalmv{n}}}^n \proves B_{\mathfrak{m}}
\]
must satisfy $ \mathfrak{m}_{\dalmv{i}} \ge \mathfrak{m} $ for each $ i $,
where the subscripts indicate the mode a proposition belongs to.

In the system of this section, modes come equipped with a preordered semiring
controlling the resource structure.
For any one of these semirings, this system is a dependently typed system
with the same rules as \dmGL{}.
We can control weakening in each of these fragments,
but in the presence of dependent types, controlling for contraction becomes difficult.
We will discuss the issues of this later.

\begin{defn}
  \label{defn:semiring-hom}
  Let $ R, S $ be preordered semirings.
  A morphism of preordered semirings $ R \to S $ is a map
  $ f \from R \to S $ such that $ f(0) = 0 $, $ f(1) = 1 $
  and for all $ a, b \in R $, we have
  $ f(a + b) = f(a) + f(b) $ and $ f(a \cdot b) = f(a) \cdot f(b) $
  and $ a \leq b \implies f(a) \leq f(b) $.
  If $ f \from R \to S $ is a morphism of preordered semirings
  and $ r \in R $ and $ s \in S $,
  then we write $ r \cdot s := f(r) \cdot s \in S $.
\end{defn}

\begin{defn}
  A \emph{mode} $ \mathfrak{m} $ is a pair $ (R_{\mathfrak{m}} , \mathsf{Weak} (  \mathfrak{m}  ) ) $,
  where $ R_{\mathfrak{m}} $ is a preordered semiring and
  $ \mathsf{Weak} (  \mathfrak{m}  ) $ is either true or false.
  We will write $ \dalnt{r}  \colon  \mathfrak{m} $ to mean $ r \in R_{ \mathfrak{m}} $.
  Modes are denoted by the lowercase fraktur letters $ \mf{m, n, l} $.

  Let $ \mathfrak{m} $ and $ \mf n $ be modes
  and assume that the proposition $ \mathsf{Weak} (  \mathfrak{m}  ) \implies \mathsf{Weak} (  \mathfrak{n}  ) $ is true.
  A morphism of modes $ \mathfrak{m} \to \mf n $ is a morphism $ R_\mathfrak{m} \to R_\dalmv{n} $
  of the underlying preordered semirings.
  There is a category of modes, denoted by $ \Modes $.
\end{defn}

For the rest of this section, fix a preordered set $ I $ and a functor $ I \to \Modes $.
That is, fix:
For each $ i \in I $, a mode $ \mathfrak{m}_{\dalmv{i}} $,
and for each $ i \leq j $ in $ I $, a morphism of modes 
$ f_{ij} \from \mf m_i \to \mf m_j $ such that
for $ i \leq j \leq k $, $ f_{jk} \circ f_{ij} = f_{ik} $ holds.
We will write $ \mathfrak{m}_{\dalmv{i}} \leq \mathfrak{m}_{\dalmv{j}} $ when $ i \leq j $.
\glad{} is parametrized by this data, and we give examples below.

\begin{figure}
  \begin{mdframed}
    \begin{align*}
      \text{Types: } A, B, C ::= & \ \mathsf{Type}
                           \mid \mathbf{I}_{ \mathfrak{m} }
                           \mid (  \dalmv{x}  :^{ \dalnt{r}  :  \mathfrak{m} }  \dalnt{A}  )  \multimap   \dalnt{B}
      \\ &
                           \mid (  \dalmv{x}  :^{ \dalnt{r}  :  \mathfrak{m} }  \dalnt{A}  )  \otimes   \dalnt{B}
                           \mid \dalnt{A}  \oplus  \dalnt{B}
                           \mid \uparrow_{ \mathfrak{m}_{{\mathrm{1}}} }^{ \mathfrak{m}_{{\mathrm{2}}} }\!\!  \dalnt{A}
      \\
      \text{Terms: } a, b, c ::= & \
                                   \dalmv{x}
                                   \mid \star_{ \mathfrak{m} }
                                   \mid \operatorname{\mathsf{let} }  \star_{ \mathfrak{m} } =  \dalnt{a}   \operatorname{\mathsf{in} }   \dalnt{b}
                                   \mid \lambda  \dalmv{x}  .  \dalnt{a}
                                   \mid \dalnt{a} \, \dalnt{b}
      \\ &
                                   \mid \operatorname{\mathsf{let} } \, \dalsym{(}  \dalmv{x}  \dalsym{,}  \dalmv{y}  \dalsym{)}  \dalsym{=}  \dalnt{a} \, \operatorname{\mathsf{in} } \, \dalnt{b}
                                   \mid \operatorname{\mathsf{inl} } \, \dalnt{a}
                                   \mid \operatorname{\mathsf{inr} } \, \dalnt{a}
      \\ &
                                   \mid \operatorname{\mathsf{case} } _{ \dalnt{q} }  \dalnt{a}   \operatorname{\mathsf{of} }   \dalnt{b_{{\mathrm{1}}}}  ;  \dalnt{b_{{\mathrm{2}}}}
                                   \mid \uparrow_{ \mathfrak{m}_{{\mathrm{1}}} }^{ \mathfrak{m}_{{\mathrm{2}}} }\!\!  \dalnt{a}
                                   \mid \downarrow_{ \mathfrak{m}_{{\mathrm{1}}} }^{ \mathfrak{m}_{{\mathrm{2}}} }\!\!  \dalnt{a}
    \end{align*}
  \end{mdframed}
  \caption{Syntax of types and terms in \glad{}}
  \label{fig:dal-syntax}
\end{figure}

The syntax of types and terms in \glad{} is given in \Cref{fig:dal-syntax}.
It strongly resembles that of \dmGL{},
with a unit type $ \mathbf{I}_{ \mathfrak{m} } $,
dependent function and pair types $ (  \dalmv{x}  :^{ \dalnt{r}  :  \mathfrak{m} }  \dalnt{A}  )  \multimap   \dalnt{B} $ and $ (  \dalmv{x}  :^{ \dalnt{r}  :  \mathfrak{m} }  \dalnt{A}  )  \otimes   \dalnt{B} $
and a coproduct type $ \dalnt{A}  \oplus  \dalnt{B} $.
The modal operators $ \uparrow_{ \mathfrak{m}_{{\mathrm{1}}} }^{ \mathfrak{m}_{{\mathrm{2}}} }\!\!  \dalnt{a} $ and $ \downarrow_{ \mathfrak{m}_{{\mathrm{1}}} }^{ \mathfrak{m}_{{\mathrm{2}}} }\!\!  \dalnt{a} $
take the role of $ \mc G $ and $ \mc G^{-1} $ from before
while the role of the modal operator $ \mc F $ is now subsumed by the
dependent pair type $ (  \dalmv{x}  :^{ \dalnt{r}  :  \mathfrak{m} }  \dalnt{A}  )  \otimes   \dalnt{B} $,
which allows the first and second component of the pair to belong to different modes.
The mode of any type appearing in a valid judgment will always be determined uniquely,
and therefore annotations on types to specify their mode are not necessary.
We choose to annotate the unit type with its mode,
as this will make things easier in the future.

\begin{figure}
  \begin{mdframed}
    \dalrules[ctx]{empty, extend}
  \end{mdframed}
  \caption{\glad{} well-formed context judgment}
  \label{fig:dal-ctx}
\end{figure}

The judgment for well-formed contexts in \glad{} has the form
\[
  \delta  \mid  \mathcal
{M}   \odot   \Gamma   \vdash   \mathsf{ctx}.
\]
The rules for this judgment are given in \Cref{fig:dal-ctx}.
In this judgment form, $ \delta $ is a list of grades,
$ \mathcal
{M} $ is a list of modes, and $ \Gamma $ is a context.
If $ \delta = (\dalnt{r_{{\mathrm{1}}}}, \mathellipsis, \dalnt{r_{\dalmv{n}}}) $,
$ \mathcal
{M} = (\mathfrak{m}_{{\mathrm{1}}}, \mathellipsis, \mathfrak{m}_{\dalmv{n}}) $
and $ \Gamma = \dalmv{x_{{\mathrm{1}}}} : \dalnt{A_{{\mathrm{1}}}} , \mathellipsis , \dalmv{x_{\dalmv{n}}}  : \dalnt{A_{\dalmv{n}}} $,
then the judgment above indicates that $ \dalnt{r_{\dalmv{i}}}  \colon  \mathfrak{m}_{\dalmv{i}} $,
that the variable $ \dalmv{x_{\dalmv{i}}} $ is graded with grade $ \dalnt{r_{\dalmv{i}}} $
and that the type $ \dalnt{A_{\dalmv{i}}} $ belongs to mode $ \mathfrak{m}_{\dalmv{i}} $.
The same is true for typing judgments:
\[
  \delta  \mid  \mathcal
{M}   \odot   \Gamma   \vdash _{ \mathfrak{m} }  \dalnt{a}  \colon  \dalnt{A}.
\]
The annotation $ \mathfrak{m} $ on the turnstile indicates that the judgment is made in mode $ \mathfrak{m} $,
and that $ \dalnt{a} $ and $ \dalnt{A} $ belong to mode $ \mathfrak{m} $,
therefore supporting the structural rules allowed by $ \mathfrak{m} $.
As in adjoint logic, we demand that in such a judgment, we have $ \mathfrak{m}  \leq  \mathcal
{M} $,
that is $ \mathfrak{m}  \leq  \mathfrak{n} $ for each mode $ \mf n $ appearing in $ \mathcal
{M} $.
This property is enforced by the typing rules:
If we assume that all premise judgments of a rule satisfy this property,
then the conclusion necessarily does, too.

\begin{figure}
  \begin{mdframed}
    \dalrules[glad]{type, unit, function, tensor, coproduct, shiftUp}
  \end{mdframed}
  \caption{\glad{} type formation rules}
  \label{fig:dal-types}
\end{figure}

\begin{figure}
  \begin{mdframed}
    \dalrules[glad]{var, weak, subusage, unitIntro, unitElim, lambda, app}
  \end{mdframed}
  \caption{\glad{} typing rules}
  \label{fig:dal-terms}
\end{figure}

\begin{figure}
  \begin{mdframed}
    \dalrules[glad]{tensorIntro, tensorElim, inl, inr, coproductElim, raise, unraise}
  \end{mdframed}
  \caption{\glad{} typing rules continued}
  \label{fig:dal-terms-ctd}
\end{figure}

We give the full rules for type formation in \Cref{fig:dal-types} and
those for typing in \Cref{fig:dal-terms} and \Cref{fig:dal-terms-ctd}.
The rules strongly resemble the ones of \dmGL{}, so we will omit most explanations,
focusing primarily on the differences.
\glad{} has more general control over weakening:
We may add unused variables of mode $ \mathfrak{m} $ to the context,
so long as they are graded with the grade $ 0 $ and the mode $ \mathfrak{m} $ allows weakening.
If $ \mathfrak{m} $ is a mode for which $ \mathsf{Weak} (  \mathfrak{m}  ) $ is true,
the functionality of the preorder on $ R_{ \mathfrak{m} } $ is overloaded in the following way:
It captures both the subusaging relation and gives control over grades which may be computationally irrelevant.
For nonzero elements of $ R_{ \mathfrak{m}} $, the preorder captures the subusaging behavior,
allowing to use a higher grade of resources than necessary to construct a term.
However, for elements $ q $ with $ 0 \leq q $,
the preorder captures that variables graded with $ q $ may be discarded.

Some of the rules feature a scalar multiplication $ \dalnt{r}   \cdot   \delta $.
There are two things to note here:
First, the vector $ \delta $ is a list grades coming from potentially different semirings,
but this doesn't affect the definition of scalar multiplication
as entrywise multiplication with $ \dalnt{r} $.
Second, let $ \mathfrak{m}_{\dalmv{i}} $ be the mode of the $ i $-th entry of $ \delta $
and $ \mathfrak{m} $ the mode of $ \dalnt{r} $.
Observe that the rules where scalar multiplication occurs
guarantee that $ \mathfrak{m}  \leq  \mathfrak{m}_{\dalmv{i}} $ for each~$ i $ and therefore the scalar multiplication
is indeed well-defined according to \Cref{defn:semiring-hom}.
Similarly, some rules feature addition of grade vectors $ \delta  \dalsym{+}  \delta' $.
Notice that whenever this is the case, the annotation by mode vectors $ \mathcal
{M} $ forces the
$ i $-th entries of $ \delta $ and $ \delta' $ to belong to the same mode,
ensuring that the sums of grade vectors are well-defined.
Finally, the subusage rule features the preorder relation $ \delta_{{\mathrm{1}}}  \leq  \delta_{{\mathrm{2}}} $ on grade vectors.
This is defined in the natural way as componentwise $ \leq $,
with the additional condition that for each $ i $, the $ i $-th component of $ \delta_{{\mathrm{1}}} $ and $ \delta_{{\mathrm{2}}} $
must be from the same mode.

The dependent pair type $ (  \dalmv{x}  :^{ \dalnt{r}  :  \mathfrak{m} }  \dalnt{A}  )  \otimes   \dalnt{B} $
now carries an additional mode annotation,
indicating that type $ A $ belongs to mode $ \mathfrak{m} $
and that $ \dalnt{r}  \colon  \mathfrak{m} $.
Due to the construction of the dependent pair type, if
$ \delta  \mid  \mathcal
{M}   \odot   \Gamma   \vdash _{ \mathfrak{n} }   (  \dalmv{x}  :^{ \dalnt{r}  :  \mathfrak{m} }  \dalnt{A}  )  \otimes   \dalnt{B}   \colon  \mathsf{Type} $,
then $ \dalnt{B} $ and $ (  \dalmv{x}  :^{ \dalnt{r}  :  \mathfrak{m} }  \dalnt{A}  )  \otimes   \dalnt{B} $ belong to mode $ \mf m $.
If $ \mf n $ and $ \mathfrak{m} $ are the same mode,
then we recover ordinary versions of the dependent pair type
belonging to mode $ \mathfrak{m} $.
On the other hand, if the modes are distinct,
the first component of the pair is ``moved'' from the higher mode $ \mathfrak{m} $
to the lower mode $ \mf n $.
This is the way the left adjoint $ \mc F(x :^r X).A $ functions in \dmGL{}.
Since the dependent pair and left adjoint $ \mc F $ have the same introduction
and elimination rules (mutandis mutatis),
we treat them as special instances of the same type in \glad{},
subsuming both functionalities.

The dependent function type $ (  \dalmv{x}  :^{ \dalnt{r}  :  \mathfrak{m} }  \dalnt{A}  )  \multimap   \dalnt{B} $ also carries a mode annotation now.
This is because we allow the modes of $ A $ and $ B $ to be distinct,
and the annotation indicates the mode of $ A $.
Like with the dependent pair type, if $ B $ has mode $ \mf n $, then so does
$ (  \dalmv{x}  :^{ \dalnt{r}  :  \mathfrak{m} }  \dalnt{A}  )  \multimap   \dalnt{B} $ and we necessarily have $ \mf m \ge \mf n $.
A similar construction exists in \LNLD{} \cite{Krishnaswami2015},
where there are two dependent function types,
one between intuitionistic types and one
whose functions take intuitionistic arguments and produce linear terms.

\subsection*{Metatheory}

We discuss some metatheory of \glad{},
and return to the point of controlling contraction in \glad{}.
Substitution holds in \glad{}.
We use \Cref{conv:list-match} again,
but extend it to also imply that lists of modes $ \mathcal
{M} $ have
the same length as the corresponding grade vectors and contexts.

\begin{restatable}[\glad{} substitution]{theorem}{dalsubst}
  \label{thm:dal-subst}
  The following hold by mutual induction:
  \begin{enumerate}[label=\roman*)]
  \item (Contexts)
    If
    $ \delta  \dalsym{,}  \dalnt{r}  \dalsym{,}  \delta'  \mid  \mathcal
{M}  \dalsym{,}  \mathfrak{m}_{{\mathrm{0}}}  \dalsym{,}  \mathcal
{M}'   \odot   \Gamma  \dalsym{,}  \dalmv{x}  \colon  \dalnt{A}  \dalsym{,}  \Gamma'   \vdash   \mathsf{ctx} $
    and
    $ \delta_{{\mathrm{0}}}  \mid  \mathcal
{M}   \odot   \Gamma   \vdash _{ \mathfrak{m}_{{\mathrm{0}}} }  \dalnt{a_{{\mathrm{0}}}}  \colon  \dalnt{A} $
    then
    \[
      \delta  \dalsym{+}   \dalnt{r}   \cdot   \delta_{{\mathrm{0}}}   \dalsym{,}  \delta'  \mid  \mathcal
{M}  \dalsym{,}  \mathcal
{M}'   \odot   \Gamma  \dalsym{,}  \dalsym{[}  \dalnt{a_{{\mathrm{0}}}}  \dalsym{/}  \dalmv{x}  \dalsym{]}  \Gamma'   \vdash   \mathsf{ctx}
    \]

  \item (Terms)
    If
    $ \delta  \dalsym{,}  \dalnt{r}  \dalsym{,}  \delta'  \mid  \mathcal
{M}  \dalsym{,}  \mathfrak{m}_{{\mathrm{0}}}  \dalsym{,}  \mathcal
{M}'   \odot   \Gamma  \dalsym{,}  \dalmv{x}  \colon  \dalnt{A}  \dalsym{,}  \Gamma'   \vdash _{ \mathfrak{m} }  \dalnt{b}  \colon  \dalnt{B} $
    and
    $ \delta_{{\mathrm{0}}}  \mid  \mathcal
{M}   \odot   \Gamma   \vdash _{ \mathfrak{m}_{{\mathrm{0}}} }  \dalnt{a_{{\mathrm{0}}}}  \colon  \dalnt{A} $,
    then
    \[
      \delta  \dalsym{+}   \dalnt{r}   \cdot   \delta_{{\mathrm{0}}}   \dalsym{,}  \delta'  \mid  \mathcal
{M}  \dalsym{,}  \mathcal
{M}'   \odot   \Gamma  \dalsym{,}  \dalsym{[}  \dalnt{a_{{\mathrm{0}}}}  \dalsym{/}  \dalmv{x}  \dalsym{]}  \Gamma'   \vdash _{ \mathfrak{m} }  \dalsym{[}  \dalnt{a_{{\mathrm{0}}}}  \dalsym{/}  \dalmv{x}  \dalsym{]}  \dalnt{b}  \colon  \dalsym{[}  \dalnt{a_{{\mathrm{0}}}}  \dalsym{/}  \dalmv{x}  \dalsym{]}  \dalnt{B}
    \]
  \end{enumerate}
\end{restatable}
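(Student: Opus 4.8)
The plan is to prove both parts simultaneously by induction on the derivation of the judgment being substituted into: the context well-formedness derivation for part (i) and the typing derivation for part (ii). The mutual induction is essential, since the context-extension rule of \Cref{fig:dal-ctx} appeals to type formation (which is itself a typing judgment at $\mathsf{Type}$) while the typing rules appeal back to context well-formedness. The overall shape mirrors the proof of \Cref{thm:substitution} for \dmGL{}; the genuinely new ingredients are that grades now live in possibly different semirings and that the scalar multiplication $r \cdot \delta_0$ is interpreted through the mode morphisms of \Cref{defn:semiring-hom}, so every grade computation must be checked to commute with the relevant $f_{ij}$, and that the mode-restriction discipline of adjoint logic must be respected.

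The base cases are the empty/extend context rules and the variable rule. For the variable rule I split on whether the variable being typed is $x$ itself. If it is, then the ambient mode equals $\mathfrak{m}_0$, the grade recorded for $x$ is $r = 1$ and all other grades are $0$; since $f(1) = 1$ and $1$ is a multiplicative unit we get $\delta + 1 \cdot \delta_0 = \delta_0$ on the relevant block, and $[a_0/x]x = a_0$ already has type $A$ over $\Gamma$ with exactly the grades $\delta_0$, so it remains only to re-attach the substituted tail $\Gamma'$ at grade $0$. If the variable being typed is some $y \neq x$, then the grade recorded for $x$ is $r = 0$, and using $f(0) = 0$ together with $0 \cdot s = 0$ we obtain $\delta + 0 \cdot \delta_0 = \delta$, so the grade vector is unchanged and $[a_0/x]y = y$ is retyped over the substituted context.

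For the inductive step I proceed rule by rule, applying the induction hypotheses to the premises and reassembling the conclusion. The cases that do real work are those manipulating grade vectors: application and dependent-pair elimination, where a premise vector is scaled by some $q$, and tensor introduction together with the context-concatenation points, where two vectors are added. Here the key algebraic facts are distributivity $q \cdot (\delta_1 + \delta_2) = q \cdot \delta_1 + q \cdot \delta_2$, associativity $q \cdot (r \cdot \delta_0) = (q \cdot r) \cdot \delta_0$, and compatibility of the subusage preorder with both operations; all hold because each $f_{ij}$ preserves $0$, $1$, $+$, $\cdot$ and $\leq$ and because the base operations are monotone. The subusage rule is then immediate by monotonicity, since $\delta_1 \leq \delta_2$ gives $\delta_1 + r \cdot \delta_0 \leq \delta_2 + r \cdot \delta_0$.

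The main obstacle, relative to the already-established \dmGL{} substitution, is the interaction of substitution with the mode structure, and this is where I expect the delicate work to lie. In the modal shift rules $\uparrow$ and $\downarrow$ the premise context is restricted to variables whose mode lies above a threshold, so substituting into such a premise is legitimate only if the substituted context $\Gamma$ also meets that threshold; this is precisely guaranteed by the adjoint-logic invariant $\mathfrak{m} \leq \mathcal{M}$, because $a_0$ is typed at mode $\mathfrak{m}_0$ and hence every mode in $\mathcal{M}$ is $\geq \mathfrak{m}_0$, so $\Gamma$ automatically satisfies any threshold that $x$ itself satisfies. The second delicate point is that when a premise records $x$ with a grade obtained as $r = r' + q \cdot r''$ (because $x$ occurs in two sub-derivations combined by a multiplying rule), one must verify that $\delta + r \cdot \delta_0$ is exactly the vector produced by substituting into each sub-derivation and recombining; expanding both sides by distributivity and associativity while tracking the mode morphisms that intervene in the scalar multiplications is the calculation at the heart of the argument.
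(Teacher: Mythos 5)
Your proposal is correct and follows essentially the same route as the paper's proof: mutual induction on the derivation being substituted into, with the variable case split according to whether the typed variable is $x$ itself (grade $1$, absorbing $\delta_0$) or a later variable (grade $0$ on $x$, leaving the vector unchanged), and with the multiplicative rules handled by distributivity, associativity, and monotonicity of the semiring operations as transported along the mode morphisms. The mode-threshold observation you make for the shift rules is exactly the point the paper records in its \textsc{raise} case, namely that $\mathfrak{m}_{2} \leq (\mathcal{M}, \mathcal{M}')$ follows from $\mathfrak{m}_{2} \leq (\mathcal{M}, \mathfrak{m}_{0}, \mathcal{M}')$.
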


The proof is by induction and given in \Cref{app:dal-meta}.
As for \dmGL{} we obtain a graded contraction rule as a corollary.

\begin{cor}[\glad{} contraction]
  If 
  \[
    \delta  \dalsym{,}  \dalnt{r_{{\mathrm{1}}}}  \dalsym{,}  \dalnt{r_{{\mathrm{2}}}}  \dalsym{,}  \delta'  \mid  \mathcal
{M}  \dalsym{,}  \mathfrak{m}  \dalsym{,}  \mathfrak{m}  \dalsym{,}  \mathcal
{M}'   \odot   \Gamma  \dalsym{,}  \dalmv{x}  \colon  \dalnt{A}  \dalsym{,}  \dalmv{y}  \colon  \dalnt{A}  \dalsym{,}  \Gamma'   \vdash _{ \mathfrak{n} }  \dalnt{b}  \colon  \dalnt{B},
  \]
  then
  \[
    \delta  \dalsym{,}  \dalnt{r_{{\mathrm{1}}}}  \dalsym{+}  \dalnt{r_{{\mathrm{2}}}}  \dalsym{,}  \delta'  \mid  \mathcal
{M}  \dalsym{,}  \mathfrak{m}  \dalsym{,}  \mathcal
{M}'   \odot   \Gamma  \dalsym{,}  \dalmv{x}  \colon  \dalnt{A}  \dalsym{,}  \dalsym{[}  \dalmv{x}  \dalsym{/}  \dalmv{y}  \dalsym{]}  \Gamma'   \vdash _{ \mathfrak{n} }  \dalsym{[}  \dalmv{x}  \dalsym{/}  \dalmv{y}  \dalsym{]}  \dalnt{b}  \colon  \dalsym{[}  \dalmv{x}  \dalsym{/}  \dalmv{y}  \dalsym{]}  \dalnt{B}.
  \]
\end{cor}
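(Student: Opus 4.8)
The plan is to mirror the proof of the \dmGL{} contraction corollary: graded contraction will follow from the substitution theorem \Cref{thm:dal-subst} by substituting the variable $ \dalmv{x} $ for $ \dalmv{y} $. First I would produce the auxiliary judgment to be substituted. Applying the variable rule to the prefix context $ \Gamma \dalsym{,} \dalmv{x} \colon \dalnt{A} $ yields
\[
  \vec{0}  \dalsym{,}  1  \mid  \mathcal{M}  \dalsym{,}  \mathfrak{m}   \odot   \Gamma  \dalsym{,}  \dalmv{x}  \colon  \dalnt{A}   \vdash _{ \mathfrak{m} }  \dalmv{x}  \colon  \dalnt{A},
\]
where the grade vector $ \vec{0} \dalsym{,} 1 $ assigns $ 0 $ to every variable of $ \Gamma $ and $ 1 $ to $ \dalmv{x} $, and the judgment is made at the mode $ \mathfrak{m} $ of $ \dalmv{x} $. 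This prefix context is well-formed because it is a prefix of the context of the hypothesis, which is well-formed by the \glad{} analogue of \Cref{prop:ctxWellFormed}.

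Next I would apply the term case of \Cref{thm:dal-subst} to this auxiliary judgment together with the hypothesis, reading the latter with the context split so that $ \Gamma \dalsym{,} \dalmv{x} \colon \dalnt{A} $ plays the role of the ambient context, $ \dalmv{y} \colon \dalnt{A} $ (graded $ \dalnt{r_{{\mathrm{2}}}} $, mode $ \mathfrak{m} $) the variable being substituted, and $ \Gamma' $ the trailing context. Taking $ \dalnt{a_{{\mathrm{0}}}} = \dalmv{x} $, $ \mathfrak{m}_{{\mathrm{0}}} = \mathfrak{m} $, $ \dalnt{r} = \dalnt{r_{{\mathrm{2}}}} $, and $ \delta_{{\mathrm{0}}} = \vec{0} \dalsym{,} 1 $, the theorem delivers a derivation of $ \dalsym{[} \dalmv{x} \dalsym{/} \dalmv{y} \dalsym{]} \dalnt{b} \colon \dalsym{[} \dalmv{x} \dalsym{/} \dalmv{y} \dalsym{]} \dalnt{B} $ at mode $ \mathfrak{n} $ over the context $ \Gamma \dalsym{,} \dalmv{x} \colon \dalnt{A} \dalsym{,} \dalsym{[} \dalmv{x} \dalsym{/} \dalmv{y} \dalsym{]} \Gamma' $ with mode vector $ \mathcal{M} \dalsym{,} \mathfrak{m} \dalsym{,} \mathcal{M}' $. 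It then remains only to simplify the grade vector produced by the theorem, namely $ ( \delta \dalsym{,} \dalnt{r_{{\mathrm{1}}}} ) \dalsym{+} \dalnt{r_{{\mathrm{2}}}} \cdot ( \vec{0} \dalsym{,} 1 ) \dalsym{,} \delta' $. Since scalar multiplication is entrywise and $ \dalnt{r_{{\mathrm{2}}}} \cdot 0 = 0 $ and $ \dalnt{r_{{\mathrm{2}}}} \cdot 1 = \dalnt{r_{{\mathrm{2}}}} $, this equals $ \delta \dalsym{,} \dalnt{r_{{\mathrm{1}}}} \dalsym{+} \dalnt{r_{{\mathrm{2}}}} \dalsym{,} \delta' $, which is precisely the grade vector in the conclusion.

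The one point requiring care, and the step I expect to be the main obstacle, is that the auxiliary variable judgment above must satisfy the adjoint invariant $ \mathfrak{m} \leq \mathcal{M} $, i.e. $ \mathfrak{m} \leq \mathfrak{k} $ for every mode $ \mathfrak{k} $ occurring in $ \mathcal{M} $. This does not follow merely from the invariant of the hypothesis, which only gives $ \mathfrak{n} \leq \mathcal{M} $. Instead it follows from well-formedness of the type $ \dalnt{A} $: since $ \dalmv{x} \colon \dalnt{A} $ occurs at mode $ \mathfrak{m} $ in the well-formed context of the hypothesis, $ \dalnt{A} $ is a well-formed type at mode $ \mathfrak{m} $ over $ \Gamma $, so there is a judgment $ \delta'' \mid \mathcal{M} \odot \Gamma \vdash _{ \mathfrak{m} } \dalnt{A} \colon \mathsf{Type} $ for some grade vector $ \delta'' $, and this judgment carries exactly the constraint $ \mathfrak{m} \leq \mathcal{M} $. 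Notably, this same inequality is already what makes the scalar product $ \dalnt{r_{{\mathrm{2}}}} \cdot ( \vec{0} \dalsym{,} 1 ) $ in \Cref{thm:dal-subst} well-defined according to \Cref{defn:semiring-hom}, so applicability of the substitution theorem and existence of the auxiliary judgment rest on the very same fact. Once this is in hand, the remaining verifications, that the substitutions $ \dalsym{[} \dalmv{x} \dalsym{/} \dalmv{y} \dalsym{]} $ on the term, type, and trailing context align and that the mode vector collapses to $ \mathcal{M} \dalsym{,} \mathfrak{m} \dalsym{,} \mathcal{M}' $, are immediate.
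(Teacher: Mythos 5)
Your proposal is correct and follows exactly the paper's own argument: instantiate the \dalrulename[glad]{var} rule to obtain $ \vec{0}  \dalsym{,}   1   \mid  \mathcal{M}  \dalsym{,}  \mathfrak{m}   \odot   \Gamma  \dalsym{,}  \dalmv{x}  \colon  \dalnt{A}   \vdash _{ \mathfrak{m} }  \dalmv{x}  \colon  \dalnt{A} $ and feed it to \Cref{thm:dal-subst}, with the grade-vector arithmetic $ (\delta \dalsym{,} \dalnt{r_{{\mathrm{1}}}}) \dalsym{+} \dalnt{r_{{\mathrm{2}}}} \cdot (\vec{0} \dalsym{,} 1) = \delta \dalsym{,} \dalnt{r_{{\mathrm{1}}}} \dalsym{+} \dalnt{r_{{\mathrm{2}}}} $ collapsing to the stated conclusion. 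Your additional check that the mode invariant $ \mathfrak{m} \leq \mathcal{M} $ holds (via well-formedness of $ \dalnt{A} $ in the context) is a detail the paper leaves implicit but does not change the argument.
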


\begin{proof}
  Apply substitution with the judgment
  \[
    \vec{0}  \dalsym{,}   1   \mid  \mathcal
{M}  \dalsym{,}  \mathfrak{m}   \odot   \Gamma  \dalsym{,}  \dalmv{x}  \colon  \dalnt{A}   \vdash _{ \mathfrak{m} }  \dalmv{x}  \colon  \dalnt{A}
  \]
  obtained from the \dalrulename[glad]{var} rule.
\end{proof}

Upon closer inspection, the thing that makes contraction work in our system,
is the form of the substitution theorem.
In its statement the grades $ \delta_{{\mathrm{0}}} $ for constructing $ a : A $
are added to the ones used in the construction of $ b : B $.
In other words the substitution theorem contains a contraction implicitly.
In fact, contraction is implicitly included in the rules of \glad{}:
For example in the rule \dalrulename[glad]{unitElim},
the grade vectors $ \delta $ and $ \delta' $ are added.
This is implicitly a contraction, as we are adding grades instead of concatenating contexts.

But concatenating contexts is not well-suited to the dependently typed setting.
When we concatenate contexts in a simply-typed system which has control over contraction,
there is a renaming of the variables in the context to avoid name clashes.
While such a renaming is possible in a dependently typed system,
it immediately becomes very difficult to type any terms
as the variables occurring in a term may also be part of its type.
For example consider the rule \dalrulename[glad]{app}:
If we opted to rename variables,
in order to know that the application $ \dalnt{c} \, \dalnt{a} $ is well typed,
we would need to remember the fact
that the domain type of $ c $ and the type of $ a $ were equal prior to the renaming.
It is not clear to us at the moment how to handle this renaming
and to how to incorporate it with grades.
We leave an investigation of this for future work.

\subsection*{Examples}
\label{subsec:examples}
In this section, we give some example instantiations of \glad{}
and show how we may recover existing graded and mixed systems.

\begin{exmp}[Recovering \dmGL{}]
  \label{exmp:dmgl}
  We explore the relationship between \dmGL{} and \glad{}.
  Let $ R $ be a semiring.
  We ask the following question:
  How closely can we approximate \dmGL{} graded by $ R $ using modes of \glad{}?
  A natural approach to solving this, is to take \glad{} with two modes,
  one with the semiring $ R $ and one with a semiring that captures linearity.
  But since \glad{} admits contraction,
  while the linear fragment of \dmGL{} does not, the latter semiring cannot exist.
  Because of this, we can only hope to recover a version of \dmGL{}
  where the linear fragment is graded as in \Cref{exmp:var-reuse}.
  This turns out to work:
  We take as modes $ \ms L $ (linear) and $ \ms G $ (graded) with $ \ms L \leq \ms G $
  and set $ R_{\ms G} = R $ and $ R_\ms L = \N $ with the trivial preorder $ n \leq m \iff n = m $.
  We also take $ \ms{Weak}(\ms G) = \True $ and $ \ms{Weak}(\ms L) = \False $.
  There exists a unique morphism of modes $ \phi \from \ms L \to \ms G $.
  \glad{} instantiated with this data produces a system with two fragments,
  one graded by $ R $ and which has the same rules as the graded fragment of \dmGL{}.
  The other fragment corresponds to the mixed fragment of \dmGL{}
  and has assumptions graded by $ R $ as well as by $ \N $,
  with the assumptions graded by $ \N $ behaving in a way that's similar to Bounded Linear Logic.
\end{exmp}

\begin{exmp}[Recovering \LNLD{} \cite{Krishnaswami2015}]
  \label{emxp:lnld}
  Let $ \ms U $ be the unrestricted mode where $ R_{\ms U} = 0 $,
  the trivial semiring with $ 0 = 1 $ and $ \Weak (\ms U) = \True $.
  Furthermore, let $ \ms L $ be the mode
  with $ R_{\ms L} $ the none-one-tons semiring of \Cref{exmp:quant-semiring}
  with the reflexive preorder relation and $ \Weak(\ms L) = \False $.
  Variables in mode $ \ms U $ have no grading information attached to them,
  and therefore can be used intuitionistically.
  On the other hand, variables in mode $ \ms L $ are used linearly by default
  and may not be weakened or duplicated.
  We have a unique morphism of modes $ \ms L \to \ms U $.
  Instantiating \glad{} with this data allows us to recover a system similar to \LNLD{}.
\end{exmp}

\begin{exmp}
  \label{exmp:relevance}
  In this example we consider two modes $ \ms W, \ms R $
  with the semiring for both modes being the none-one-tons semiring of \Cref{exmp:quant-semiring}.
  We set $ \Weak (\ms W) = \True $ and $ \Weak(\ms R) = \False $.
  We chose the preorder generated by $ 0 \leq \omega $ for $ \ms W $
  and the reflexive preorder for $ \ms R $.
  There is now a morphism of modes $ \ms R \to \ms W $ which
  is the identity morphism on the underlying semirings.
  The mode $ \ms W $ admits weakening, for variables used with grades $ 0 $ or $ \omega $,
  while the mode $ \ms R $ does not.
  In other words, $ \ms R $ is a relevance logic.

  Similar to the \emph{of course} modality $ ! $ of linear logic,
  and its decomposition into two adjoints $ F $ and $ G $ in LNL,
  we can use the mode shifting operators of \glad{}
  to introduce irrelevantly used variables to the mode $ \ms R $ in a controlled way.
\end{exmp}

\begin{exmp}
  \label{exmp:variance}
  Consider the semiring
  $ \ms{Var} = \{ \invar, \covar, \contvar, \unvar \} $,
  with the preorder generated by
  $ \invar \leq \covar, \contvar \leq \unvar $,
  with $ 0 = \unvar $, $ 1 = \covar $,
  addition defined by $ a + b = \inf (a , b) $ the greatest lower bound on $ \{ a, b\} $
  and multiplication determined by the equations
  \begin{gather*}
	\contvar \cdot \contvar = \covar, \\
    \invar \cdot \contvar = \invar \cdot \invar = \invar
  \end{gather*}
  and the requirement that multiplication is commutative.
  This is the \emph{variance} semiring introduced by Wood and Atkey \cite{Wood:2022}
  and it allows to track whether a term depends on a variable
  covariantly ($ \covar $), contravariantly ($ \contvar $),
  invariantly ($ \invar $), or if there are no guarantees ($ \unvar $).
  We define the mode $ \ms V $ to have $ R_{\ms V} = \ms{Var} $
  and $ \Weak {\ms V} = \True $.

  We add two more modes:
  $ \ms L $ with $ R_{\ms L} = \N $ and
  $ \ms M $ with $ R_{\ms M} $ the none-one-tons semiring.
  We take the preorders on these semirings to be the trivial reflexive preorders.
  Furthermore, we set $ \Weak \ms L = \False $ and $ \Weak \ms M = \True $.
  There are unique morphisms of modes
  $ \ms L \to \ms M \to \ms V $.
\end{exmp}

\section{Discussion, Future Work, Related Work}
\label{sec:discussion}
\subsection*{Related Work}

Closely related to $\GlaD$ is the framework of Licata et
al.~\cite{Licata:2019}.  Their system is a simply typed linear sequent
calculus equipped with a mode theory where every formula and judgment
is annotated with a mode that constrains the structural rules allowed
within the context.  The modes found in $\GlaD$ are not as elaborate
as the mode theory found in their system.  Their modes are generic and
have morphisms between them, but our modes are specifically semirings.
In addition, $\GlaD$ is dependently typed.

$\GlaD$ is based on $\GraD$ of Choudhury et al.~\cite{Choudhury2021}.
$\GraD$ is essentially identical to the graded side of $\dmGL$ with
the addition of the modal operators.  However, $\GlaD$ differs
substantially from $\GraD$ in that the former now supports multiple
semirings and a theory of modes.

The Graded Modal Dependent Type Theory (GMDTT) of Moon et
al.~\cite{Moon2021} is very similar to the graded side of $\dmGL$ and
our second system $\GlaD$, but GMDTT strives to track resource usage
in types as well as terms where $\GlaD$ and $\dmGL$ only tracks
resource usage in terms. In addition, GMDTT is not based on the theory
of adjoints and modal operators in line with LNL and Adjoint Logic.

Gratzer et al.~\cite{10.1145/3373718.3394736} propose modal dependent
type theory which uses modes to support the embedding of a family of
modal logics.  Their mode theory is similar to the one found in
$\GlaD$, but our system's goal is to relate graded type systems and
theirs is to relate modal logics.

\subsection*{Simply Typed Version with Control over Contraction}

We presented \glad{} as a dependently typed system with grading and
modes, similar to adjoint logic.
The fact that \glad{} is dependently typed makes it difficult to
control for contraction in a manner that's similar to adjoint logic,
and we have given an argument for why this is the case.
It appears that the difficulties with controlling graded contraction
disappear if one considers a simply typed system instead.
As a next step, we will investigate a simply typed version of \glad{}
in which we can control for contraction.
We will take the approach of equipping a mode $ \mf m $ with a subset 
$ \Cont(\mf m) \subseteq R_{\mf m} $ which is closed under addition,
but may also need to satisfy other algebraic properties.
We can then introduce an explicit graded contraction rule such as
\[
  \inferrule
  {
    r, q \in \Cont(\mf m)\\
    \delta , r , q , \delta' \mid
    \mc M , \mf m , \mf m , \mc M'
    \at \Gamma, x : A, y : A, \Gamma'
    \proves_{\mf n}
    b : B
  }
  {
    \delta , r + q , \delta' \mid
    \mc M , \mf m , \mc M'
    \at \Gamma , x : A , \Gamma'
    \proves_{\mf n}
    [x / y] b : B
  }
\]

\subsection*{Categorical Semantics}

Categorical semantics for dependent graded type systems are not well explored at the time of writing.
The only approach know to us is presented for Atkey's QTT \cite{Atkey2018}.
However, Katsumata \cite{Katsumata2018} has developed a general approach to the categorical semantics
using graded linear exponential comonads and formulates the coherence conditions on
such comonads in a compact way using double categories.
In our preliminary considerations on the categorical semantics of \glad{},
we have recovered Katsumata's approach exactly.

A common approach to categorical semantics of dependent type theory is
through categories with families \cite{Dybjer1996} and this is also the approach taken
by Atkey via quantitative categories with families (QCwF's).
A similar approach is to use comprehension categories \cite{Jacobs1991}.
We believe the latter to be slightly nicer,
as the category of comprehension categories embeds into
the $ 2 $-category of cartesian fibrations over a base category $ \mc B $
and this $ 2 $-category enjoys nice properties.
Furthermore,
a comprehension category can be very compactly described as morphism of cartesian fibrations
\[
  \begin{tikzcd}[column sep = small]
    \mc E
    \arrow[rd, "p"']
    \arrow[rr, "P"]
    &&
    \mc B^{\rightarrow}
    \arrow[ld, "\ms{cod}"]
    \\
    &
    \mc B
  \end{tikzcd}
\]
with $ \mc B $ cartesian closed,
$ \mc B^{\rightarrow} $ the category of arrows in $ \mc B $
and $ \ms{cod} $ the codomain fibration.
In this regard, one minor criticism we have of Atkey's QCwF's,
is that they do not (or at least are not know to) arise as an instantiation
of a more general categorical concept, like CwF's do with fibrations.

In category theory, it is often helpful to formulate specific concepts
as instances of more general ones.
Our goal for the future is to combine the general categorical pictures provided by
Katsumata's graded linear exponential comonads and comprehension categories
to develop categorical semantics for graded dependent type theory.

\subsection*{Conclusion}

In the present work we have presented two graded dependent type systems.
The first was a obtained by replacing the dependent fragment of \LNLD{}
with the graded dependent type system \GraD{}.
The second type system is a further generalization of the first,
allowing different assumptions to be graded by grades coming from different semirings.
This system resembles adjoint logic in its structure, and employs a similar construct of \emph{modes}.
We proved meta-theoretic properties of these systems:
For the former we proved substitution and presented a reduction relation
which we showed to preserve grading and types.
The latter system was proven to admit substitution and full graded contraction.

\begin{acks}   
   This work is supported by the
   \grantsponsor{1}{National Science Foundation}{https://www.nsf.gov} under Grant No.:~\grantnum{1}{2104535}.
\end{acks}

\bibliographystyle{plain}
\bibliography{refs.bib}

\appendix
\section{Metatheory of \dmGL{}}
\label{app-sec:dmgl}

\ContextWellFormedness*

\begin{lem}
  \label{lem:ctx-change-vector}
  Let $ \delta, \delta' $ be grade vectors of the same length.
  \begin{enumerate}[label=\roman*)]
    \item 
      If $ \delta  \odot  \Delta  \vdash_\mathsf{G} \, \mathsf{ctx} $, then $ \delta'  \odot  \Delta  \vdash_\mathsf{G} \, \mathsf{ctx} $.

    \item
      If $ \delta  \odot  \Delta  \dmGLsym{;}  \Gamma  \vdash_\mathsf{M} \, \mathsf{ctx} $, then $ \delta'  \odot  \Delta  \dmGLsym{;}  \Gamma  \vdash_\mathsf{M} \, \mathsf{ctx} $.
  \end{enumerate}
\end{lem}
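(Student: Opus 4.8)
The plan is to argue by induction on the derivation of the well-formedness judgment, handling the two parts in turn: part (i) is self-contained, and part (ii) will invoke part (i) at its base case. The essential observation is exactly the one recorded in the remark following \Cref{fig:ctxWellFormed}: in the context-extension rule the grade vector is extended by a \emph{completely arbitrary} grade, and the grade vector attached to a context is never otherwise constrained within a context-formation derivation. Consequently the grade vector is inert data that we are free to overwrite, and the induction merely threads a replacement vector through the derivation.

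For part (i), I would induct on the derivation of $\delta \odot \Delta \vdash_\mathsf{G} \mathsf{ctx}$. In the base case the derivation is the empty-context rule, so $\Delta = \emptyset$ and $\delta = \emptyset$; since $\delta'$ has the same length it too is $\emptyset$, and the empty-context rule yields $\delta' \odot \Delta \vdash_\mathsf{G} \mathsf{ctx}$ directly. In the inductive case the derivation ends with the extension rule, so $\Delta = \Delta_0, x : X$ and $\delta = \delta_0, r$, with premises $\delta_0 \odot \Delta_0 \vdash_\mathsf{G} \mathsf{ctx}$, a type-formation premise $\delta_X \odot \Delta_0 \vdash_\mathsf{G} X : \mathsf{Type}$, and the freshness condition $x \notin \dom \Delta_0$. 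Writing $\delta' = \delta_0', r'$, I would apply the induction hypothesis to the first premise to obtain $\delta_0' \odot \Delta_0 \vdash_\mathsf{G} \mathsf{ctx}$, keep the type-formation premise and freshness condition unchanged, and re-apply the extension rule with $r'$ as the new final grade, concluding $\delta_0', r' \odot \Delta_0, x : X \vdash_\mathsf{G} \mathsf{ctx}$ as required.

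Part (ii) proceeds by an analogous induction on the derivation of $\delta \odot \Delta; \Gamma \vdash_\mathsf{M} \mathsf{ctx}$, this time along the linear context $\Gamma$. Since linear contexts carry no grades, extending $\Gamma$ leaves the grade vector untouched, so the work is again confined to the graded part. When $\Gamma = \emptyset$, the derivation uses the empty mixed-context rule, whose premise is $\delta \odot \Delta \vdash_\mathsf{G} \mathsf{ctx}$; here I would invoke part (i) to replace $\delta$ by $\delta'$ and then re-apply that rule. When $\Gamma = \Gamma_0, x : A$, the derivation ends with the mixed extension rule, whose premises are $\delta \odot \Delta; \Gamma_0 \vdash_\mathsf{M} \mathsf{ctx}$ together with a linear type-formation premise for $A$; I would apply the induction hypothesis to the former, carry the latter along verbatim, and re-apply the rule.

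There is essentially no hard step: this is a routine structural induction whose only real content is the verification that nothing in a context-formation derivation reads the grade vector except the grade positions we are swapping. The one point worth checking carefully is that the type-formation premises ($X : \mathsf{Type}$ in the graded case and $A : \mathsf{Linear}$ in the mixed case) do not covertly pin down the grade vector on $\Delta$; but since those premises are reused unchanged and carry their own independent grade vectors, they impose no constraint on the vector we are replacing, so the substitution goes through with no further obligations.
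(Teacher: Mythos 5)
Your proposal is correct and follows essentially the same route as the paper: a structural induction on the context-formation derivation, exploiting the fact that the extension rule admits an arbitrary grade so the replacement vector can simply be threaded through, with part (ii) reducing to part (i) at the graded base and leaving the linear extensions untouched. The paper's own proof is just a terser statement of exactly this argument.
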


\begin{proof}
  \begin{enumerate}[label=\roman*)]
    \item 
      By induction on the derivation of $ \delta  \odot  \Delta  \vdash_\mathsf{G} \, \mathsf{ctx} $.
      Notice that in the rule for graded context extension
      \[
        \inferrule
        {
          \dmGLmv{x} \, \notin \, \sfoperator{dom} \, \Delta \\
          \delta  \odot  \Delta  \vdash_\mathsf{G} \, \mathsf{ctx} \\
          \delta'_{{\mathrm{0}}}  \odot  \Delta  \vdash_\mathsf{G}  \dmGLnt{X}  \dmGLsym{:}   \mathsf{Type}
        }
        {
          \delta ,  \dmGLnt{r}   \odot  \Delta  \dmGLsym{,}  \dmGLmv{x}  \dmGLsym{:}  \dmGLnt{X}  \vdash_\mathsf{G} \, \mathsf{ctx}
        }
      \]
      the grade $ \dmGLnt{r} $ by which we the grade vector is extended is arbitrary.
      So, in the derivation of $ \delta  \odot  \Delta  \vdash_\mathsf{G} \, \mathsf{ctx} $
      ``build up'' $ \delta' $ instead of $ \delta $.

    \item
      Similarly to i), the grade vector is not relevant to extending the linear
      section of a mixed context.
      \qedhere
  \end{enumerate}
\end{proof}

\begin{lem}
  \label{lem:mixed-ctx-formation}
  If $ \delta  \odot  \Delta  \vdash_\mathsf{G} \, \mathsf{ctx} $ then 
  $ \delta  \odot  \Delta  \dmGLsym{;}  \Gamma  \vdash_\mathsf{M} \, \mathsf{ctx} $ is provable
  if and only if for each type $ \dmGLnt{A} $ occurring in $ \Gamma $,
  the judgment $ \delta'  \odot  \Delta  \vdash_\mathsf{G}  \dmGLnt{A}  \dmGLsym{:}   \mathsf{Linear} $ is provable for some $ \delta' $,
  and all variables in $ \Gamma $ are distinct and do not occur in $ \Delta $.
\end{lem}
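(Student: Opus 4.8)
The plan is to prove both directions of the biconditional simultaneously by induction on the length of the linear context $\Gamma$ (equivalently, on the structure of the derivation of the mixed context judgment). Only two rules can conclude a judgment of the form $\delta \odot \Delta ; \Gamma \vdash_\mathsf{M} \mathsf{ctx}$: the base rule, which derives $\delta \odot \Delta ; \emptyset \vdash_\mathsf{M} \mathsf{ctx}$ from $\delta \odot \Delta \vdash_\mathsf{G} \mathsf{ctx}$, and the extension rule, which derives $\delta \odot \Delta ; \Gamma', x : A \vdash_\mathsf{M} \mathsf{ctx}$ from the premises $\delta \odot \Delta ; \Gamma' \vdash_\mathsf{M} \mathsf{ctx}$ and $\delta' \odot \Delta \vdash_\mathsf{G} A : \mathsf{Linear}$ (with $\delta'$ a free grade-vector metavariable, in keeping with \Cref{cor:TypesWellFormed}), together with the side conditions $x \notin \dom \Gamma'$ and $x \notin \dom \Delta$. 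Throughout, the grade vector $\delta$ and graded context $\Delta$ stay fixed and are threaded unchanged, since mixed extension never alters them; the whole biconditional is therefore a statement purely about $\Gamma$.

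In the base case $\Gamma = \emptyset$ the right-hand conditions (well-formedness of each type in $\Gamma$, distinctness of its variables, and disjointness from $\Delta$) are vacuous, and the base rule shows $\delta \odot \Delta ; \emptyset \vdash_\mathsf{M} \mathsf{ctx}$ holds exactly because $\delta \odot \Delta \vdash_\mathsf{G} \mathsf{ctx}$ is assumed; this settles both directions at once. For the inductive step with $\Gamma = \Gamma', x : A$, I would handle the forward direction by inversion: since the only rule that can produce the judgment is the extension rule, it immediately yields $\delta \odot \Delta ; \Gamma' \vdash_\mathsf{M} \mathsf{ctx}$, a derivation $\delta' \odot \Delta \vdash_\mathsf{G} A : \mathsf{Linear}$, and the side conditions on $x$. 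Applying the induction hypothesis to $\Gamma'$ then gives well-formedness of every type in $\Gamma'$ and distinctness/disjointness of its variables, and combining this with the freshly recovered data for $A$ and $x$ produces the full right-hand side.

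For the backward direction of the inductive step, I would restrict the assumed conditions for $\Gamma', x : A$ to their analogues for $\Gamma'$, invoke the induction hypothesis to obtain $\delta \odot \Delta ; \Gamma' \vdash_\mathsf{M} \mathsf{ctx}$, and then apply the extension rule, feeding it this judgment, the assumed $\delta' \odot \Delta \vdash_\mathsf{G} A : \mathsf{Linear}$, and the side conditions $x \notin \dom \Gamma'$ and $x \notin \dom \Delta$ read off from distinctness and disjointness. The one point requiring care is the treatment of the grade vector on the type-formation premise: the right-hand side asserts well-formedness of $A$ only \emph{for some} $\delta'$, and this must match what the extension rule consumes. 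This is resolved by the fact that $\delta'$ enters the extension rule as a free metavariable, so the existential witness substitutes directly; were the development instead to demand a canonical grade vector there, the invariance of the graded-context machinery under changes of grade vector recorded in \Cref{lem:ctx-change-vector} would supply the needed flexibility. Beyond this bookkeeping I expect no genuine difficulty, so the grade-vector matching is the only nontrivial step.
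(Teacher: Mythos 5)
Your proposal is correct and matches the paper's proof, which simply states that the lemma follows by induction on the length of $\Gamma$ using the mixed-context extension rule. Your additional care about the grade-vector metavariable on the type-formation premise is sound and consistent with how the paper handles grade vectors elsewhere (cf.\ \Cref{lem:ctx-change-vector}).
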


\begin{proof}
  This is easily seen by induction on the length of $ \Gamma $
  using the rule \drulename[mixedCtx]{extend}.
\end{proof}

\begin{proof}[Proof of \Cref{prop:ctxWellFormed}]

  Straightforward, by mutual induction on the derivations of
  $ \delta  \odot  \Delta  \vdash_\mathsf{G}  \dmGLnt{t}  \dmGLsym{:}  \dmGLnt{X} $ and $ \delta  \odot  \Delta  \dmGLsym{;}  \Gamma  \vdash_\mathsf{M}  \dmGLnt{l}  \dmGLsym{:}  \dmGLnt{A} $ respectively.
  For the graded fragment, most of the work is done by \Cref{lem:ctx-change-vector},
  and \Cref{lem:mixed-ctx-formation} takes care of the mixed fragment.
\end{proof}

\substitutionTheorem*

\begin{proof}
  The proof is by induction over the derivation of the second judgment.
  For each rule, we need to pattern-match the resulting judgment
  with the pattern in the statement of the theorem,
  which we will do at the beginning of each case.
  
  \proofitem{Case \dmGLdruleGXXsubusageName}
  \[
    \inferrule[\dmGLdruleGXXsubusageName{}]
    {
      \delta_{{\mathrm{1}}} ,  \dmGLnt{r_{{\mathrm{1}}}}  ,  \delta'_{{\mathrm{1}}}   \leq    \delta_{{\mathrm{2}}} ,   \dmGLnt{r_{{\mathrm{2}}}}   ,   \delta'_{{\mathrm{2}}}
      \\
      \delta_{{\mathrm{1}}} ,  \dmGLnt{r_{{\mathrm{1}}}}  ,  \delta'_{{\mathrm{1}}}   \odot  \Delta  \dmGLsym{,}  \dmGLmv{x}  \dmGLsym{:}  \dmGLnt{X}  \dmGLsym{,}  \Delta'  \vdash_\mathsf{G}  \dmGLnt{t}  \dmGLsym{:}  \dmGLnt{Y}
    }
    {
      \delta_{{\mathrm{2}}} ,   \dmGLnt{r_{{\mathrm{2}}}}   ,   \delta'_{{\mathrm{2}}}    \odot  \Delta  \dmGLsym{,}  \dmGLmv{x}  \dmGLsym{:}  \dmGLnt{X}  \dmGLsym{,}  \Delta'  \vdash_\mathsf{G}  \dmGLnt{t}  \dmGLsym{:}  \dmGLnt{Y}
    }
  \]
  By induction we know that
  $ \delta_{{\mathrm{1}}}  +  \dmGLnt{r_{{\mathrm{1}}}}  \cdot  \delta_{{\mathrm{0}}} ,  \delta'_{{\mathrm{1}}}   \odot  \Delta  \dmGLsym{,}  \dmGLsym{[}  \dmGLnt{t_{{\mathrm{0}}}}  \dmGLsym{/}  \dmGLmv{x}  \dmGLsym{]}  \Delta'  \vdash_\mathsf{G}  \dmGLsym{[}  \dmGLnt{t_{{\mathrm{0}}}}  \dmGLsym{/}  \dmGLmv{x}  \dmGLsym{]}  \dmGLnt{t}  \dmGLsym{:}  \dmGLsym{[}  \dmGLnt{t_{{\mathrm{0}}}}  \dmGLsym{/}  \dmGLmv{x}  \dmGLsym{]}  \dmGLnt{Y} $
  and since $ + , \cdot $ are monotonic,
  we have $ \delta_{{\mathrm{1}}}  +  \dmGLnt{r_{{\mathrm{1}}}}  \cdot  \delta_{{\mathrm{0}}}  \leq  \delta_{{\mathrm{2}}}  +  \dmGLnt{r_{{\mathrm{2}}}}  \cdot  \delta_{{\mathrm{0}}} $
  and hence also $ \delta_{{\mathrm{1}}}  +   \dmGLnt{r_{{\mathrm{1}}}}   \cdot   \delta_{{\mathrm{0}}}  ,  \delta'_{{\mathrm{1}}}   \leq   \delta_{{\mathrm{2}}}  +   \dmGLnt{r_{{\mathrm{2}}}}   \cdot   \delta_{{\mathrm{0}}}  ,  \delta'_{{\mathrm{2}}} $.
  Applying \dmGLdruleGXXsubusageName{} concludes this case.

  \proofitem{Case \dmGLdruleGXXweakName}
  We need to match the judgment
  $ \delta ,  0   \odot  \Delta  \dmGLsym{,}  \dmGLmv{z}  \dmGLsym{:}  \dmGLnt{Z}  \vdash_\mathsf{G}  \dmGLnt{t}  \dmGLsym{:}  \dmGLnt{Y} $
  with the pattern
  $ \delta ,  \dmGLnt{r}  ,  \delta'   \odot  \Delta  \dmGLsym{,}  \dmGLmv{x}  \dmGLsym{:}  \dmGLnt{X}  \dmGLsym{,}  \Delta'  \vdash_\mathsf{G}  \dmGLnt{t}  \dmGLsym{:}  \dmGLnt{Y} $.
  There are two cases.
  In the first case we have $ \Delta' = \emptyset $ and $ X = Z $.
  Then our judgment was obtained as follows
  \[
	\inferrule
    {
      \dmGLmv{x} \, \notin \, \sfoperator{dom} \, \Delta\\
      \delta  \odot  \Delta  \vdash_\mathsf{G}  \dmGLnt{t}  \dmGLsym{:}  \dmGLnt{Y} \\\\
      \delta'_{{\mathrm{0}}}  \odot  \Delta  \vdash_\mathsf{G}  \dmGLnt{X}  \dmGLsym{:}   \mathsf{Type}
    }
    {
      \delta ,  0   \odot  \Delta  \dmGLsym{,}  \dmGLmv{x}  \dmGLsym{:}  \dmGLnt{X}  \vdash_\mathsf{G}  \dmGLnt{t}  \dmGLsym{:}  \dmGLnt{Y}
    }
    \text{\drulename[G]{weak}}
  \]
  and we must prove
  $ \delta  +   0   \cdot   \delta_{{\mathrm{0}}}   \odot  \Delta  \vdash_\mathsf{G}  \dmGLsym{[}  \dmGLnt{t_{{\mathrm{0}}}}  \dmGLsym{/}  \dmGLmv{x}  \dmGLsym{]}  \dmGLmv{x}  \dmGLsym{:}  \dmGLsym{[}  \dmGLnt{t_{{\mathrm{0}}}}  \dmGLsym{/}  \dmGLmv{x}  \dmGLsym{]}  \dmGLnt{Y} $.
  But we have $ \delta  \odot  \Delta  \vdash_\mathsf{G}  \dmGLnt{t}  \dmGLsym{:}  \dmGLnt{Y} $, and therefore know that the variable
  $ x $ does not occur freely in $ t $ or $ Y $.
  It follows that we need to prove
  $ \delta  \odot  \Delta  \vdash_\mathsf{G}  \dmGLnt{t}  \dmGLsym{:}  \dmGLnt{Y} $, which we know by assumption.
  
  In the second case we have $ \Delta' \neq \emptyset $.
  In this case the judgment was obtained by
  \[
	\inferrule
    {
      \dmGLmv{z} \, \notin \, \sfoperator{dom} \, \dmGLsym{(}  \Delta  \dmGLsym{,}  \dmGLmv{x}  \dmGLsym{:}  \dmGLnt{X}  \dmGLsym{,}  \Delta'  \dmGLsym{)}\\
      \delta ,  \dmGLnt{r}  ,  \delta'   \odot  \Delta  \dmGLsym{,}  \dmGLmv{x}  \dmGLsym{:}  \dmGLnt{X}  \dmGLsym{,}  \Delta'  \vdash_\mathsf{G}  \dmGLnt{t}  \dmGLsym{:}  \dmGLnt{Y}\\
      \delta_{{\mathrm{1}}} ,  \dmGLnt{r_{{\mathrm{1}}}}  ,  \delta'_{{\mathrm{1}}}   \odot  \Delta  \dmGLsym{,}  \dmGLmv{x}  \dmGLsym{:}  \dmGLnt{X}  \dmGLsym{,}  \Delta'  \vdash_\mathsf{G}  \dmGLnt{Z}  \dmGLsym{:}   \mathsf{Type}
    }
    {
      \delta ,  \dmGLnt{r}  ,  \delta'  ,  0   \odot  \Delta  \dmGLsym{,}  \dmGLmv{x}  \dmGLsym{:}  \dmGLnt{X}  \dmGLsym{,}  \Delta'  \dmGLsym{,}  \dmGLmv{z}  \dmGLsym{:}  \dmGLnt{Z}  \vdash_\mathsf{G}  \dmGLnt{t}  \dmGLsym{:}  \dmGLnt{Y}
    }
  \]
  We need to prove
  \[
	\delta  +   \dmGLnt{r}   \cdot   \delta_{{\mathrm{0}}}  ,  \delta'  ,  0   \odot  \Delta  \dmGLsym{,}  \dmGLsym{[}  \dmGLnt{t_{{\mathrm{0}}}}  \dmGLsym{/}  \dmGLmv{x}  \dmGLsym{]}  \Delta'  \dmGLsym{,}  \dmGLmv{z}  \dmGLsym{:}  \dmGLsym{[}  \dmGLnt{t_{{\mathrm{0}}}}  \dmGLsym{/}  \dmGLmv{x}  \dmGLsym{]}  \dmGLnt{Z}  \vdash_\mathsf{G}  \dmGLsym{[}  \dmGLnt{t_{{\mathrm{0}}}}  \dmGLsym{/}  \dmGLmv{x}  \dmGLsym{]}  \dmGLnt{t}  \dmGLsym{:}  \dmGLsym{[}  \dmGLnt{t_{{\mathrm{0}}}}  \dmGLsym{/}  \dmGLmv{x}  \dmGLsym{]}  \dmGLnt{Y}.
  \]
  From the inductive hypothesis we have
  \begin{align*}
    &
      \delta  +   \dmGLnt{r}   \cdot   \delta_{{\mathrm{0}}}  ,  \delta'   \odot  \Delta  \dmGLsym{,}  \dmGLsym{[}  \dmGLnt{t_{{\mathrm{0}}}}  \dmGLsym{/}  \dmGLmv{x}  \dmGLsym{]}  \Delta'  \vdash_\mathsf{G}  \dmGLsym{[}  \dmGLnt{t_{{\mathrm{0}}}}  \dmGLsym{/}  \dmGLmv{x}  \dmGLsym{]}  \dmGLnt{t}  \dmGLsym{:}  \dmGLsym{[}  \dmGLnt{t_{{\mathrm{0}}}}  \dmGLsym{/}  \dmGLmv{x}  \dmGLsym{]}  \dmGLnt{Y}
    \\
    \text{and}
    \quad
    &
      \delta_{{\mathrm{1}}}  +   \dmGLnt{r_{{\mathrm{1}}}}   \cdot   \delta_{{\mathrm{0}}}  ,  \delta'_{{\mathrm{1}}}   \odot  \Delta  \dmGLsym{,}  \dmGLsym{[}  \dmGLnt{t_{{\mathrm{0}}}}  \dmGLsym{/}  \dmGLmv{x}  \dmGLsym{]}  \Delta'  \vdash_\mathsf{G}  \dmGLsym{[}  \dmGLnt{t_{{\mathrm{0}}}}  \dmGLsym{/}  \dmGLmv{x}  \dmGLsym{]}  \dmGLnt{Z}  \dmGLsym{:}   \mathsf{Type}
  \end{align*}
  and applying \drulename[G]{weak} to these judgments produces
  the desired judgment.

  \proofitem{Case \dmGLdruleGXXconvertName}
  Immediate by induction
  and the fact that ``$ \equiv $'' is a congruence relation,
  i.e. $ \dmGLsym{[}  \dmGLnt{t_{{\mathrm{0}}}}  \dmGLsym{/}  \dmGLmv{x}  \dmGLsym{]}  \dmGLnt{Y} \equiv \dmGLsym{[}  \dmGLnt{t_{{\mathrm{0}}}}  \dmGLsym{/}  \dmGLmv{x}  \dmGLsym{]}  \dmGLnt{Y'} $ follows from
  $ \dmGLnt{Y} \equiv \dmGLnt{Y'} $.

  \proofitem{Case \drulename[G]{var}}
  We need to pattern-match the judgment
  $ \vec{0} ,   1    \odot  \Delta  \dmGLsym{,}  \dmGLmv{z}  \dmGLsym{:}  \dmGLnt{Z}  \vdash_\mathsf{G}  \dmGLmv{z}  \dmGLsym{:}  \dmGLnt{Z} $
  with the pattern
  $ \delta ,  \dmGLnt{r}  ,  \delta'   \odot  \Delta  \dmGLsym{,}  \dmGLmv{x}  \dmGLsym{:}  \dmGLnt{X}  \dmGLsym{,}  \Delta'  \vdash_\mathsf{G}  \dmGLnt{t}  \dmGLsym{:}  \dmGLnt{Y} $.
  There are two possibilities to do this.
  In the first, $ \Delta' = \emptyset $ and $ X = Y = Z $.
  In this case the judgment we are performing induction on was obtained thus
  \[
	\inferrule{
      \dmGLmv{x} \, \notin \, \sfoperator{dom} \, \Delta\\
      \delta_{{\mathrm{1}}}  \odot  \Delta  \vdash_\mathsf{G}  \dmGLnt{X}  \dmGLsym{:}   \mathsf{Type}
    }
    {
      \vec{0} ,   1    \odot  \Delta  \dmGLsym{,}  \dmGLmv{x}  \dmGLsym{:}  \dmGLnt{X}  \vdash_\mathsf{G}  \dmGLmv{x}  \dmGLsym{:}  \dmGLnt{X}
    }
  \]
  and we must prove
  $ \vec{0}  +    1    \cdot   \delta_{{\mathrm{0}}}   \odot  \Delta  \vdash_\mathsf{G}  \dmGLsym{[}  \dmGLnt{t}  \dmGLsym{/}  \dmGLmv{x}  \dmGLsym{]}  \dmGLmv{x}  \dmGLsym{:}  \dmGLsym{[}  \dmGLnt{t}  \dmGLsym{/}  \dmGLmv{x}  \dmGLsym{]}  \dmGLnt{X} $.
  From $ \delta_{{\mathrm{1}}}  \odot  \Delta  \vdash_\mathsf{G}  \dmGLnt{X}  \dmGLsym{:}   \mathsf{Type} $ and $ \dmGLmv{x} \, \notin \, \sfoperator{dom} \, \Delta $,
  we know that $ x $ is not free in $ X $.
  Therefore, we need to prove
  $ \delta_{{\mathrm{0}}}  \odot  \Delta  \vdash_\mathsf{G}  \dmGLnt{t}  \dmGLsym{:}  \dmGLnt{X} $, which is an assumption.

  In the second case,
  the judgment we are performing induction on was obtained as
  \[
	\inferrule{
      \dmGLmv{z} \, \notin \, \sfoperator{dom} \, \dmGLsym{(}  \Delta  \dmGLsym{,}  \dmGLmv{x}  \dmGLsym{:}  \dmGLnt{X}  \dmGLsym{,}  \Delta'  \dmGLsym{)}\\
      \delta ,  \dmGLnt{r}  ,  \delta'   \odot  \Delta  \dmGLsym{,}  \dmGLmv{x}  \dmGLsym{:}  \dmGLnt{X}  \dmGLsym{,}  \Delta'  \vdash_\mathsf{G}  \dmGLnt{Z}  \dmGLsym{:}   \mathsf{Type}
    }
    {
      \vec{0} ,  0  ,  \vec{0}  ,   1    \odot  \Delta  \dmGLsym{,}  \dmGLmv{x}  \dmGLsym{:}  \dmGLnt{X}  \dmGLsym{,}  \Delta'  \dmGLsym{,}  \dmGLmv{z}  \dmGLsym{:}  \dmGLnt{Z}  \vdash_\mathsf{G}  \dmGLmv{z}  \dmGLsym{:}  \dmGLnt{Z}
    }
  \]
  and we need to prove
  $ \vec{0} ,  \vec{0}  ,   1    \odot  \Delta  \dmGLsym{,}  \dmGLsym{[}  \dmGLnt{t_{{\mathrm{0}}}}  \dmGLsym{/}  \dmGLmv{z}  \dmGLsym{]}  \Delta'  \dmGLsym{,}  \dmGLmv{z}  \dmGLsym{:}  \dmGLsym{[}  \dmGLnt{t_{{\mathrm{0}}}}  \dmGLsym{/}  \dmGLmv{x}  \dmGLsym{]}  \dmGLnt{Z}  \vdash_\mathsf{G}  \dmGLsym{[}  \dmGLnt{t_{{\mathrm{0}}}}  \dmGLsym{/}  \dmGLmv{x}  \dmGLsym{]}  \dmGLmv{z}  \dmGLsym{:}  \dmGLsym{[}  \dmGLnt{t_{{\mathrm{0}}}}  \dmGLsym{/}  \dmGLmv{x}  \dmGLsym{]}  \dmGLnt{Z} $.
  But this follows from the inductive hypothesis
  $ \delta  +   \dmGLnt{r}   \cdot   \delta_{{\mathrm{0}}}  ,  \delta'   \odot  \Delta  \dmGLsym{,}  \Delta'  \vdash_\mathsf{G}  \dmGLsym{[}  \dmGLnt{t_{{\mathrm{0}}}}  \dmGLsym{/}  \dmGLmv{x}  \dmGLsym{]}  \dmGLnt{Z}  \dmGLsym{:}   \mathsf{Type} $
  the variable rule and the fact that $ z = \dmGLsym{[}  \dmGLnt{t_{{\mathrm{0}}}}  \dmGLsym{/}  \dmGLmv{x}  \dmGLsym{]}  \dmGLmv{z} $.

  \proofitem{Cases
    \drulename[G]{type},
    \drulename[G]{linear},
    \drulename[G]{unit},
    \drulename[G]{unitIntro}}
  Trivial.

  \proofitem{Case \dmGLdruleGXXunitElimName}
  \[
    \inferrule
    {
      \delta_{{\mathrm{1}}} ,  \dmGLnt{r_{{\mathrm{1}}}}  ,  \delta'_{{\mathrm{1}}}   \odot  \Delta  \dmGLsym{,}  \dmGLmv{x}  \dmGLsym{:}  \dmGLnt{X}  \dmGLsym{,}  \Delta'  \vdash_\mathsf{G}  \dmGLnt{t_{{\mathrm{1}}}}  \dmGLsym{:}  \mathbf{J}
      \\
      \delta_{{\mathrm{2}}} ,  \dmGLnt{r_{{\mathrm{2}}}}  ,  \delta'_{{\mathrm{2}}}  ,  \dmGLnt{q_{{\mathrm{2}}}}   \odot  \Delta  \dmGLsym{,}  \dmGLmv{x}  \dmGLsym{:}  \dmGLnt{X}  \dmGLsym{,}  \Delta'  \dmGLsym{,}  \dmGLmv{x'}  \dmGLsym{:}  \mathbf{J}  \vdash_\mathsf{G}  \dmGLnt{X'}  \dmGLsym{:}   \mathsf{Type}
      \\
      \delta_{{\mathrm{3}}} ,  \dmGLnt{r_{{\mathrm{3}}}}  ,  \delta'_{{\mathrm{3}}}   \odot  \Delta  \dmGLsym{,}  \dmGLmv{x}  \dmGLsym{:}  \dmGLnt{X}  \dmGLsym{,}  \Delta'  \vdash_\mathsf{G}  \dmGLnt{t_{{\mathrm{3}}}}  \dmGLsym{:}  \dmGLsym{[}  \mathbf{j}  \dmGLsym{/}  \dmGLmv{x}  \dmGLsym{]}  \dmGLnt{X'}
    }
    {
      \delta_{{\mathrm{1}}}  +  \delta_{{\mathrm{3}}} ,  \dmGLnt{r_{{\mathrm{1}}}}  +  \dmGLnt{r_{{\mathrm{3}}}}  ,  \delta'_{{\mathrm{1}}}   +  \delta'_{{\mathrm{3}}} \odot \Delta  \dmGLsym{,}  \dmGLmv{x}  \dmGLsym{:}  \dmGLnt{X}  \dmGLsym{,}  \Delta' \\
      \vdash_\mathsf{G} \sfoperator{let} \, \mathbf{j} \, \dmGLsym{=}  \dmGLnt{t_{{\mathrm{1}}}} \, \sfoperator{in} \, \dmGLnt{t_{{\mathrm{3}}}} : \dmGLsym{[}  \dmGLnt{t_{{\mathrm{1}}}}  \dmGLsym{/}  \dmGLmv{x'}  \dmGLsym{]}  \dmGLnt{X'}
    }
  \]
  By induction we have
  \begin{mathpar}
    \delta_{{\mathrm{1}}}  +   \dmGLnt{r_{{\mathrm{1}}}}   \cdot   \delta_{{\mathrm{0}}}  ,  \delta'_{{\mathrm{1}}}   \odot  \Delta  \dmGLsym{,}  \dmGLsym{[}  \dmGLnt{t_{{\mathrm{0}}}}  \dmGLsym{/}  \dmGLmv{x}  \dmGLsym{]}  \Delta'  \vdash_\mathsf{G}  \dmGLsym{[}  \dmGLnt{t_{{\mathrm{0}}}}  \dmGLsym{/}  \dmGLmv{x}  \dmGLsym{]}  \dmGLnt{t_{{\mathrm{1}}}}  \dmGLsym{:}  \mathbf{J}
    \and
    \delta_{{\mathrm{2}}}  +   \dmGLnt{r_{{\mathrm{2}}}}   \cdot   \delta_{{\mathrm{0}}}  ,  \delta'_{{\mathrm{2}}}  ,  \dmGLnt{q_{{\mathrm{2}}}}   \odot  \Delta  \dmGLsym{,}  \dmGLsym{[}  \dmGLnt{t_{{\mathrm{0}}}}  \dmGLsym{/}  \dmGLmv{x}  \dmGLsym{]}  \Delta'  \dmGLsym{,}  \dmGLmv{x'}  \dmGLsym{:}  \mathbf{J}  \vdash_\mathsf{G}  \dmGLsym{[}  \dmGLnt{t_{{\mathrm{0}}}}  \dmGLsym{/}  \dmGLmv{x}  \dmGLsym{]}  \dmGLnt{X'}  \dmGLsym{:}   \mathsf{Type}
    \and
    \delta_{{\mathrm{3}}}  +   \dmGLnt{r_{{\mathrm{3}}}}   \cdot   \delta_{{\mathrm{0}}}  ,  \delta'_{{\mathrm{3}}}   \odot  \Delta  \dmGLsym{,}  \dmGLsym{[}  \dmGLnt{t_{{\mathrm{0}}}}  \dmGLsym{/}  \dmGLmv{x}  \dmGLsym{]}  \Delta'  \vdash_\mathsf{G}  \dmGLsym{[}  \dmGLnt{t_{{\mathrm{0}}}}  \dmGLsym{/}  \dmGLmv{x}  \dmGLsym{]}  \dmGLnt{t_{{\mathrm{3}}}}  \dmGLsym{:}  \dmGLsym{[}  \dmGLnt{t_{{\mathrm{0}}}}  \dmGLsym{/}  \dmGLmv{x}  \dmGLsym{]}  \dmGLsym{[}  \mathbf{j}  \dmGLsym{/}  \dmGLmv{x'}  \dmGLsym{]}  \dmGLnt{X'}
  \end{mathpar}
  Applying \dmGLdruleGXXunitElimName to these judgments yields
  \begin{gather*}
    \delta_{{\mathrm{1}}}  +   \dmGLnt{r_{{\mathrm{1}}}}   \cdot   \delta_{{\mathrm{0}}}   +  \delta_{{\mathrm{3}}}  +   \dmGLnt{r_{{\mathrm{3}}}}   \cdot   \delta_{{\mathrm{0}}}  ,  \delta'_{{\mathrm{1}}}   +  \delta'_{{\mathrm{3}}}
    \odot \Delta  \dmGLsym{,}  \dmGLsym{[}  \dmGLnt{t_{{\mathrm{0}}}}  \dmGLsym{/}  \dmGLmv{x}  \dmGLsym{]}  \Delta' \\[-1ex]
    \vdash_\mathsf{G} \sfoperator{let} \, \mathbf{j} \, \dmGLsym{=}  \dmGLsym{[}  \dmGLnt{t_{{\mathrm{0}}}}  \dmGLsym{/}  \dmGLmv{x}  \dmGLsym{]}  \dmGLnt{t_{{\mathrm{1}}}} \, \sfoperator{in} \, \dmGLsym{[}  \dmGLnt{t_{{\mathrm{0}}}}  \dmGLsym{/}  \dmGLmv{x}  \dmGLsym{]}  \dmGLnt{t_{{\mathrm{3}}}}
    : \dmGLsym{[}  \dmGLnt{t_{{\mathrm{0}}}}  \dmGLsym{/}  \dmGLmv{x}  \dmGLsym{]}  \dmGLsym{[}  \mathbf{j}  \dmGLsym{/}  \dmGLmv{x'}  \dmGLsym{]}  \dmGLnt{X'}
  \end{gather*}
  Since $ \mathbf{j} $ has no free variables, the terms
  $ \dmGLsym{[}  \dmGLnt{t_{{\mathrm{0}}}}  \dmGLsym{/}  \dmGLmv{x}  \dmGLsym{]}  \dmGLsym{(}  \sfoperator{let} \, \mathbf{j} \, \dmGLsym{=}  \dmGLnt{t_{{\mathrm{1}}}} \, \sfoperator{in} \, \dmGLnt{t_{{\mathrm{3}}}}  \dmGLsym{)} $ and
  $ \sfoperator{let} \, \mathbf{j} \, \dmGLsym{=}  \dmGLsym{[}  \dmGLnt{t_{{\mathrm{0}}}}  \dmGLsym{/}  \dmGLmv{x}  \dmGLsym{]}  \dmGLnt{t_{{\mathrm{1}}}} \, \sfoperator{in} \, \dmGLsym{[}  \dmGLnt{t_{{\mathrm{0}}}}  \dmGLsym{/}  \dmGLmv{x}  \dmGLsym{]}  \dmGLnt{t_{{\mathrm{3}}}} $
  are equal, which concludes this case.

  \proofitem{Case \dmGLdruleGXXradjIntroName}
  \[
    \inferrule{
      \delta ,  \dmGLnt{r}  ,  \delta'   \odot  \Delta  \dmGLsym{,}  \dmGLmv{x}  \dmGLsym{:}  \dmGLnt{X}  \dmGLsym{,}  \Delta'  \dmGLsym{;}  \emptyset  \vdash_\mathsf{M}  \dmGLnt{l}  \dmGLsym{:}  \dmGLnt{A}
    }
    {
      \delta ,  \dmGLnt{r}  ,  \delta'   \odot  \Delta  \dmGLsym{,}  \dmGLmv{x}  \dmGLsym{:}  \dmGLnt{X}  \dmGLsym{,}  \Delta'  \vdash_\mathsf{G}  \mathcal{G} \, \dmGLnt{l}  \dmGLsym{:}  \mathcal{G} \, \dmGLnt{A}
    }
  \]
  Applying the induction hypothesis to the premise of the rule,
  we get
  \[
    \delta ,   \dmGLnt{r}   \cdot   \delta_{{\mathrm{0}}}   ,  \delta'   \odot  \Delta  \dmGLsym{,}  \dmGLsym{[}  \dmGLnt{t_{{\mathrm{0}}}}  \dmGLsym{/}  \dmGLmv{x}  \dmGLsym{]}  \Delta'  \dmGLsym{;}  \emptyset  \vdash_\mathsf{M}  \dmGLsym{[}  \dmGLnt{t_{{\mathrm{0}}}}  \dmGLsym{/}  \dmGLmv{x}  \dmGLsym{]}  \dmGLnt{l}  \dmGLsym{:}  \dmGLsym{[}  \dmGLnt{t_{{\mathrm{0}}}}  \dmGLsym{/}  \dmGLmv{x}  \dmGLsym{]}  \dmGLnt{A}
  \]
  Applying \dmGLdruleGXXradjIntroName to this judgment,
  concludes this case, once we observe that
  $ \dmGLsym{[}  \dmGLnt{t_{{\mathrm{0}}}}  \dmGLsym{/}  \dmGLmv{x}  \dmGLsym{]}  \dmGLsym{(}  \mathcal{G} \, \dmGLnt{l}  \dmGLsym{)} $ is equal to $ \mathcal{G} \, \dmGLsym{[}  \dmGLnt{t_{{\mathrm{0}}}}  \dmGLsym{/}  \dmGLmv{x}  \dmGLsym{]}  \dmGLnt{l} $
  and similarly for $ \dmGLnt{A} $ instead of $ \dmGLnt{l} $.

  The remaining cases in the graded fragment follow analogously and
  are therefore omitted.

  \proofitem{Case \dmGLdruleMXXidName}
  In the mixed graded case we have
  \[
	\inferrule
    {
      \dmGLmv{y} \, \notin \, \sfoperator{dom} \, \dmGLsym{(}  \Delta  \dmGLsym{,}  \dmGLmv{x}  \dmGLsym{:}  \dmGLnt{X}  \dmGLsym{,}  \Delta'  \dmGLsym{)}\\
      \delta ,  \dmGLnt{r}  ,  \delta'   \odot  \Delta  \dmGLsym{,}  \dmGLmv{x}  \dmGLsym{:}  \dmGLnt{X}  \dmGLsym{,}  \Delta'  \vdash_\mathsf{G}  \dmGLnt{A}  \dmGLsym{:}   \mathsf{Linear}
    }
    {
      \vec{0} ,  0  ,  \vec{0}   \odot  \Delta  \dmGLsym{,}  \dmGLmv{x}  \dmGLsym{:}  \dmGLnt{X}  \dmGLsym{,}  \Delta'  \dmGLsym{;}  \dmGLmv{y}  \dmGLsym{:}  \dmGLnt{A}  \vdash_\mathsf{M}  \dmGLmv{y}  \dmGLsym{:}  \dmGLnt{A}
    }
  \]
  and we need to prove
  $ \vec{0} ,  \vec{0}   \odot  \Delta  \dmGLsym{,}  \dmGLsym{[}  \dmGLnt{t_{{\mathrm{0}}}}  \dmGLsym{/}  \dmGLmv{x}  \dmGLsym{]}  \Delta'  \dmGLsym{;}  \dmGLmv{y}  \dmGLsym{:}  \dmGLsym{[}  \dmGLnt{t_{{\mathrm{0}}}}  \dmGLsym{/}  \dmGLmv{x}  \dmGLsym{]}  \dmGLnt{A}  \vdash_\mathsf{M}  \dmGLmv{y}  \dmGLsym{:}  \dmGLsym{[}  \dmGLnt{t_{{\mathrm{0}}}}  \dmGLsym{/}  \dmGLmv{x}  \dmGLsym{]}  \dmGLnt{A} $
  which follows immediately from the inductive hypothesis and the
  \drulename[M]{id} rule.
  There is nothing to do for the mixed linear case.

  \proofitem{Case \dmGLdruleMXXsubusageName}
  Both the mixed linear and mixed graded cases are analogous
  to the \dmGLdruleGXXsubusageName case.

  \proofitem{Case \dmGLdruleMXXweakName}
  The mixed graded case is analogous to the purely graded case.
  For the mixed linear case,
  the rule \dmGLdruleMXXweakName pattern matches as
  \[
    \inferrule
    {
      \delta  \odot  \Delta  \dmGLsym{;}   \Gamma  \dmGLsym{,}  \dmGLmv{y}  \dmGLsym{:}  \dmGLnt{A}  ,  \Gamma'   \vdash_\mathsf{M}  \dmGLnt{l}  \dmGLsym{:}  \dmGLnt{B}
      \\
      \delta'  \odot  \Delta  \vdash_\mathsf{G}  \dmGLnt{X}  \dmGLsym{:}   \mathsf{Type}
    }
    {
      \delta ,  0   \odot  \Delta  \dmGLsym{,}  \dmGLmv{x}  \dmGLsym{:}  \dmGLnt{X}  \dmGLsym{;}   \Gamma  \dmGLsym{,}  \dmGLmv{y}  \dmGLsym{:}  \dmGLnt{A}  ,  \Gamma'   \vdash_\mathsf{M}  \dmGLnt{l}  \dmGLsym{:}  \dmGLnt{B}
    }
  \]
  By induction on the first hypothesis we have
  \[
    \delta  \odot  \Delta  \dmGLsym{;}    \Gamma  ,  \Gamma_{{\mathrm{0}}}   ,  \Gamma'   \vdash_\mathsf{M}  \dmGLsym{[}  \dmGLnt{l_{{\mathrm{0}}}}  \dmGLsym{/}  \dmGLmv{y}  \dmGLsym{]}  \dmGLnt{l}  \dmGLsym{:}  \dmGLnt{B}
  \]
  Weakening this by $ X $ produces the desired result.

  \proofitem{Case \dmGLdruleMXXexchangeName}
  The mixed graded case is straightforward and we omit the details.
  For the mixed linear case, consider the rule
  \[
    \inferrule
    {
      \delta  \odot  \Delta  \dmGLsym{;}   \Gamma  \dmGLsym{,}  \dmGLmv{x}  \dmGLsym{:}  \dmGLnt{B_{{\mathrm{1}}}}  \dmGLsym{,}  \dmGLmv{y}  \dmGLsym{:}  \dmGLnt{B_{{\mathrm{2}}}}  ,  \Gamma'   \vdash_\mathsf{M}  \dmGLnt{l}  \dmGLsym{:}  \dmGLnt{C}
    }
    {
      \delta  \odot  \Delta  \dmGLsym{;}   \Gamma  \dmGLsym{,}  \dmGLmv{y}  \dmGLsym{:}  \dmGLnt{B_{{\mathrm{2}}}}  \dmGLsym{,}  \dmGLmv{x}  \dmGLsym{:}  \dmGLnt{B_{{\mathrm{1}}}}  ,  \Gamma'   \vdash_\mathsf{M}  \dmGLnt{l}  \dmGLsym{:}  \dmGLnt{C}
    }\text{\dmGLdruleMXXexchangeName}
  \]
  where we assume that the type $ \dmGLnt{A} $ appears somewhere in the context
  $ \Gamma  \dmGLsym{,}  \dmGLmv{x}  \dmGLsym{:}  \dmGLnt{B_{{\mathrm{1}}}}  \dmGLsym{,}  \dmGLmv{y}  \dmGLsym{:}  \dmGLnt{B_{{\mathrm{2}}}}  ,  \Gamma' $.
  The cases where $ \dmGLnt{A} $ appears in $ \Gamma $ or $ \Gamma' $
  are straightforward.
  The cases that remain are that $ \dmGLnt{A} $ is $ \dmGLnt{B_{{\mathrm{1}}}} $ or $ \dmGLnt{B_{{\mathrm{2}}}} $.
  These cases are parallel,
  so we only treat the case that $ \dmGLnt{A} $ is $ \dmGLnt{B_{{\mathrm{2}}}} $.
  The inductive hypothesis yields
  \[
    \delta  \odot  \Delta  \dmGLsym{;}    \Gamma  \dmGLsym{,}  \dmGLmv{x}  \dmGLsym{:}  \dmGLnt{B_{{\mathrm{1}}}}  ,  \Gamma_{{\mathrm{0}}}   ,  \Gamma'   \vdash_\mathsf{M}  \dmGLsym{[}  \dmGLnt{l_{{\mathrm{0}}}}  \dmGLsym{/}  \dmGLmv{y}  \dmGLsym{]}  \dmGLnt{l}  \dmGLsym{:}  \dmGLnt{C}
  \]
  and we want to show that
  \[
    \delta  \odot  \Delta  \dmGLsym{;}    \Gamma  ,  \Gamma_{{\mathrm{0}}}   \dmGLsym{,}  \dmGLmv{x}  \dmGLsym{:}  \dmGLnt{B_{{\mathrm{1}}}}  ,  \Gamma'   \vdash_\mathsf{M}  \dmGLsym{[}  \dmGLnt{l_{{\mathrm{0}}}}  \dmGLsym{/}  \dmGLmv{y}  \dmGLsym{]}  \dmGLnt{l}  \dmGLsym{:}  \dmGLnt{C}
  \]
  This is done by repeatedly applying the exchange rule
  to move $ \dmGLnt{B_{{\mathrm{1}}}} $ past all types in $ \Gamma_{{\mathrm{0}}} $.

  \proofitem{\drulename[M]{tensorIntro}}
  For the mixed graded case we have
  \[
	\inferrule{
      \delta_{{\mathrm{1}}} ,  \dmGLnt{r_{{\mathrm{1}}}}  ,  \delta'_{{\mathrm{1}}}   \odot  \Delta  \dmGLsym{,}  \dmGLmv{x}  \dmGLsym{:}  \dmGLnt{X}  \dmGLsym{,}  \Delta'  \dmGLsym{;}  \Gamma_{{\mathrm{1}}}  \vdash_\mathsf{M}  \dmGLnt{l_{{\mathrm{1}}}}  \dmGLsym{:}  \dmGLnt{A}\\
      \delta_{{\mathrm{2}}} ,  \dmGLnt{r_{{\mathrm{2}}}}  ,  \delta'_{{\mathrm{2}}}   \odot  \Delta  \dmGLsym{,}  \dmGLmv{x}  \dmGLsym{:}  \dmGLnt{X}  \dmGLsym{,}  \Delta'  \dmGLsym{;}  \Gamma_{{\mathrm{2}}}  \vdash_\mathsf{M}  \dmGLnt{l_{{\mathrm{2}}}}  \dmGLsym{:}  \dmGLnt{B}\\
    }
    {
      \dmGLsym{(}    \delta_{{\mathrm{1}}} ,  \dmGLnt{r_{{\mathrm{1}}}}  ,  \delta'_{{\mathrm{1}}}   \dmGLsym{)}  +  \dmGLsym{(}    \delta_{{\mathrm{2}}} ,  \dmGLnt{r_{{\mathrm{2}}}}  ,  \delta'_{{\mathrm{2}}}   \dmGLsym{)}  \odot  \Delta  \dmGLsym{,}  \dmGLmv{x}  \dmGLsym{:}  \dmGLnt{X}  \dmGLsym{,}  \Delta'  \dmGLsym{;}   \Gamma_{{\mathrm{1}}}  ,  \Gamma_{{\mathrm{2}}}   \vdash_\mathsf{M}  \dmGLsym{(}  \dmGLnt{l_{{\mathrm{1}}}}  \dmGLsym{,}  \dmGLnt{l_{{\mathrm{2}}}}  \dmGLsym{)}  \dmGLsym{:}  \dmGLnt{A}  \otimes  \dmGLnt{B}
    }
  \]
  By the inductive hypothesis we have the judgments
  \begin{gather*}
    \delta_{{\mathrm{1}}}  +   \dmGLnt{r_{{\mathrm{1}}}}   \cdot   \delta_{{\mathrm{0}}}  ,  \delta'_{{\mathrm{1}}}   \odot  \Delta  \dmGLsym{,}  \dmGLsym{[}  \dmGLnt{t_{{\mathrm{0}}}}  \dmGLsym{/}  \dmGLmv{x}  \dmGLsym{]}  \Delta'  \dmGLsym{;}  \dmGLsym{[}  \dmGLnt{t_{{\mathrm{0}}}}  \dmGLsym{/}  \dmGLmv{x}  \dmGLsym{]}  \Gamma_{{\mathrm{1}}}  \vdash_\mathsf{M}  \dmGLsym{[}  \dmGLnt{t_{{\mathrm{0}}}}  \dmGLsym{/}  \dmGLmv{x}  \dmGLsym{]}  \dmGLnt{l_{{\mathrm{1}}}}  \dmGLsym{:}  \dmGLsym{[}  \dmGLnt{t_{{\mathrm{0}}}}  \dmGLsym{/}  \dmGLmv{x}  \dmGLsym{]}  \dmGLnt{A}\\
    \delta_{{\mathrm{2}}}  +   \dmGLnt{r_{{\mathrm{2}}}}   \cdot   \delta_{{\mathrm{0}}}  ,  \delta'_{{\mathrm{2}}}   \odot  \Delta  \dmGLsym{,}  \dmGLsym{[}  \dmGLnt{t_{{\mathrm{0}}}}  \dmGLsym{/}  \dmGLmv{x}  \dmGLsym{]}  \Delta'  \dmGLsym{;}  \dmGLsym{[}  \dmGLnt{t_{{\mathrm{0}}}}  \dmGLsym{/}  \dmGLmv{x}  \dmGLsym{]}  \Gamma_{{\mathrm{2}}}  \vdash_\mathsf{M}  \dmGLsym{[}  \dmGLnt{t_{{\mathrm{0}}}}  \dmGLsym{/}  \dmGLmv{x}  \dmGLsym{]}  \dmGLnt{l_{{\mathrm{2}}}}  \dmGLsym{:}  \dmGLsym{[}  \dmGLnt{t_{{\mathrm{0}}}}  \dmGLsym{/}  \dmGLmv{x}  \dmGLsym{]}  \dmGLnt{B}
  \end{gather*}
  Applying \drulename[M]{tensorIntro} yields
  \begin{align*}
    & \dmGLsym{(}  \delta_{{\mathrm{1}}}  +  \delta_{{\mathrm{2}}}  \dmGLsym{)}  +    \dmGLsym{(}  \dmGLnt{r_{{\mathrm{1}}}}  +  \dmGLnt{r_{{\mathrm{2}}}}  \dmGLsym{)}   \cdot   \delta_{{\mathrm{0}}}   ,  \delta'_{{\mathrm{1}}}   +  \delta'_{{\mathrm{2}}} \\
    & \quad \at \Delta  \dmGLsym{,}  \dmGLsym{[}  \dmGLnt{t_{{\mathrm{0}}}}  \dmGLsym{/}  \dmGLmv{x}  \dmGLsym{]}  \Delta' ; \dmGLsym{[}  \dmGLnt{t_{{\mathrm{0}}}}  \dmGLsym{/}  \dmGLmv{x}  \dmGLsym{]}  \Gamma_{{\mathrm{1}}}  ,  \dmGLsym{[}  \dmGLnt{t_{{\mathrm{0}}}}  \dmGLsym{/}  \dmGLmv{x}  \dmGLsym{]}  \Gamma_{{\mathrm{2}}} \\
    & \quad \proves_{\ms M} \dmGLsym{(}  \dmGLsym{[}  \dmGLnt{t_{{\mathrm{0}}}}  \dmGLsym{/}  \dmGLmv{x}  \dmGLsym{]}  \dmGLnt{l_{{\mathrm{1}}}}  \dmGLsym{,}  \dmGLsym{[}  \dmGLnt{t_{{\mathrm{0}}}}  \dmGLsym{/}  \dmGLmv{x}  \dmGLsym{]}  \dmGLnt{l_{{\mathrm{2}}}}  \dmGLsym{)} : \dmGLsym{(}  \dmGLsym{[}  \dmGLnt{t_{{\mathrm{0}}}}  \dmGLsym{/}  \dmGLmv{x}  \dmGLsym{]}  \dmGLnt{A}  \dmGLsym{)}  \otimes  \dmGLsym{(}  \dmGLsym{[}  \dmGLnt{t_{{\mathrm{0}}}}  \dmGLsym{/}  \dmGLmv{x}  \dmGLsym{]}  \dmGLnt{B}  \dmGLsym{)}
  \end{align*}
  We conclude by noting that the terms
  $ \dmGLsym{(}  \dmGLsym{[}  \dmGLnt{t_{{\mathrm{0}}}}  \dmGLsym{/}  \dmGLmv{x}  \dmGLsym{]}  \dmGLnt{l_{{\mathrm{1}}}}  \dmGLsym{,}  \dmGLsym{[}  \dmGLnt{t_{{\mathrm{0}}}}  \dmGLsym{/}  \dmGLmv{x}  \dmGLsym{]}  \dmGLnt{l_{{\mathrm{2}}}}  \dmGLsym{)} $ and $ \dmGLsym{[}  \dmGLnt{t_{{\mathrm{0}}}}  \dmGLsym{/}  \dmGLmv{x}  \dmGLsym{]}  \dmGLsym{(}  \dmGLnt{l_{{\mathrm{1}}}}  \dmGLsym{,}  \dmGLnt{l_{{\mathrm{2}}}}  \dmGLsym{)} $ are equal,
  and similarly for the types
  $ \dmGLsym{(}  \dmGLsym{[}  \dmGLnt{t_{{\mathrm{0}}}}  \dmGLsym{/}  \dmGLmv{x}  \dmGLsym{]}  \dmGLnt{A}  \dmGLsym{)}  \otimes  \dmGLsym{(}  \dmGLsym{[}  \dmGLnt{t_{{\mathrm{0}}}}  \dmGLsym{/}  \dmGLmv{x}  \dmGLsym{]}  \dmGLnt{B}  \dmGLsym{)} $ and $ \dmGLsym{[}  \dmGLnt{t_{{\mathrm{0}}}}  \dmGLsym{/}  \dmGLmv{x}  \dmGLsym{]}  \dmGLsym{(}  \dmGLnt{A}  \otimes  \dmGLnt{B}  \dmGLsym{)} $.

  For the mixed linear case of this rule we have
  \[
	\inferrule
    {
      \delta_{{\mathrm{1}}}  \odot  \Delta  \dmGLsym{;}  \Gamma_{{\mathrm{1}}}  \vdash_\mathsf{M}  \dmGLnt{l_{{\mathrm{1}}}}  \dmGLsym{:}  \dmGLnt{B_{{\mathrm{1}}}} \\
      \delta_{{\mathrm{2}}}  \odot  \Delta  \dmGLsym{;}  \Gamma_{{\mathrm{2}}}  \vdash_\mathsf{M}  \dmGLnt{l_{{\mathrm{2}}}}  \dmGLsym{:}  \dmGLnt{B_{{\mathrm{2}}}} \\
    }
    {
      \delta_{{\mathrm{1}}}  +  \delta_{{\mathrm{2}}}  \odot  \Delta  \dmGLsym{;}   \Gamma_{{\mathrm{1}}}  ,  \Gamma_{{\mathrm{2}}}   \vdash_\mathsf{M}  \dmGLsym{(}  \dmGLnt{l_{{\mathrm{1}}}}  \dmGLsym{,}  \dmGLnt{l_{{\mathrm{2}}}}  \dmGLsym{)}  \dmGLsym{:}  \dmGLnt{B_{{\mathrm{1}}}}  \otimes  \dmGLnt{B_{{\mathrm{2}}}}
    }
  \]
  and we assume that $ x : A $ occurs somewhere in the linear context $ \Gamma_{{\mathrm{1}}}  ,  \Gamma_{{\mathrm{2}}} $.
  There are two symmetric cases, depending on whether $ x: A $ occurs in $ \Gamma_{{\mathrm{1}}} $ or $ \Gamma_{{\mathrm{2}}} $.
  We consider the case where $ x : A $ occurs in $ \Gamma_{{\mathrm{1}}} $,
  i.e. $ \Gamma_{{\mathrm{1}}} $ has the form $ \Gamma'_{{\mathrm{1}}}  \dmGLsym{,}  \dmGLmv{x}  \dmGLsym{:}  \dmGLnt{A}  ,  \Gamma''_{{\mathrm{1}}} $.
  From the inductive hypothesis we no get
  $ \delta_{{\mathrm{1}}}  \odot  \Delta  \dmGLsym{;}   \Gamma'_{{\mathrm{1}}}  ,  \Gamma''_{{\mathrm{1}}}   \vdash_\mathsf{M}  \dmGLsym{[}  \dmGLnt{l_{{\mathrm{0}}}}  \dmGLsym{/}  \dmGLmv{x}  \dmGLsym{]}  \dmGLnt{l_{{\mathrm{1}}}}  \dmGLsym{:}  \dmGLnt{B_{{\mathrm{1}}}} $.
  Applying \drulename[M]{tensorIntro} now yields
  $ \delta_{{\mathrm{1}}}  +  \delta_{{\mathrm{2}}}  \odot  \Delta  \dmGLsym{;}    \Gamma'_{{\mathrm{1}}}  ,  \Gamma''_{{\mathrm{1}}}   ,  \Gamma_{{\mathrm{2}}}   \vdash_\mathsf{M}  \dmGLsym{(}  \dmGLsym{[}  \dmGLnt{l_{{\mathrm{0}}}}  \dmGLsym{/}  \dmGLmv{x}  \dmGLsym{]}  \dmGLnt{l_{{\mathrm{1}}}}  \dmGLsym{,}  \dmGLnt{l_{{\mathrm{2}}}}  \dmGLsym{)}  \dmGLsym{:}  \dmGLnt{B_{{\mathrm{1}}}}  \otimes  \dmGLnt{B_{{\mathrm{2}}}} $.
  Since we rename variables when concatenating linear contexts,
  the terms $ \dmGLnt{l_{{\mathrm{1}}}} $ and $ \dmGLnt{l_{{\mathrm{2}}}} $ have disjoint sets of free variables,
  and it follows that the terms
  $ \dmGLsym{(}  \dmGLsym{[}  \dmGLnt{l_{{\mathrm{0}}}}  \dmGLsym{/}  \dmGLmv{x}  \dmGLsym{]}  \dmGLnt{l_{{\mathrm{1}}}}  \dmGLsym{,}  \dmGLnt{l_{{\mathrm{2}}}}  \dmGLsym{)} $ and $ \dmGLsym{[}  \dmGLnt{l_{{\mathrm{0}}}}  \dmGLsym{/}  \dmGLmv{x}  \dmGLsym{]}  \dmGLsym{(}  \dmGLnt{l_{{\mathrm{1}}}}  \dmGLsym{,}  \dmGLnt{l_{{\mathrm{2}}}}  \dmGLsym{)} $ are equal.
  This concludes this case.

  \proofitem{\drulename[M]{tensorElim}}
  For the mixed graded case we have
  \[
    \inferrule{
      \delta_{{\mathrm{2}}} ,  \dmGLnt{r_{{\mathrm{2}}}}  ,  \delta'_{{\mathrm{2}}}   \odot  \Delta  \dmGLsym{,}  \dmGLmv{x}  \dmGLsym{:}  \dmGLnt{X}  \dmGLsym{,}  \Delta'  \dmGLsym{;}  \Gamma_{{\mathrm{2}}}  \vdash_\mathsf{M}  \dmGLnt{l_{{\mathrm{1}}}}  \dmGLsym{:}  \dmGLnt{B_{{\mathrm{1}}}}  \otimes  \dmGLnt{B_{{\mathrm{2}}}} \\
      \delta_{{\mathrm{1}}} ,  \dmGLnt{r_{{\mathrm{1}}}}  ,  \delta'_{{\mathrm{1}}}   \odot  \Delta  \dmGLsym{,}  \dmGLmv{x}  \dmGLsym{:}  \dmGLnt{X}  \dmGLsym{,}  \Delta'  \dmGLsym{;}   \Gamma_{{\mathrm{1}}}  \dmGLsym{,}  \dmGLmv{y_{{\mathrm{1}}}}  \dmGLsym{:}  \dmGLnt{B_{{\mathrm{1}}}}  \dmGLsym{,}  \dmGLmv{y_{{\mathrm{2}}}}  \dmGLsym{:}  \dmGLnt{B_{{\mathrm{2}}}}  ,  \Gamma_{{\mathrm{3}}}   \vdash_\mathsf{M}  \dmGLnt{l_{{\mathrm{2}}}}  \dmGLsym{:}  \dmGLnt{C}
    }
    {
      \dmGLsym{(}    \delta_{{\mathrm{1}}} ,  \dmGLnt{r_{{\mathrm{1}}}}  ,  \delta'_{{\mathrm{1}}}   \dmGLsym{)}  +  \dmGLsym{(}    \delta_{{\mathrm{2}}} ,  \dmGLnt{r_{{\mathrm{2}}}}  ,  \delta'_{{\mathrm{2}}}   \dmGLsym{)} \at
      \Delta  \dmGLsym{,}  \dmGLmv{x}  \dmGLsym{:}  \dmGLnt{X}  \dmGLsym{,}  \Delta' ; \Gamma_{{\mathrm{1}}}  ,  \Gamma_{{\mathrm{2}}}   ,  \Gamma_{{\mathrm{3}}}
      \\
      \vdash_\mathsf{M} \sfoperator{let} \, \dmGLsym{(}  \dmGLmv{y_{{\mathrm{1}}}}  \dmGLsym{,}  \dmGLmv{y_{{\mathrm{2}}}}  \dmGLsym{)}  \dmGLsym{=}  \dmGLnt{l_{{\mathrm{1}}}} \, \sfoperator{in} \, \dmGLnt{l_{{\mathrm{2}}}}  : \dmGLnt{C}
    }
  \]
  and we need to prove
  \begin{align*}
    & \dmGLsym{(}  \delta_{{\mathrm{1}}}  +  \delta_{{\mathrm{2}}}  \dmGLsym{)}  +   \dmGLsym{(}  \dmGLnt{r_{{\mathrm{1}}}}  +  \dmGLnt{r_{{\mathrm{2}}}}  \dmGLsym{)}   \cdot   \delta_{{\mathrm{0}}}  ,  \delta'_{{\mathrm{1}}}   +  \delta'_{{\mathrm{2}}} \\
    & \quad \at \Delta  \dmGLsym{,}  \dmGLsym{[}  \dmGLnt{t_{{\mathrm{0}}}}  \dmGLsym{/}  \dmGLmv{x}  \dmGLsym{]}  \Delta' ; \dmGLsym{[}  \dmGLnt{t_{{\mathrm{0}}}}  \dmGLsym{/}  \dmGLmv{x}  \dmGLsym{]}  \dmGLsym{(}   \Gamma_{{\mathrm{1}}}  ,   \Gamma_{{\mathrm{2}}}  ,  \Gamma_{{\mathrm{3}}}    \dmGLsym{)} \\
    & \quad \proves_{\ms M} \dmGLsym{[}  \dmGLnt{t_{{\mathrm{0}}}}  \dmGLsym{/}  \dmGLmv{x}  \dmGLsym{]}  \dmGLsym{(}  \sfoperator{let} \, \dmGLsym{(}  \dmGLmv{y_{{\mathrm{1}}}}  \dmGLsym{,}  \dmGLmv{y_{{\mathrm{2}}}}  \dmGLsym{)}  \dmGLsym{=}  \dmGLnt{l_{{\mathrm{1}}}} \, \sfoperator{in} \, \dmGLnt{l_{{\mathrm{2}}}}  \dmGLsym{)} : \dmGLsym{[}  \dmGLnt{t_{{\mathrm{0}}}}  \dmGLsym{/}  \dmGLmv{x}  \dmGLsym{]}  \dmGLnt{C}
  \end{align*}
  By the inductive hypothesis we have the judgments
  \begin{align*}
    & \delta_{{\mathrm{2}}}  +   \dmGLnt{r_{{\mathrm{2}}}}   \cdot   \delta_{{\mathrm{0}}}  ,  \delta'_{{\mathrm{2}}} \at \Delta  \dmGLsym{,}  \dmGLsym{[}  \dmGLnt{t_{{\mathrm{0}}}}  \dmGLsym{/}  \dmGLmv{x}  \dmGLsym{]}  \Delta' ; \dmGLsym{[}  \dmGLnt{t_{{\mathrm{0}}}}  \dmGLsym{/}  \dmGLmv{x}  \dmGLsym{]}  \Gamma_{{\mathrm{2}}}\\
    & \qquad \vdash_\mathsf{M} \dmGLsym{[}  \dmGLnt{t_{{\mathrm{0}}}}  \dmGLsym{/}  \dmGLmv{x}  \dmGLsym{]}  \dmGLnt{l_{{\mathrm{1}}}}  : \dmGLsym{[}  \dmGLnt{t_{{\mathrm{0}}}}  \dmGLsym{/}  \dmGLmv{x}  \dmGLsym{]}  \dmGLnt{B_{{\mathrm{1}}}}  \otimes  \dmGLsym{[}  \dmGLnt{t_{{\mathrm{0}}}}  \dmGLsym{/}  \dmGLmv{x}  \dmGLsym{]}  \dmGLnt{B_{{\mathrm{2}}}}\\
    & \delta_{{\mathrm{1}}}  +   \dmGLnt{r_{{\mathrm{1}}}}   \cdot   \delta_{{\mathrm{0}}}  ,  \delta'_{{\mathrm{1}}} \at \Delta  \dmGLsym{,}  \dmGLsym{[}  \dmGLnt{t_{{\mathrm{0}}}}  \dmGLsym{/}  \dmGLmv{x}  \dmGLsym{]}  \Delta' \\
    & \qquad ; \dmGLsym{[}  \dmGLnt{t_{{\mathrm{0}}}}  \dmGLsym{/}  \dmGLmv{x}  \dmGLsym{]}  \dmGLsym{(}   \Gamma_{{\mathrm{1}}}  \dmGLsym{,}  \dmGLmv{y_{{\mathrm{1}}}}  \dmGLsym{:}  \dmGLnt{B_{{\mathrm{1}}}}  \dmGLsym{,}  \dmGLmv{y_{{\mathrm{2}}}}  \dmGLsym{:}  \dmGLnt{B_{{\mathrm{2}}}}  ,  \Gamma_{{\mathrm{3}}}   \dmGLsym{)} \\
    & \qquad \vdash_\mathsf{M} \dmGLsym{[}  \dmGLnt{t_{{\mathrm{0}}}}  \dmGLsym{/}  \dmGLmv{x}  \dmGLsym{]}  \dmGLnt{l_{{\mathrm{2}}}} : \dmGLsym{[}  \dmGLnt{t_{{\mathrm{0}}}}  \dmGLsym{/}  \dmGLmv{x}  \dmGLsym{]}  \dmGLnt{C}
  \end{align*}
  from which the desired judgment follows.

  For the mixed linear case, we have
  \[
	\inferrule
    {
      \delta_{{\mathrm{2}}}  \odot  \Delta  \dmGLsym{;}  \Gamma_{{\mathrm{2}}}  \vdash_\mathsf{M}  \dmGLnt{l_{{\mathrm{1}}}}  \dmGLsym{:}  \dmGLnt{B_{{\mathrm{1}}}}  \otimes  \dmGLnt{B_{{\mathrm{2}}}}\\
      \delta_{{\mathrm{1}}}  \odot  \Delta  \dmGLsym{;}   \Gamma_{{\mathrm{1}}}  \dmGLsym{,}  \dmGLmv{y_{{\mathrm{1}}}}  \dmGLsym{:}  \dmGLnt{B_{{\mathrm{1}}}}  \dmGLsym{,}  \dmGLmv{y_{{\mathrm{2}}}}  \dmGLsym{:}  \dmGLnt{B_{{\mathrm{2}}}}  ,  \Gamma_{{\mathrm{3}}}   \vdash_\mathsf{M}  \dmGLnt{l_{{\mathrm{2}}}}  \dmGLsym{:}  \dmGLnt{C}
    }
    {
      \delta_{{\mathrm{1}}}  +  \delta_{{\mathrm{2}}}  \odot  \Delta  \dmGLsym{;}    \Gamma_{{\mathrm{1}}}  ,  \Gamma_{{\mathrm{2}}}   ,  \Gamma_{{\mathrm{3}}}   \vdash_\mathsf{M}  \sfoperator{let} \, \dmGLsym{(}  \dmGLmv{y_{{\mathrm{1}}}}  \dmGLsym{,}  \dmGLmv{y_{{\mathrm{2}}}}  \dmGLsym{)}  \dmGLsym{=}  \dmGLnt{l_{{\mathrm{1}}}} \, \sfoperator{in} \, \dmGLnt{l_{{\mathrm{2}}}}  \dmGLsym{:}  \dmGLnt{C}
    }
  \]
  We assume $ x : A $ occurs somewhere in the context
  $ \Gamma_{{\mathrm{1}}}  ,  \Gamma_{{\mathrm{2}}}   ,  \Gamma_{{\mathrm{3}}} $.
  We have three cases, depending on whether $ x : A $ occurs in
  $ \Gamma_{{\mathrm{1}}} $, $ \Gamma_{{\mathrm{2}}} $ or $ \Gamma_{{\mathrm{3}}} $.
  In the first case, where $ x : A $ occurs in $ \Gamma_{{\mathrm{1}}} $,
  the second assumption of the rule above has the form
  \[
    \delta_{{\mathrm{1}}}  \odot  \Delta  \dmGLsym{;}    \Gamma'_{{\mathrm{1}}}  \dmGLsym{,}  \dmGLmv{x}  \dmGLsym{:}  \dmGLnt{A}  ,  \Gamma''_{{\mathrm{1}}}   \dmGLsym{,}  \dmGLmv{y_{{\mathrm{1}}}}  \dmGLsym{:}  \dmGLnt{B}  \dmGLsym{,}  \dmGLmv{y_{{\mathrm{2}}}}  \dmGLsym{:}  \dmGLnt{B}  ,  \Gamma_{{\mathrm{2}}}   \vdash_\mathsf{M}  \dmGLnt{l_{{\mathrm{2}}}}  \dmGLsym{:}  \dmGLnt{C}
  \]
  By induction we have
  \[
	\delta_{{\mathrm{1}}}  \odot  \Delta  \dmGLsym{;}    \Gamma'_{{\mathrm{1}}}  ,  \Gamma''_{{\mathrm{1}}}   \dmGLsym{,}  \dmGLmv{y_{{\mathrm{1}}}}  \dmGLsym{:}  \dmGLnt{B}  \dmGLsym{,}  \dmGLmv{y_{{\mathrm{2}}}}  \dmGLsym{:}  \dmGLnt{B}  ,  \Gamma_{{\mathrm{2}}}   \vdash_\mathsf{M}  \dmGLsym{[}  \dmGLnt{l_{{\mathrm{0}}}}  \dmGLsym{/}  \dmGLmv{x}  \dmGLsym{]}  \dmGLnt{l_{{\mathrm{2}}}}  \dmGLsym{:}  \dmGLnt{C}
  \]
  and applying the rule \drulename[M]{tensorElim} to this judgment
  together with the other assumption of the rule above
  produces the judgment we need to prove.
  The other two cases proceed analogously.

  The other cases for the mixed fragment proceed analogously.
  We highlight the cases involving the adjoints $ \mathcal{G} $ and $ \mathcal{F} $,
  as they are nonstandard.

  \proofitem{Case \dmGLdruleMXXradjElimName}
  The mixed linear case is trivial.
  For the mixed graded case:
  \[
    \inferrule
    {
      \delta ,  \dmGLnt{r}  ,  \delta'   \odot  \Delta  \dmGLsym{,}  \dmGLmv{x}  \dmGLsym{:}  \dmGLnt{X}  \dmGLsym{,}  \Delta'  \vdash_\mathsf{G}  \dmGLnt{t}  \dmGLsym{:}  \mathcal{G} \, \dmGLnt{A}
    }
    {
      \delta ,  \dmGLnt{r}  ,  \delta'   \odot  \Delta  \dmGLsym{,}  \dmGLmv{x}  \dmGLsym{:}  \dmGLnt{X}  \dmGLsym{,}  \Delta'  \dmGLsym{;}  \emptyset  \vdash_\mathsf{M}  \mathcal G^{-1} \, \dmGLnt{t}  \dmGLsym{:}  \dmGLnt{A}
    }
  \]
  As in the previous case, we have that the terms
  $ \dmGLsym{[}  \dmGLnt{t_{{\mathrm{0}}}}  \dmGLsym{/}  \dmGLmv{x}  \dmGLsym{]}  \mathcal G^{-1} \, \dmGLnt{t} $ and $ \mathcal G^{-1} \, \dmGLsym{[}  \dmGLnt{t_{{\mathrm{0}}}}  \dmGLsym{/}  \dmGLmv{x}  \dmGLsym{]}  \dmGLnt{t} $ are equal,
  and $ \dmGLsym{[}  \dmGLnt{t_{{\mathrm{0}}}}  \dmGLsym{/}  \dmGLmv{x}  \dmGLsym{]}  \dmGLsym{(}  \mathcal{G} \, \dmGLnt{A}  \dmGLsym{)} $ and $ \mathcal{G} \, \dmGLsym{[}  \dmGLnt{t_{{\mathrm{0}}}}  \dmGLsym{/}  \dmGLmv{x}  \dmGLsym{]}  \dmGLnt{A} $ are also equal.
  So, by applying the induction hypothesis to the the premise of the rule above
  and then applying \dmGLdruleMXXradjElimName we get
  \[
    \inferrule
    {
      \delta  +   \dmGLnt{r}   \cdot   \delta_{{\mathrm{0}}}   +  \delta'  \odot  \Delta  \dmGLsym{,}  \dmGLsym{[}  \dmGLnt{t_{{\mathrm{0}}}}  \dmGLsym{/}  \dmGLmv{x}  \dmGLsym{]}  \Delta'  \vdash_\mathsf{G}  \dmGLsym{[}  \dmGLnt{t_{{\mathrm{0}}}}  \dmGLsym{/}  \dmGLmv{x}  \dmGLsym{]}  \dmGLnt{t}  \dmGLsym{:}  \mathcal{G} \, \dmGLsym{[}  \dmGLnt{t_{{\mathrm{0}}}}  \dmGLsym{/}  \dmGLmv{x}  \dmGLsym{]}  \dmGLnt{A}
    }
    {
      \delta  +   \dmGLnt{r}   \cdot   \delta_{{\mathrm{0}}}   +  \delta'  \odot  \Delta  \dmGLsym{,}  \dmGLsym{[}  \dmGLnt{t_{{\mathrm{0}}}}  \dmGLsym{/}  \dmGLmv{x}  \dmGLsym{]}  \Delta'  \dmGLsym{;}  \emptyset  \vdash_\mathsf{M}  \mathcal G^{-1} \, \dmGLsym{[}  \dmGLnt{t_{{\mathrm{0}}}}  \dmGLsym{/}  \dmGLmv{x}  \dmGLsym{]}  \dmGLnt{t}  \dmGLsym{:}  \dmGLsym{[}  \dmGLnt{t_{{\mathrm{0}}}}  \dmGLsym{/}  \dmGLmv{x}  \dmGLsym{]}  \dmGLnt{A}
    }
  \]
  Observe that the terms
  $ \mathcal G^{-1} \, \dmGLsym{[}  \dmGLnt{t_{{\mathrm{0}}}}  \dmGLsym{/}  \dmGLmv{x}  \dmGLsym{]}  \dmGLnt{t} $
  and
  $ \dmGLsym{[}  \dmGLnt{t_{{\mathrm{0}}}}  \dmGLsym{/}  \dmGLmv{x}  \dmGLsym{]}  \mathcal G^{-1} \, \dmGLnt{t} $
  are equal.
  This produces the desired result.

  \proofitem{Case \dmGLdruleMXXladjIntroName}
  In the mixed linear case we have
  \[
    \inferrule
    {
      \delta_{{\mathrm{3}}} ,  \dmGLnt{r_{{\mathrm{3}}}}   \odot  \Delta  \dmGLsym{,}  \dmGLmv{x}  \dmGLsym{:}  \dmGLnt{X}  \vdash_\mathsf{G}  \dmGLnt{A}  \dmGLsym{:}   \mathsf{Linear} \\
      \delta_{{\mathrm{1}}}  \odot  \Delta  \vdash_\mathsf{G}  \dmGLnt{t}  \dmGLsym{:}  \dmGLnt{X} \\
      \delta_{{\mathrm{2}}}  \odot  \Delta  \dmGLsym{;}   \Gamma  \dmGLsym{,}  \dmGLmv{y}  \dmGLsym{:}  \dmGLnt{B}  ,  \Gamma'   \vdash_\mathsf{M}  \dmGLnt{l}  \dmGLsym{:}  \dmGLsym{[}  \dmGLnt{t}  \dmGLsym{/}  \dmGLmv{x}  \dmGLsym{]}  \dmGLnt{A}
    }
    {
      \dmGLnt{r}   \cdot   \delta_{{\mathrm{1}}}   +  \delta_{{\mathrm{2}}}  \odot  \Delta  \dmGLsym{;}   \Gamma  \dmGLsym{,}  \dmGLmv{y}  \dmGLsym{:}  \dmGLnt{B}  ,  \Gamma'   \vdash_\mathsf{M}  \mathcal{F} \, \dmGLsym{(}  \dmGLnt{t}  \dmGLsym{,}  \dmGLnt{l}  \dmGLsym{)}  \dmGLsym{:}   \mathcal{F}  ( \dmGLmv{x}  :^{ \dmGLnt{r} }  \dmGLnt{X} ). \dmGLnt{A}
    }
  \]
  Keeping the first two premises unchanged,
  and applying the induction hypothesis to the third premise,
  we can use the rule \dmGLdruleMXXladjIntroName as follows
  \[
    \inferrule
    {
      \delta_{{\mathrm{3}}} ,  \dmGLnt{r_{{\mathrm{3}}}}   \odot  \Delta  \dmGLsym{,}  \dmGLmv{x}  \dmGLsym{:}  \dmGLnt{X}  \vdash_\mathsf{G}  \dmGLnt{A}  \dmGLsym{:}   \mathsf{Linear} \\
      \delta_{{\mathrm{1}}}  \odot  \Delta  \vdash_\mathsf{G}  \dmGLnt{t}  \dmGLsym{:}  \dmGLnt{X} \\
      \delta_{{\mathrm{0}}}  +  \delta_{{\mathrm{2}}}  \odot  \Delta  \dmGLsym{;}    \Gamma  ,  \Gamma_{{\mathrm{0}}}   ,  \Gamma'   \vdash_\mathsf{M}  \dmGLsym{[}  \dmGLnt{l_{{\mathrm{0}}}}  \dmGLsym{/}  \dmGLmv{y}  \dmGLsym{]}  \dmGLnt{l}  \dmGLsym{:}  \dmGLsym{[}  \dmGLnt{t}  \dmGLsym{/}  \dmGLmv{x}  \dmGLsym{]}  \dmGLnt{A}
    }
    {
      \dmGLnt{r}   \cdot   \delta_{{\mathrm{1}}}   +  \delta_{{\mathrm{2}}}  \odot  \Delta  \dmGLsym{;}    \Gamma  ,  \Gamma_{{\mathrm{0}}}   ,  \Gamma'   \vdash_\mathsf{M}  \mathcal{F} \, \dmGLsym{(}  \dmGLnt{t}  \dmGLsym{,}  \dmGLsym{[}  \dmGLnt{l_{{\mathrm{0}}}}  \dmGLsym{/}  \dmGLmv{y}  \dmGLsym{]}  \dmGLnt{l}  \dmGLsym{)}  \dmGLsym{:}   \mathcal{F}  ( \dmGLmv{x}  :^{ \dmGLnt{r} }  \dmGLnt{X} ). \dmGLnt{A}
    }
  \]
  To conclude,
  we only need to observe that
  $ \mathcal{F} \, \dmGLsym{(}  \dmGLnt{t}  \dmGLsym{,}  \dmGLsym{[}  \dmGLnt{l_{{\mathrm{0}}}}  \dmGLsym{/}  \dmGLmv{y}  \dmGLsym{]}  \dmGLnt{l}  \dmGLsym{)} $ is equal to
  $ \dmGLsym{[}  \dmGLnt{l_{{\mathrm{0}}}}  \dmGLsym{/}  \dmGLmv{y}  \dmGLsym{]}  \mathcal{F} \, \dmGLsym{(}  \dmGLnt{t}  \dmGLsym{,}  \dmGLnt{l}  \dmGLsym{)} $
  which is clear since $ \dmGLmv{y} $ cannot occur freely in $ \dmGLnt{t} $.
  
  The mixed graded case is more interesting:
  \[
    \inferrule
    {
      \delta_{{\mathrm{3}}} ,  \dmGLnt{r_{{\mathrm{3}}}}  ,  \delta'_{{\mathrm{3}}}  ,  \dmGLnt{r'_{{\mathrm{3}}}}   \odot  \Delta  \dmGLsym{,}  \dmGLmv{y}  \dmGLsym{:}  \dmGLnt{Y}  \dmGLsym{,}  \Delta'  \dmGLsym{,}  \dmGLmv{x}  \dmGLsym{:}  \dmGLnt{X}  \vdash_\mathsf{G}  \dmGLnt{A}  \dmGLsym{:}   \mathsf{Linear} \\
      \delta_{{\mathrm{1}}} ,  \dmGLnt{r_{{\mathrm{1}}}}  ,  \delta'_{{\mathrm{1}}}   \odot  \Delta  \dmGLsym{,}  \dmGLmv{y}  \dmGLsym{:}  \dmGLnt{Y}  \dmGLsym{,}  \Delta'  \vdash_\mathsf{G}  \dmGLnt{t}  \dmGLsym{:}  \dmGLnt{X} \\
      \delta_{{\mathrm{2}}} ,  \dmGLnt{r_{{\mathrm{2}}}}  ,  \delta'_{{\mathrm{2}}}   \odot  \Delta  \dmGLsym{,}  \dmGLmv{y}  \dmGLsym{:}  \dmGLnt{Y}  \dmGLsym{,}  \Delta'  \dmGLsym{;}  \Gamma  \vdash_\mathsf{M}  \dmGLnt{l}  \dmGLsym{:}  \dmGLsym{[}  \dmGLnt{t}  \dmGLsym{/}  \dmGLmv{x}  \dmGLsym{]}  \dmGLnt{A}
    }
    {
      \dmGLnt{r}   \cdot   \dmGLsym{(}    \delta_{{\mathrm{1}}} ,  \dmGLnt{r_{{\mathrm{1}}}}  ,  \delta'_{{\mathrm{1}}}   \dmGLsym{)}   +  \dmGLsym{(}    \delta_{{\mathrm{2}}} ,  \dmGLnt{r_{{\mathrm{2}}}}  ,  \delta'_{{\mathrm{2}}}   \dmGLsym{)} \odot \Delta  \dmGLsym{,}  \dmGLmv{y}  \dmGLsym{:}  \dmGLnt{Y}  \dmGLsym{,}  \Delta' ; \Gamma \\
      \vdash_\mathsf{M} \mathcal{F} \, \dmGLsym{(}  \dmGLnt{t}  \dmGLsym{,}  \dmGLnt{l}  \dmGLsym{)} : \mathcal{F}  ( \dmGLmv{x}  :^{ \dmGLnt{r} }  \dmGLnt{X} ). \dmGLnt{A}
    }
  \]
  applying the induction hypothesis to all premises,
  we get the following judgments:
  \begin{mathpar}
    \delta_{{\mathrm{3}}} ,  \dmGLnt{r_{{\mathrm{3}}}}  ,  \delta'_{{\mathrm{3}}}   \odot  \Delta  \dmGLsym{,}  \dmGLsym{[}  \dmGLnt{t_{{\mathrm{0}}}}  \dmGLsym{/}  \dmGLmv{y}  \dmGLsym{]}  \Delta'  \dmGLsym{,}  \dmGLmv{x}  \dmGLsym{:}  \dmGLsym{[}  \dmGLnt{t_{{\mathrm{0}}}}  \dmGLsym{/}  \dmGLmv{y}  \dmGLsym{]}  \dmGLnt{X}  \vdash_\mathsf{G}  \dmGLsym{[}  \dmGLnt{t_{{\mathrm{0}}}}  \dmGLsym{/}  \dmGLmv{y}  \dmGLsym{]}  \dmGLnt{A}  \dmGLsym{:}   \mathsf{Linear}
    \and
    \dmGLnt{r_{{\mathrm{1}}}}   \cdot   \delta_{{\mathrm{0}}}   +  \delta_{{\mathrm{1}}} ,  \delta'_{{\mathrm{1}}}   \odot  \Delta  \dmGLsym{,}  \dmGLsym{[}  \dmGLnt{t_{{\mathrm{0}}}}  \dmGLsym{/}  \dmGLmv{y}  \dmGLsym{]}  \Delta'  \vdash_\mathsf{G}  \dmGLsym{[}  \dmGLnt{t_{{\mathrm{0}}}}  \dmGLsym{/}  \dmGLmv{y}  \dmGLsym{]}  \dmGLnt{t}  \dmGLsym{:}  \dmGLsym{[}  \dmGLnt{t_{{\mathrm{0}}}}  \dmGLsym{/}  \dmGLmv{y}  \dmGLsym{]}  \dmGLnt{X}
    \and
    \dmGLnt{r_{{\mathrm{2}}}}   \cdot   \delta_{{\mathrm{0}}}   +  \delta_{{\mathrm{2}}} ,  \delta'_{{\mathrm{2}}}   \odot  \Delta  \dmGLsym{,}  \dmGLsym{[}  \dmGLnt{t_{{\mathrm{0}}}}  \dmGLsym{/}  \dmGLmv{y}  \dmGLsym{]}  \Delta'  \dmGLsym{;}  \dmGLsym{[}  \dmGLnt{t_{{\mathrm{0}}}}  \dmGLsym{/}  \dmGLmv{y}  \dmGLsym{]}  \Gamma  \vdash_\mathsf{M}  \dmGLsym{[}  \dmGLnt{t_{{\mathrm{0}}}}  \dmGLsym{/}  \dmGLmv{y}  \dmGLsym{]}  \dmGLnt{l}  \dmGLsym{:}  \dmGLsym{[}  \dmGLnt{t_{{\mathrm{0}}}}  \dmGLsym{/}  \dmGLmv{y}  \dmGLsym{]}  \dmGLsym{[}  \dmGLnt{t}  \dmGLsym{/}  \dmGLmv{x}  \dmGLsym{]}  \dmGLnt{A}
  \end{mathpar}
  We cannot immediately apply the rule \dmGLdruleMXXladjIntroName
  to these sequents, as the third one does not have the right form.
  We observe that
  $ \dmGLsym{[}  \dmGLnt{t_{{\mathrm{0}}}}  \dmGLsym{/}  \dmGLmv{y}  \dmGLsym{]}  \dmGLsym{[}  \dmGLnt{t}  \dmGLsym{/}  \dmGLmv{x}  \dmGLsym{]}  \dmGLnt{A} $ is equal to
  $ \dmGLsym{[}  \dmGLsym{[}  \dmGLnt{t_{{\mathrm{0}}}}  \dmGLsym{/}  \dmGLmv{y}  \dmGLsym{]}  \dmGLnt{t}  \dmGLsym{/}  \dmGLmv{x}  \dmGLsym{]}  \dmGLsym{[}  \dmGLnt{t_{{\mathrm{0}}}}  \dmGLsym{/}  \dmGLmv{y}  \dmGLsym{]}  \dmGLnt{A} $.
  Substituting this into the third sequent puts it into the right form
  to apply \dmGLdruleMXXladjIntroName and we obtain
  \[
    \inferrule
    {
      \delta_{{\mathrm{3}}} ,  \dmGLnt{r_{{\mathrm{3}}}}  ,  \delta'_{{\mathrm{3}}}   \odot  \Delta  \dmGLsym{,}  \dmGLsym{[}  \dmGLnt{t_{{\mathrm{0}}}}  \dmGLsym{/}  \dmGLmv{y}  \dmGLsym{]}  \Delta'  \dmGLsym{,}  \dmGLmv{x}  \dmGLsym{:}  \dmGLsym{[}  \dmGLnt{t_{{\mathrm{0}}}}  \dmGLsym{/}  \dmGLmv{y}  \dmGLsym{]}  \dmGLnt{X}  \vdash_\mathsf{G}  \dmGLsym{[}  \dmGLnt{t_{{\mathrm{0}}}}  \dmGLsym{/}  \dmGLmv{y}  \dmGLsym{]}  \dmGLnt{A}  \dmGLsym{:}   \mathsf{Linear}
      \\\\
      \dmGLnt{r_{{\mathrm{1}}}}   \cdot   \delta_{{\mathrm{0}}}   +  \delta_{{\mathrm{1}}} ,  \delta'_{{\mathrm{1}}}   \odot  \Delta  \dmGLsym{,}  \dmGLsym{[}  \dmGLnt{t_{{\mathrm{0}}}}  \dmGLsym{/}  \dmGLmv{y}  \dmGLsym{]}  \Delta'  \vdash_\mathsf{G}  \dmGLsym{[}  \dmGLnt{t_{{\mathrm{0}}}}  \dmGLsym{/}  \dmGLmv{y}  \dmGLsym{]}  \dmGLnt{t}  \dmGLsym{:}  \dmGLsym{[}  \dmGLnt{t_{{\mathrm{0}}}}  \dmGLsym{/}  \dmGLmv{y}  \dmGLsym{]}  \dmGLnt{X}
      \\\\
      \dmGLnt{r_{{\mathrm{2}}}}   \cdot   \delta_{{\mathrm{0}}}   +  \delta_{{\mathrm{2}}} ,  \delta'_{{\mathrm{2}}}
      \odot
      \Delta  \dmGLsym{,}  \dmGLsym{[}  \dmGLnt{t_{{\mathrm{0}}}}  \dmGLsym{/}  \dmGLmv{y}  \dmGLsym{]}  \Delta'
      ;
      \dmGLsym{[}  \dmGLnt{t_{{\mathrm{0}}}}  \dmGLsym{/}  \dmGLmv{y}  \dmGLsym{]}  \Gamma \\\\ \quad
      \vdash_\mathsf{M}
      \dmGLsym{[}  \dmGLnt{t_{{\mathrm{0}}}}  \dmGLsym{/}  \dmGLmv{y}  \dmGLsym{]}  \dmGLnt{l}
      :
      \dmGLsym{[}  \dmGLsym{[}  \dmGLnt{t_{{\mathrm{0}}}}  \dmGLsym{/}  \dmGLmv{y}  \dmGLsym{]}  \dmGLnt{t}  \dmGLsym{/}  \dmGLmv{x}  \dmGLsym{]}  \dmGLsym{[}  \dmGLnt{t_{{\mathrm{0}}}}  \dmGLsym{/}  \dmGLmv{y}  \dmGLsym{]}  \dmGLnt{A}
    }
    {
      \dmGLnt{r}   \cdot   \dmGLsym{(}   \dmGLnt{r_{{\mathrm{1}}}}   \cdot    \delta_{{\mathrm{0}}}  +  \delta_{{\mathrm{1}}} ,  \delta'_{{\mathrm{1}}}    \dmGLsym{)}   +  \dmGLsym{(}   \dmGLnt{r_{{\mathrm{2}}}}   \cdot    \delta_{{\mathrm{0}}}  +  \delta_{{\mathrm{2}}} ,  \delta'_{{\mathrm{2}}}    \dmGLsym{)} 
      \odot
      \Delta  \dmGLsym{,}  \dmGLsym{[}  \dmGLnt{t_{{\mathrm{0}}}}  \dmGLsym{/}  \dmGLmv{y}  \dmGLsym{]}  \Delta'
      ;
      \dmGLsym{[}  \dmGLnt{t_{{\mathrm{0}}}}  \dmGLsym{/}  \dmGLmv{y}  \dmGLsym{]}  \Gamma \\
      \vdash_\mathsf{M}
      \dmGLsym{[}  \dmGLnt{t_{{\mathrm{0}}}}  \dmGLsym{/}  \dmGLmv{y}  \dmGLsym{]}  \dmGLnt{l}
      :
      \dmGLsym{[}  \dmGLsym{[}  \dmGLnt{t_{{\mathrm{0}}}}  \dmGLsym{/}  \dmGLmv{y}  \dmGLsym{]}  \dmGLnt{t}  \dmGLsym{/}  \dmGLmv{x}  \dmGLsym{]}  \dmGLsym{[}  \dmGLnt{t_{{\mathrm{0}}}}  \dmGLsym{/}  \dmGLmv{y}  \dmGLsym{]}  \dmGLnt{A}
    }
  \]
  Applying the above equality in the conclusion,
  we obtain the desired result.

  \proofitem{Case \dmGLdruleMXXladjElimName}
  The mixed graded case is straightforward.
  In the mixed linear case, we assume that we have 
  \[
    \inferrule
    {
      \delta_{{\mathrm{3}}}  \odot  \Delta  \vdash_\mathsf{G}  \dmGLnt{C}  \dmGLsym{:}   \mathsf{Linear}
      \\
      \delta_{{\mathrm{1}}}  \odot  \Delta  \dmGLsym{;}  \Gamma_{{\mathrm{1}}}  \vdash_\mathsf{M}  \dmGLnt{l_{{\mathrm{1}}}}  \dmGLsym{:}   \mathcal{F}  ( \dmGLmv{x}  :^{ \dmGLnt{q} }  \dmGLnt{X} ). \dmGLnt{B}
      \\
      \delta_{{\mathrm{2}}} ,  \dmGLnt{q}   \odot  \Delta  \dmGLsym{,}  \dmGLmv{x}  \dmGLsym{:}  \dmGLnt{X}  \dmGLsym{;}  \Gamma_{{\mathrm{2}}}  \dmGLsym{,}  \dmGLmv{y}  \dmGLsym{:}  \dmGLnt{B}  \vdash_\mathsf{M}  \dmGLnt{l_{{\mathrm{2}}}}  \dmGLsym{:}  \dmGLnt{C}
    }
    {
      \delta_{{\mathrm{1}}}  +  \delta_{{\mathrm{2}}}  \odot  \Delta  \dmGLsym{;}   \Gamma_{{\mathrm{1}}}  ,  \Gamma_{{\mathrm{2}}}   \vdash_\mathsf{M}  \sfoperator{let} \, \mathcal{F} \, \dmGLsym{(}  \dmGLmv{x}  \dmGLsym{,}  \dmGLmv{y}  \dmGLsym{)}  \dmGLsym{=}  \dmGLnt{l_{{\mathrm{1}}}} \, \sfoperator{in} \, \dmGLnt{l_{{\mathrm{2}}}}  \dmGLsym{:}  \dmGLnt{C}
    }
  \]
  We assume that the type assignment $ \dmGLmv{z}  \dmGLsym{:}  \dmGLnt{A} $
  occurs in $ \Gamma_{{\mathrm{1}}}  ,  \Gamma_{{\mathrm{2}}} $, the linear context of the conclusion
  of this rule.
  There are now two subcases:

  \textit{Case}. $ \dmGLmv{z}  \dmGLsym{:}  \dmGLnt{A} $ occurs in $ \Gamma_{{\mathrm{1}}} $.
  Then $ \Gamma_{{\mathrm{1}}} $ is of the form
  $ \Gamma'_{{\mathrm{1}}}  \dmGLsym{,}  \dmGLmv{z}  \dmGLsym{:}  \dmGLnt{A}  ,  \Gamma''_{{\mathrm{1}}} $ and by the inductive hypothesis we obtain
  $ \delta_{{\mathrm{0}}}  +  \delta_{{\mathrm{1}}}  \odot  \Delta  \dmGLsym{;}    \Gamma'_{{\mathrm{1}}}  ,  \Gamma_{{\mathrm{0}}}   ,  \Gamma''_{{\mathrm{1}}}   \vdash_\mathsf{M}  \dmGLsym{[}  \dmGLnt{l_{{\mathrm{0}}}}  \dmGLsym{/}  \dmGLmv{z}  \dmGLsym{]}  \dmGLnt{l_{{\mathrm{1}}}}  \dmGLsym{:}   \mathcal{F}  ( \dmGLmv{x}  :^{ \dmGLnt{q} }  \dmGLnt{X} ). \dmGLnt{A} $.

  \textit{Case}. $ \dmGLmv{z}  \dmGLsym{:}  \dmGLnt{A} $ occurs in $ \Gamma_{{\mathrm{2}}} $.
  Then $ \Gamma_{{\mathrm{2}}} $ is of the form
  $ \Gamma'_{{\mathrm{2}}}  \dmGLsym{,}  \dmGLmv{z}  \dmGLsym{:}  \dmGLnt{A}  ,  \Gamma''_{{\mathrm{2}}} $ and by the inductive hypothesis we obtain
  $ \delta_{{\mathrm{0}}}  +  \delta_{{\mathrm{2}}}  \odot  \Delta  \dmGLsym{;}    \Gamma'_{{\mathrm{2}}}  ,  \Gamma_{{\mathrm{0}}}   ,  \Gamma''_{{\mathrm{2}}}   \vdash_\mathsf{M}  \dmGLsym{[}  \dmGLnt{l_{{\mathrm{0}}}}  \dmGLsym{/}  \dmGLmv{z}  \dmGLsym{]}  \dmGLnt{l_{{\mathrm{2}}}}  \dmGLsym{:}  \dmGLnt{C} $.

  In either case we conclude immediately,
  by applying the rule \dmGLdruleMXXladjElimName
  to the newly obtained judgment and the two judgments to which we
  did not apply the inductive hypothesis.

  The cases concerned with judgments about well-fromedness of graded and mixed contexts
  are straightforward.
\end{proof}

\RadjLeftRule*

\begin{proof}
  We have a derivation
  \[
    \inferrule*[Right = \text{\drulename[G]{var}}]{ \vdots }
    {
      \inferrule* [Right = \text{\dmGLdruleMXXradjElimName{}}]{
        \vec{0} ,   1    \odot  \Delta  \dmGLsym{,}  \dmGLmv{y}  \dmGLsym{:}  \mathcal{G} \, \dmGLnt{A}  \vdash_\mathsf{G}  \dmGLmv{y}  \dmGLsym{:}  \mathcal{G} \, \dmGLnt{A}
      }
      {
        \vec{0} ,   1    \odot  \Delta  \dmGLsym{,}  \dmGLmv{y}  \dmGLsym{:}  \mathcal{G} \, \dmGLnt{A}  \dmGLsym{;}  \emptyset  \vdash_\mathsf{M}  \mathcal G^{-1} \, \dmGLmv{y}  \dmGLsym{:}  \dmGLnt{A}
      }
    }
  \]
  Weakening the assumed sequent by $ \mathcal{G} \, \dmGLnt{A} $ we get
  \[
    \delta ,  0   \odot  \Delta  \dmGLsym{,}  \dmGLmv{y}  \dmGLsym{:}  \mathcal{G} \, \dmGLnt{A}  \dmGLsym{;}  \Gamma  \dmGLsym{,}  \dmGLmv{x}  \dmGLsym{:}  \dmGLnt{A}  \vdash_\mathsf{M}  \dmGLnt{l}  \dmGLsym{:}  \dmGLnt{B}
  \]
  Now by substituion, we get
  \[
    \dmGLsym{(}   \vec{0} ,   1    \dmGLsym{)}  +  \dmGLsym{(}   \delta ,  0   \dmGLsym{)}  \odot  \Delta  \dmGLsym{,}  \dmGLmv{y}  \dmGLsym{:}  \mathcal{G} \, \dmGLnt{A}  \dmGLsym{;}  \Gamma  \vdash_\mathsf{M}  \dmGLsym{[}  \mathcal G^{-1} \, \dmGLmv{y}  \dmGLsym{/}  \dmGLmv{x}  \dmGLsym{]}  \dmGLnt{l}  \dmGLsym{:}  \dmGLnt{B}
  \]
  as desired.
\end{proof}

In the following we will write

\[
  \delta  \odot  \Delta  \vdash_\mathsf{G}  \dmGLnt{X}  \equiv  \dmGLnt{Y}  \dmGLsym{:}   \mathsf{Type}
\]
to mean that the following hold simultaneously:
\begin{enumerate}
\item
  $ \delta  \odot  \Delta  \vdash_\mathsf{G}  \dmGLnt{X}  \dmGLsym{:}   \mathsf{Type} $
\item
  $ \delta  \odot  \Delta  \vdash_\mathsf{G}  \dmGLnt{Y}  \dmGLsym{:}   \mathsf{Type} $
\item
  $ \dmGLnt{X}  \equiv  \dmGLnt{Y} $
\end{enumerate}

\begin{lem}[Lambda Inversion]
  \label{lem:lambdaInversion}
  If $ \delta  \odot  \Delta  \vdash_\mathsf{G}  \dmGLsym{(}  \lambda  \dmGLmv{x}  \dmGLsym{.}  \dmGLnt{t}  \dmGLsym{)}  \dmGLsym{:}  \dmGLnt{T} $,
  then there exist $ \dmGLnt{X} $, $ \dmGLnt{Y} $, $ \dmGLnt{r} $ and $ \delta_{{\mathrm{0}}} $
  such that
  $ \delta_{{\mathrm{0}}}  \odot  \Delta  \vdash_\mathsf{G}  \dmGLnt{T}  \equiv   (  \dmGLmv{x}  :^{ \dmGLnt{r} }  \dmGLnt{X}  )  \to   \dmGLnt{Y}   \dmGLsym{:}   \mathsf{Type} $ and
  $ \delta ,  \dmGLnt{r}   \odot  \Delta  \dmGLsym{,}  \dmGLmv{x}  \dmGLsym{:}  \dmGLnt{X}  \vdash_\mathsf{G}  \dmGLnt{t}  \dmGLsym{:}  \dmGLnt{Y} $.
\end{lem}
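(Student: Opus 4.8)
The plan is to argue by induction on the derivation of $\delta \odot \Delta \vdash_\mathsf{G} (\lambda x . t) : T$. The first step is to observe that, among the graded typing rules of \Cref{fig:GTyping}, only four can have a conclusion assigning a type to a subject of the syntactic form $\lambda x. t$: the introduction rule \drulename[G]{lambda}, and the three rules that leave the subject term untouched, namely \drulename[G]{subusage}, \drulename[G]{weak}, and \drulename[G]{convert}. Every other rule forces the subject into a distinct syntactic shape (a variable, a pair, an injection, a \textsf{let}, an application, a $\mathcal{G}$-term, or a type former), so the induction collapses to these four cases.

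The rule \drulename[G]{lambda} is the direct case: its premise is exactly $\delta , r \odot \Delta , x : X \vdash_\mathsf{G} t : Y$ and its conclusion has $T = (x :^{r} X) \to Y$, so I would take these $X, Y, r$ and obtain the witnessing $\delta_0$ by applying \Cref{cor:TypesWellFormed} to the conclusion to get $\delta_0 \odot \Delta \vdash_\mathsf{G} T : \mathsf{Type}$, closing the bundled equivalence judgment by reflexivity of $\equiv$. In the \drulename[G]{subusage} case the subject and type $T$ are unchanged while the grade vector grows from $\delta_1$ to some $\delta_2 \geq \delta_1$; applying the induction hypothesis to the premise yields the body judgment at $(\delta_1 , r)$, and since $\delta_1 \leq \delta_2$ forces $(\delta_1 , r) \leq (\delta_2 , r)$ by reflexivity of the preorder on the last coordinate, one more application of \drulename[G]{subusage} re-derives it at $(\delta_2 , r)$, with the equivalence judgment carried over unchanged.

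The \drulename[G]{convert} case is handled similarly: the induction hypothesis gives $T' \equiv (x :^{r} X) \to Y$ where $T'$ is the type in the premise, and since the rule supplies $T \equiv T'$, transitivity of $\equiv$ yields $T \equiv (x :^{r} X) \to Y$; the body judgment is inherited verbatim since the context is untouched. The only bookkeeping here is that the notation $\delta_0 \odot \Delta \vdash_\mathsf{G} T \equiv (x :^{r} X) \to Y : \mathsf{Type}$ bundles well-formedness of \emph{both} sides at a single grade vector $\delta_0$, which I would arrange using the grade-vector irrelevance of type formation, i.e.\ the type-level analogue of \Cref{lem:ctx-change-vector}, to align the grade vectors coming from the two sides.

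I expect the main obstacle to be the \drulename[G]{weak} case. There the conclusion context is $\Delta , z : Z$ while the premise context is $\Delta$, so the induction hypothesis produces a body judgment over $\Delta , x : X$, whereas the goal requires one over $\Delta , z : Z , x : X$ with the fresh variable $z$ inserted in the \emph{interior} of the context at grade $0$ (its type $Z$ being well-formed over $\Delta$ by the \drulename[G]{weak} premise). Re-deriving this needs a structural weakening principle that inserts a fresh, $0$-graded variable anywhere in a graded context, not merely at its end. Such middle-weakening is admissible by a routine induction on derivations, but it is not stated in the excerpt, so the cleanest route is to establish this weakening lemma first and then invoke it here; matching its grade annotations to the $(\delta , 0)$ pattern of the \drulename[G]{weak} conclusion is the only delicate point.
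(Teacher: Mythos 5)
Your proposal is correct and follows essentially the same route as the paper: induction on the derivation with exactly the four cases \drulename[G]{lambda}, \drulename[G]{subusage}, \drulename[G]{weak}, and \drulename[G]{convert}, handled as you describe (your \drulename[G]{subusage} case even cleans up the grade-vector bookkeeping the paper is sloppy about). The only divergence is in the \drulename[G]{weak} case, where the paper weakens by $Z$ at the \emph{end} of the extended context $\Delta', x : X$ and then appeals to an (admissible) exchange for graded contexts to move $z : Z$ past $x : X$, rather than invoking a middle-weakening lemma as you do --- the two are interchangeable, and both rest on an auxiliary structural principle not stated among the rules.
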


\begin{proof}
  By induction on the derivation of $ \delta  \odot  \Delta  \vdash_\mathsf{G}  \dmGLsym{(}  \lambda  \dmGLmv{x}  \dmGLsym{.}  \dmGLnt{t}  \dmGLsym{)}  \dmGLsym{:}  \dmGLnt{T} $.
  \proofitem{Case \drulename[G]{lambda}}
  There is nothing to do in this case.

  \item[ \emph{ Case \drulename[G]{subusage}}]
    Inverting the subusage rule,
    we get $ \delta'  \odot  \Delta  \vdash_\mathsf{G}  \dmGLsym{(}  \lambda  \dmGLmv{x}  \dmGLsym{.}  \dmGLnt{t}  \dmGLsym{)}  \dmGLsym{:}  \dmGLnt{T} $
    for some $ \delta' $ with $ \delta'  \leq  \delta $.
    By the inductive hypothesis we get $ \delta' ,  \dmGLnt{r}   \odot  \Delta  \dmGLsym{,}  \dmGLmv{x}  \dmGLsym{:}  \dmGLnt{X}  \vdash_\mathsf{G}  \dmGLnt{t}  \dmGLsym{:}  \dmGLnt{Y} $
    and by applying subusage we get
    $ \delta' ,  \dmGLnt{r}   \odot  \Delta  \dmGLsym{,}  \dmGLmv{x}  \dmGLsym{:}  \dmGLnt{X}  \vdash_\mathsf{G}  \dmGLnt{t}  \dmGLsym{:}  \dmGLnt{Y} $ as desired.
    $ \delta_{{\mathrm{0}}}  \odot  \Delta  \vdash_\mathsf{G}  \dmGLnt{T}  \equiv   (  \dmGLmv{x}  :^{ \dmGLnt{r} }  \dmGLnt{X}  )  \to   \dmGLnt{Y}   \dmGLsym{:}   \mathsf{Type} $ also follows immediately from the inductive
    hypothesis.

  \item[ \emph{ Case \drulename[G]{weak}}]
    We have that $ \Delta = \Delta'  \dmGLsym{,}  \dmGLmv{z}  \dmGLsym{:}  \dmGLnt{Z} $ and $ \delta = \delta' ,  0 $
    for some $ \Delta'$, $ \delta' $  and $ Z $.
    Furthermore we have $ \delta'  \odot  \Delta'  \vdash_\mathsf{G}  \dmGLsym{(}  \lambda  \dmGLmv{x}  \dmGLsym{.}  \dmGLnt{t}  \dmGLsym{)}  \dmGLsym{:}  \dmGLnt{T} $.
    By inductive hypothesis we have
    $ \delta_{{\mathrm{0}}}  \odot  \Delta'  \vdash_\mathsf{G}  \dmGLnt{T}  \equiv   (  \dmGLmv{x}  :^{ \dmGLnt{r} }  \dmGLnt{X}  )  \to   \dmGLnt{Y}   \dmGLsym{:}   \mathsf{Type} $
    and
    $ \delta' ,  \dmGLnt{r}   \odot  \Delta'  \dmGLsym{,}  \dmGLmv{x}  \dmGLsym{:}  \dmGLnt{X}  \vdash_\mathsf{G}  \dmGLnt{t}  \dmGLsym{:}  \dmGLnt{Y} $.
    Since $ Z $ is a well-formed type in context $ \Delta' $
    it is also well-formed in context $ \Delta'  \dmGLsym{,}  \dmGLmv{x}  \dmGLsym{:}  \dmGLnt{X} $
    and hence we may weaken by $ \dmGLnt{Z} $, obtaining
    $ \delta' ,  \dmGLnt{r}  ,  0   \odot  \Delta'  \dmGLsym{,}  \dmGLmv{x}  \dmGLsym{:}  \dmGLnt{X}  \dmGLsym{,}  \dmGLmv{z}  \dmGLsym{:}  \dmGLnt{Z}  \vdash_\mathsf{G}  \dmGLnt{t}  \dmGLsym{:}  \dmGLnt{Y} $.
    Similarly $ \dmGLnt{X} $ is well formed in context $ \Delta' $
    and hence by exchange we get
    $ \delta' ,  0  ,  \dmGLnt{r}   \odot  \Delta'  \dmGLsym{,}  \dmGLmv{z}  \dmGLsym{:}  \dmGLnt{Z}  \dmGLsym{,}  \dmGLmv{x}  \dmGLsym{:}  \dmGLnt{X}  \vdash_\mathsf{G}  \dmGLnt{t}  \dmGLsym{:}  \dmGLnt{Y} $ as desired.

  \item[ \emph{ Case \drulename[G]{convert}}]
    Immediate since the relation ``$ \equiv $'' is transitive. \qedhere
\end{proof}

\begin{lem}[Unit Inversion]
  \label{lem:unitInversion}
  If $ \delta  \odot  \Delta  \vdash_\mathsf{G}  \mathbf{j}  \dmGLsym{:}  \dmGLnt{T} $ then we have $ \vec{0}  \leq  \delta $
  and $ \delta_{{\mathrm{0}}}  \odot  \Delta  \vdash_\mathsf{G}  \dmGLnt{T}  \equiv  \mathbf{J}  \dmGLsym{:}   \mathsf{Type} $ for some $ \delta_{{\mathrm{0}}} $.
\end{lem}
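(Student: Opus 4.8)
The plan is to prove this by induction on the derivation of $ \delta \odot \Delta \vdash_\mathsf{G} \mathbf{j} : T $, exactly mirroring the structure of \Cref{lem:lambdaInversion}. First I would observe that, since the subject term is the closed constructor $ \mathbf{j} $, the only rules whose conclusion can match this judgment are \drulename[G]{unitIntro} (the rule that actually introduces $ \mathbf{j} $) together with the three rules that are agnostic to the shape of the subject, namely \drulename[G]{subusage}, \drulename[G]{weak}, and \drulename[G]{convert}. In particular the variable rule and every other introduction or elimination rule is excluded because its subject has a different syntactic form, so these four cases exhaust the induction.

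For the base case \drulename[G]{unitIntro}, the rule forces $ \delta = \vec{0} $ and $ T = \mathbf{J} $. Then $ \vec{0} \leq \delta $ holds by reflexivity of $ \leq $, and $ \mathbf{J} \equiv \mathbf{J} $ is immediate; it remains only to exhibit the two well-formedness judgments. These follow by first invoking \Cref{prop:ctxWellFormed} to obtain $ \delta \odot \Delta \vdash_\mathsf{G} \mathsf{ctx} $ and then applying the unit type formation rule \drulename[G]{unit} to produce $ \delta_{{\mathrm{0}}} \odot \Delta \vdash_\mathsf{G} \mathbf{J} : \mathsf{Type} $ for a suitable $ \delta_{{\mathrm{0}}} $. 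The \drulename[G]{subusage} and \drulename[G]{convert} cases are routine. For \drulename[G]{subusage}, inverting the rule yields $ \delta' \odot \Delta \vdash_\mathsf{G} \mathbf{j} : T $ with $ \delta' \leq \delta $; the inductive hypothesis gives $ \vec{0} \leq \delta' $, and transitivity yields $ \vec{0} \leq \delta $, while the type-equivalence conclusion carries over unchanged. For \drulename[G]{convert}, the grade vector is untouched, so $ \vec{0} \leq \delta $ is immediate, and since $ \equiv $ is transitive the equivalence $ T' \equiv \mathbf{J} $ from the inductive hypothesis combines with the $ T \equiv T' $ supplied by the conversion premise to give $ T \equiv \mathbf{J} $; well-formedness of $ T $ at $ \mathsf{Type} $ is one of the premises of the conversion rule.

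The only case requiring a little care is \drulename[G]{weak}, which I expect to be the main (though still minor) obstacle. Here $ \Delta = \Delta' \dmGLsym{,} \dmGLmv{z} \dmGLsym{:} \dmGLnt{Z} $ and $ \delta = \delta' , 0 $, with premise $ \delta' \odot \Delta' \vdash_\mathsf{G} \mathbf{j} : T $; this premise is available precisely because $ \mathbf{j} $ is closed and hence does not mention $ z $. The inductive hypothesis provides $ \vec{0} \leq \delta' $, whence $ \vec{0} \leq \delta' , 0 $, together with $ \delta_{{\mathrm{0}}} \odot \Delta' \vdash_\mathsf{G} T \equiv \mathbf{J} : \mathsf{Type} $. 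To transport the latter into the extended context I would weaken both type judgments $ \delta_{{\mathrm{0}}} \odot \Delta' \vdash_\mathsf{G} T : \mathsf{Type} $ and $ \delta_{{\mathrm{0}}} \odot \Delta' \vdash_\mathsf{G} \mathbf{J} : \mathsf{Type} $ by $ \dmGLmv{z} \dmGLsym{:} \dmGLnt{Z} $, using that $ Z $ is well-formed over $ \Delta' $ and that $ \equiv $ is a purely syntactic congruence on terms, hence insensitive to the ambient context. This yields $ \delta_{{\mathrm{0}}} , 0 \odot \Delta' \dmGLsym{,} \dmGLmv{z} \dmGLsym{:} \dmGLnt{Z} \vdash_\mathsf{G} T \equiv \mathbf{J} : \mathsf{Type} $, which completes the case and the induction.
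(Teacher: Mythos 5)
Your proposal is correct and follows essentially the same route as the paper: induction on the derivation with exactly the four cases \drulename[G]{unitIntro}, \drulename[G]{subusage}, \drulename[G]{weak}, and \drulename[G]{convert}, using transitivity of $\leq$ for subusage, weakening of the type judgment plus $\vec{0}, 0 \leq \delta', 0$ for the weakening case, and transitivity of $\equiv$ for conversion. You actually supply slightly more detail than the paper in the base case (which the paper dismisses with ``there is nothing to prove''), but the argument is the same.
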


\begin{proof}
   By induction on the derivation of $ \delta  \odot  \Delta  \vdash_\mathsf{G}  \mathbf{j}  \dmGLsym{:}  \dmGLnt{T} $.

 \item[ \emph{ Case \drulename[G]{unit}}]
   There is nothing to prove in this case.

 \item[ \emph{ Case \drulename[G]{subusage}}]
   Inverting the subusage rule,
   we get $ \delta'  \odot  \Delta  \vdash_\mathsf{G}  \mathbf{j}  \dmGLsym{:}  \dmGLnt{T} $ with $ \delta'  \leq  \delta $.
   By inductive hypothesis we have
   $ \delta_{{\mathrm{0}}}  \odot  \Delta  \vdash_\mathsf{G}  \dmGLnt{T}  \equiv  \mathbf{J}  \dmGLsym{:}   \mathsf{Type} $ and $ \vec{0}  \leq  \delta' $.
   Since $ \leq $ is transitive, we obtain $ \vec{0}  \leq  \delta $.

 \item [ \emph{ Case \drulename[G]{weak}}]
   We have $ \Delta = \Delta'  \dmGLsym{,}  \dmGLmv{x}  \dmGLsym{:}  \dmGLnt{X} $ and $ \delta = \delta' ,  0 $.
   By inductive hypothesis we have
   $ \vec{0}  \leq  \delta' $ and 
   $ \delta_{{\mathrm{0}}}  \odot  \Delta'  \vdash_\mathsf{G}  \dmGLnt{T}  \equiv  \mathbf{J}  \dmGLsym{:}   \mathsf{Type} $
   weakening the latter by $ \dmGLnt{X} $ yields the second claim.
   The first claim follows since $ \vec{0} ,  0   \leq   \delta' ,  0 $.

 \item[ \emph{ Case \drulename[G]{convert}}]
   By inductive hypothesis and the fact that the relation ``$ \equiv $'' is transitive.
\end{proof}

\begin{lem}[Pair Inversion]
  \label{lem:pairInversion}
  If $ \delta  \odot  \Delta  \vdash_\mathsf{G}  \dmGLsym{(}  \dmGLnt{t_{{\mathrm{1}}}}  \dmGLsym{,}  \dmGLnt{t_{{\mathrm{2}}}}  \dmGLsym{)}  \dmGLsym{:}  \dmGLnt{T} $, then
  \begin{enumerate}[label=\roman*)]
    \item
      $ \delta'_{{\mathrm{0}}}  \odot  \Delta  \vdash_\mathsf{G}  \dmGLnt{X}  \dmGLsym{:}   \mathsf{Type} $ and
      $ \delta''_{{\mathrm{0}}} ,  \dmGLnt{r''_{{\mathrm{0}}}}   \odot  \Delta  \dmGLsym{,}  \dmGLmv{x}  \dmGLsym{:}  \dmGLnt{X}  \vdash_\mathsf{G}  \dmGLnt{Y}  \dmGLsym{:}   \mathsf{Type} $

    \item 
      $ \delta_{{\mathrm{0}}}  \odot  \Delta  \vdash_\mathsf{G}  \dmGLnt{T}  \equiv   ( \dmGLmv{x}  :^{ \dmGLnt{r} }  \dmGLnt{X} )  \boxtimes   \dmGLnt{Y}   \dmGLsym{:}   \mathsf{Type} $

    \item 
      $ \delta_{{\mathrm{1}}}  \odot  \Delta  \vdash_\mathsf{G}  \dmGLnt{t_{{\mathrm{1}}}}  \dmGLsym{:}  \dmGLnt{X} $ 

    \item 
      $ \delta_{{\mathrm{2}}}  \odot  \Delta  \vdash_\mathsf{G}  \dmGLnt{t_{{\mathrm{2}}}}  \dmGLsym{:}  \dmGLsym{[}  \dmGLnt{t_{{\mathrm{1}}}}  \dmGLsym{/}  \dmGLmv{x}  \dmGLsym{]}  \dmGLnt{Y} $

    \item
      $ \dmGLnt{r}   \cdot   \delta_{{\mathrm{1}}}   +  \delta_{{\mathrm{2}}}  \leq  \delta $

  \end{enumerate}
  for some
  $ \dmGLnt{X} $, $ \dmGLnt{Y} $, $ \delta_{{\mathrm{0}}} $, $ \delta'_{{\mathrm{0}}} $, $ \delta''_{{\mathrm{0}}} $,
  $ \dmGLnt{r''_{{\mathrm{0}}}} $, $ \delta_{{\mathrm{1}}} $, $ \delta_{{\mathrm{2}}} $.
\end{lem}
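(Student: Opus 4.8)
The plan is to proceed by induction on the derivation of $\delta \odot \Delta \vdash_\mathsf{G} (t_1, t_2) : T$, exactly as in the two preceding inversion lemmas (\Cref{lem:lambdaInversion,lem:unitInversion}). Since the subject $(t_1, t_2)$ is syntactically a pair, the only rules whose conclusion can match it are the introduction rule \drulename[G]{gradedPairIntro} together with the three non-syntax-directed rules \drulename[G]{subusage}, \drulename[G]{weak}, and \drulename[G]{convert}; every other rule yields a conclusion of a different syntactic shape and is discarded immediately.

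First I would treat the base case \drulename[G]{gradedPairIntro}. Here the premises of the rule supply directly the judgments $\delta_1 \odot \Delta \vdash_\mathsf{G} t_1 : X$ and $\delta_2 \odot \Delta \vdash_\mathsf{G} t_2 : [t_1/x]Y$ of items (iii) and (iv), together with the well-formedness judgments for $X$ and $Y$ required in (i). The graded vector in the conclusion is exactly $r \cdot \delta_1 + \delta_2$, so item (v) holds by reflexivity of $\leq$, and $T$ is literally $(x :^r X) \boxtimes Y$, so item (ii) holds by reflexivity of $\equiv$ using the same well-formedness judgment. Next come the two bookkeeping cases. For \drulename[G]{subusage} I would invert the rule to obtain $\delta' \odot \Delta \vdash_\mathsf{G} (t_1, t_2) : T$ with $\delta' \leq \delta$, apply the induction hypothesis to recover all eight witnesses, observe that items (i)--(iv) transfer unchanged, and repair (v) by transitivity: $r \cdot \delta_1 + \delta_2 \leq \delta' \leq \delta$. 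For \drulename[G]{convert} the context and grade vector are untouched and only the ascribed type changes to an equivalent one, so the induction hypothesis applies directly and item (ii) is re-established via transitivity of $\equiv$.

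The main obstacle is the \drulename[G]{weak} case, as it is the one where the context and grade vectors genuinely change. Here $\Delta = \Delta', z : Z$ and $\delta = \delta', 0$, with premise $\delta' \odot \Delta' \vdash_\mathsf{G} (t_1, t_2) : T$. Applying the induction hypothesis in the smaller context $\Delta'$ yields witnesses over $\Delta'$, which I would then propagate by weakening with $Z$: the judgment for $t_1$, the judgment for $X$, and the equivalence for $T$ all weaken directly to context $\Delta$. The only delicate point is the judgment $\delta''_0, r''_0 \odot \Delta', x : X \vdash_\mathsf{G} Y : \mathsf{Type}$, which I must weaken by $Z$ and then subject to exchange to reach the context $\Delta', z : Z, x : X = \Delta, x : X$; this is legitimate because $Z$ is well formed over $\Delta'$ and so cannot mention $x$, and it mirrors precisely the \drulename[G]{weak} case of \Cref{lem:lambdaInversion}. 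Finally I would verify the grade arithmetic: weakening replaces $\delta_1, \delta_2$ by $\delta_1, 0$ and $\delta_2, 0$, so item (v) becomes $r \cdot (\delta_1, 0) + (\delta_2, 0) = (r \cdot \delta_1 + \delta_2), 0 \leq (\delta', 0) = \delta$, which follows from the induction hypothesis together with $0 \leq 0$.
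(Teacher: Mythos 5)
Your proposal is correct and follows essentially the same route as the paper: induction on the derivation, with the syntax-directed base case \drulename[G]{gradedPairIntro} handled by reflexivity of $\leq$ and $\equiv$, and the three structural cases \drulename[G]{subusage}, \drulename[G]{weak}, \drulename[G]{convert} repaired by transitivity of $\leq$, weakening-plus-exchange, and transitivity of $\equiv$ respectively. The paper's own proof is terser (it just points back to the earlier inversion lemmas for the weak and subusage cases), but the details you spell out — in particular the exchange step for the judgment on $Y$ and the grade arithmetic $(r\cdot\delta_1+\delta_2),0 \leq \delta',0$ — are exactly what it leaves implicit.
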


\begin{proof}
  By induction on the derivation of $ \delta  \odot  \Delta  \vdash_\mathsf{G}  \dmGLsym{(}  \dmGLnt{t_{{\mathrm{1}}}}  \dmGLsym{,}  \dmGLnt{t_{{\mathrm{2}}}}  \dmGLsym{)}  \dmGLsym{:}  \dmGLnt{T} $.
  
  \proofitem{Case \drulename[G]{gradedPairIntro}}
  In this case, there is nothing to prove.

  \proofitem{Case \drulename[G]{weak}}
  This is similar to the proofs of the previous inversion lemmas:
  Everything follows by applying weakening to the respective points in the inductive hypothesis.
  
  \proofitem{Case \drulename[G]{subusage}}
  Again, similar to the previous inversion lemmas
  and using the fact that $ \leq $ is transitive in point v).
  Note that we cannot state point v) as $ \dmGLnt{r}   \cdot   \delta_{{\mathrm{1}}}   +  \delta_{{\mathrm{2}}} = \delta $.

  \proofitem{Case \drulename[G]{convert}}
  By inductive hypothesis the fact that ``$ \equiv $'' is transitive.
\end{proof}

\begin{lem}[Coproduct Inversion]
  \label{lem:coproductInversion}
  If $ \delta  \odot  \Delta  \vdash_\mathsf{G}  \sfoperator{inl} \, \dmGLnt{t}  \dmGLsym{:}  \dmGLnt{T} $,
  then $ \delta_{{\mathrm{0}}}  \odot  \Delta  \vdash_\mathsf{G}  \dmGLnt{T}  \equiv  \dmGLnt{X_{{\mathrm{1}}}}  \boxplus  \dmGLnt{X_{{\mathrm{2}}}}  \dmGLsym{:}   \mathsf{Type} $
  and $ \delta  \odot  \Delta  \vdash_\mathsf{G}  \dmGLnt{t}  \dmGLsym{:}  \dmGLnt{X_{{\mathrm{1}}}} $
  for appropriate $ \dmGLnt{X_{{\mathrm{1}}}} $, $ \dmGLnt{X_{{\mathrm{2}}}} $ and $ \delta_{{\mathrm{0}}} $.
  A symmetrical statement holds for $ \sfoperator{inr} \, \dmGLnt{t} $.
\end{lem}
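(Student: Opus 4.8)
The plan is to proceed by induction on the derivation of $\delta \odot \Delta \vdash_\mathsf{G} \sfoperator{inl}\,\dmGLnt{t} \dmGLsym{:} \dmGLnt{T}$, exactly paralleling the earlier inversion lemmas (\Cref{lem:lambdaInversion,lem:unitInversion,lem:pairInversion}). Since $\sfoperator{inl}\,\dmGLnt{t}$ is a canonical introduction form, only four rules can conclude such a judgment: the introduction rule \drulename[G]{coproductInl}, together with the three structural rules \drulename[G]{subusage}, \drulename[G]{weak}, and \drulename[G]{convert}. Every other typing rule has a conclusion whose subject is syntactically incompatible with $\sfoperator{inl}\,\dmGLnt{t}$, so those cases are vacuous.

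First I would handle the base case \drulename[G]{coproductInl}. Here the derivation ends by introducing $\sfoperator{inl}\,\dmGLnt{t}$, so its premise already supplies $\delta \odot \Delta \vdash_\mathsf{G} \dmGLnt{t} \dmGLsym{:} \dmGLnt{X_{{\mathrm{1}}}}$ and the target type is literally $\dmGLnt{X_{{\mathrm{1}}}} \boxplus \dmGLnt{X_{{\mathrm{2}}}}$. The required type equivalence then holds by reflexivity of $\equiv$, with well-formedness of $\dmGLnt{X_{{\mathrm{1}}}} \boxplus \dmGLnt{X_{{\mathrm{2}}}}$ coming from the premises of the rule together with \Cref{cor:TypesWellFormed}, so there is nothing to compute.

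Next come the three structural cases, each of which pushes the inductive hypothesis through a single rule. For \drulename[G]{subusage}, inverting the rule gives $\delta' \odot \Delta \vdash_\mathsf{G} \sfoperator{inl}\,\dmGLnt{t} \dmGLsym{:} \dmGLnt{T}$ with $\delta' \leq \delta$; the inductive hypothesis yields $\delta' \odot \Delta \vdash_\mathsf{G} \dmGLnt{t} \dmGLsym{:} \dmGLnt{X_{{\mathrm{1}}}}$, and a further application of \drulename[G]{subusage} raises the grade back to $\delta$, while the type equivalence is inherited unchanged. For \drulename[G]{weak}, the context splits as $\Delta' \dmGLsym{,} \dmGLmv{z} \dmGLsym{:} \dmGLnt{Z}$ with $\delta = \delta'' \dmGLsym{,} 0$; the inductive hypothesis applied to the smaller premise delivers both the type equivalence over $\Delta'$ and $\delta'' \odot \Delta' \vdash_\mathsf{G} \dmGLnt{t} \dmGLsym{:} \dmGLnt{X_{{\mathrm{1}}}}$, after which I weaken both judgments by $\dmGLnt{Z}$, whose well-formedness in $\Delta'$ is recorded in the premise of \drulename[G]{weak}. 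For \drulename[G]{convert} the claim is immediate from the inductive hypothesis and transitivity of $\equiv$.

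The symmetric statement for $\sfoperator{inr}\,\dmGLnt{t}$ is proved identically, replacing \drulename[G]{coproductInl} by \drulename[G]{coproductInr} and $\dmGLnt{X_{{\mathrm{1}}}}$ by $\dmGLnt{X_{{\mathrm{2}}}}$. I expect no genuine obstacle: the only mild care is in the \drulename[G]{weak} case, where one must confirm that $\dmGLnt{Z}$ remains well-formed so weakening can be reapplied to the term judgment — but this is precisely the pattern already used in the earlier inversion lemmas. Crucially, and unlike \Cref{lem:pairInversion}, the coproduct introduction form carries no grade scaling or additive splitting, so the grade vector $\delta$ on the subterm is preserved exactly (up to subusage) rather than only up to $\leq$, which is why the statement can assert $\delta \odot \Delta \vdash_\mathsf{G} \dmGLnt{t} \dmGLsym{:} \dmGLnt{X_{{\mathrm{1}}}}$ with the same $\delta$.
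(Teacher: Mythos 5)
Your proposal is correct and follows essentially the same route as the paper's own proof: induction on the derivation with exactly the four cases (introduction, subusage, weakening, conversion), handling subusage by reapplying the rule to recover $\delta$, weakening by re-weakening both judgments from the inductive hypothesis, and conversion by transitivity of $\equiv$. No substantive differences to report.
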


\begin{proof}
  By induction on the derivation of $ \delta  \odot  \Delta  \vdash_\mathsf{G}  \sfoperator{inl} \, \dmGLnt{t}  \dmGLsym{:}  \dmGLnt{T} $.

  \proofitem{Case \drulename[G]{coproductInl}}
    There is nothing to prove in this case.

  \proofitem{Case \drulename[G]{weak}}
    In this case we have $ \Delta = \Delta'  \dmGLsym{,}  \dmGLmv{y}  \dmGLsym{:}  \dmGLnt{Y} $ and $ \delta = \delta' ,  0 $
    and $ \delta'  \odot  \Delta'  \vdash_\mathsf{G}  \sfoperator{inl} \, \dmGLnt{t}  \dmGLsym{:}  \dmGLnt{T} $.
    By the inductive hypothesis we have
    $ \delta'  \odot  \Delta'  \vdash_\mathsf{G}  \dmGLnt{t}  \dmGLsym{:}  \dmGLnt{X_{{\mathrm{1}}}} $ and $ \delta'_{{\mathrm{0}}}  \odot  \Delta'  \vdash_\mathsf{G}  \dmGLnt{T}  \equiv  \dmGLnt{X_{{\mathrm{1}}}}  \boxplus  \dmGLnt{X_{{\mathrm{2}}}}  \dmGLsym{:}   \mathsf{Type} $.
    Weakening these by $ \dmGLnt{Y} $ yields the desired result.

  \proofitem{Case \drulename[G]{subusage}}
    In this case we have $ \delta'  \odot  \Delta  \vdash_\mathsf{G}  \sfoperator{inl} \, \dmGLnt{t}  \dmGLsym{:}  \dmGLnt{T} $ for some $ \delta' $ with $ \delta'  \leq  \delta $.
    By the inductive hypothesis we get
    $ \delta_{{\mathrm{0}}}  \odot  \Delta  \vdash_\mathsf{G}  \dmGLnt{T}  \equiv  \dmGLnt{X_{{\mathrm{1}}}}  \boxplus  \dmGLnt{X_{{\mathrm{2}}}}  \dmGLsym{:}   \mathsf{Type} $ and $ \delta'  \odot  \Delta  \vdash_\mathsf{G}  \dmGLnt{t}  \dmGLsym{:}  \dmGLnt{X_{{\mathrm{1}}}} $.
    Applying subusage to the latter judgment yields
    $ \delta  \odot  \Delta  \vdash_\mathsf{G}  \dmGLnt{t}  \dmGLsym{:}  \dmGLnt{X_{{\mathrm{1}}}} $.

  \proofitem{Case \drulename[G]{convert}}
    Follows from the inductive hypothesis and transitivity of ``$ \equiv $''.
\end{proof}

\SubjectReduction*

\begin{proof}
  By induction on the derivation of $ \delta  \odot  \Delta  \vdash_\mathsf{G}  \dmGLnt{t}  \dmGLsym{:}  \dmGLnt{X} $.
  We will omit those rules that cannot produce a term which can be reduced.
  
  \proofitem{Case \drulename[G]{subusage}}
    \[
      \inferrule
      {
        \delta  \odot  \Delta  \vdash_\mathsf{G}  \dmGLnt{t}  \dmGLsym{:}  \dmGLnt{X} \\
        \delta  \leq  \delta'
      }
      {
        \delta'  \odot  \Delta  \vdash_\mathsf{G}  \dmGLnt{t}  \dmGLsym{:}  \dmGLnt{X}
      }
    \]
    By the inductive hypothesis we have $ \delta  \odot  \Delta  \vdash_\mathsf{G}  \dmGLnt{t'}  \dmGLsym{:}  \dmGLnt{X} $
    and by subusage we conclude $ \delta'  \odot  \Delta  \vdash_\mathsf{G}  \dmGLnt{t'}  \dmGLsym{:}  \dmGLnt{X} $.

  \proofitem{Case \drulename[G]{weak}}
    \[
      \inferrule
      {
        \delta  \odot  \Delta  \vdash_\mathsf{G}  \dmGLnt{t}  \dmGLsym{:}  \dmGLnt{X} \\
        \delta_{{\mathrm{0}}}  \odot  \Delta  \vdash_\mathsf{G}  \dmGLnt{Y}  \dmGLsym{:}   \mathsf{Type} \\
        \dmGLmv{x} \, \notin \, \sfoperator{dom} \, \Delta
      }
      {
        \delta ,  0   \odot  \Delta  \dmGLsym{,}  \dmGLmv{y}  \dmGLsym{:}  \dmGLnt{Y}  \vdash_\mathsf{G}  \dmGLnt{t}  \dmGLsym{:}  \dmGLnt{X}
      }
    \]

    By induction we have $ \delta  \odot  \Delta  \vdash_\mathsf{G}  \dmGLnt{t'}  \dmGLsym{:}  \dmGLnt{X} $.
    Weakening this judgment by $ Y $ produces the desired result.

  \proofitem{Case \drulename[G]{convert}}
    \[
      \inferrule
      {
        \delta  \odot  \Delta  \vdash_\mathsf{G}  \dmGLnt{t}  \dmGLsym{:}  \dmGLnt{X} \\
        \delta_{{\mathrm{0}}}  \odot  \Delta  \vdash_\mathsf{G}  \dmGLnt{Y}  \dmGLsym{:}   \mathsf{Type} \\
        \delta'_{{\mathrm{0}}}  \odot  \Delta  \vdash_\mathsf{G}  \dmGLnt{X}  \equiv  \dmGLnt{Y}  \dmGLsym{:}   \mathsf{Type}
      }
      {
        \delta  \odot  \Delta  \vdash_\mathsf{G}  \dmGLnt{t}  \dmGLsym{:}  \dmGLnt{Y}
      }
    \]
    We get $ \delta  \odot  \Delta  \vdash_\mathsf{G}  \dmGLnt{t'}  \dmGLsym{:}  \dmGLnt{X} $ by induction.
    Applying conversion yields $ \delta  \odot  \Delta  \vdash_\mathsf{G}  \dmGLnt{t'}  \dmGLsym{:}  \dmGLnt{Y} $, as desired.

  \proofitem{Case \drulename[G]{unitElim}}
  The term constructed by this rule can only be reduced if it is of the form
    \[
      \inferrule
      {
        \delta_{{\mathrm{0}}} ,  \dmGLnt{r_{{\mathrm{0}}}}   \odot  \Delta  \dmGLsym{,}  \dmGLmv{x}  \dmGLsym{:}  \mathbf{J}  \vdash_\mathsf{G}  \dmGLnt{X}  \dmGLsym{:}   \mathsf{Type} \\\\
        \delta'  \odot  \Delta  \vdash_\mathsf{G}  \mathbf{j}  \dmGLsym{:}  \mathbf{J} \\
        \delta  \odot  \Delta  \vdash_\mathsf{G}  \dmGLnt{t}  \dmGLsym{:}  \dmGLsym{[}  \mathbf{j}  \dmGLsym{/}  \dmGLmv{x}  \dmGLsym{]}  \dmGLnt{X} \\
      }
      {
        \delta  +  \delta'  \odot  \Delta  \vdash_\mathsf{G}  \sfoperator{let} \, \mathbf{j} \, \dmGLsym{=}  \mathbf{j} \, \sfoperator{in} \, \dmGLnt{t}  \dmGLsym{:}  \dmGLsym{[}  \mathbf{j}  \dmGLsym{/}  \dmGLmv{x}  \dmGLsym{]}  \dmGLnt{X}
      }
    \]
    There is only one reduction that can be applied here, namely
    $ \sfoperator{let} \, \mathbf{j} \, \dmGLsym{=}  \mathbf{j} \, \sfoperator{in} \, \dmGLnt{t}  \leadsto  \dmGLnt{t} $,
    so we want to prove
    $ \delta  +  \delta'  \odot  \Delta  \vdash_\mathsf{G}  \dmGLnt{t}  \dmGLsym{:}  \dmGLsym{[}  \mathbf{j}  \dmGLsym{/}  \dmGLmv{x}  \dmGLsym{]}  \dmGLnt{X} $.
    By \Cref{lem:unitInversion} we have $ \vec{0}  \leq  \delta' $.
    By assumption we have $ \delta  \odot  \Delta  \vdash_\mathsf{G}  \dmGLnt{t}  \dmGLsym{:}  \dmGLsym{[}  \mathbf{j}  \dmGLsym{/}  \dmGLmv{x}  \dmGLsym{]}  \dmGLnt{X} $.
    By subusaging we now obtain
    $ \delta  +  \delta'  \odot  \Delta  \vdash_\mathsf{G}  \dmGLnt{t}  \dmGLsym{:}  \dmGLsym{[}  \mathbf{j}  \dmGLsym{/}  \dmGLmv{x}  \dmGLsym{]}  \dmGLnt{X} $.

  \proofitem{Case \drulename[G]{gradedPairElim}}
  Again, this can only be reduced if the constructed term looks as follows:
    \[
      \inferrule
      {
        \delta_{{\mathrm{0}}} ,  \dmGLnt{r_{{\mathrm{0}}}}   \odot  \Delta  \dmGLsym{,}  \dmGLmv{z}  \dmGLsym{:}   ( \dmGLmv{x}  :^{ \dmGLnt{r} }  \dmGLnt{X} )  \boxtimes   \dmGLnt{Y}   \vdash_\mathsf{G}  \dmGLnt{W}  \dmGLsym{:}   \mathsf{Type}\\\\
        \delta_{{\mathrm{1}}}  \odot  \Delta  \vdash_\mathsf{G}  \dmGLsym{(}  \dmGLnt{t_{{\mathrm{1}}}}  \dmGLsym{,}  \dmGLnt{t_{{\mathrm{2}}}}  \dmGLsym{)}  \dmGLsym{:}   ( \dmGLmv{x}  :^{ \dmGLnt{r} }  \dmGLnt{X} )  \boxtimes   \dmGLnt{Y}\\\\
        \delta_{{\mathrm{2}}} ,  \dmGLnt{r}  \cdot  \dmGLnt{q}  ,  \dmGLnt{q}   \odot  \Delta  \dmGLsym{,}  \dmGLmv{x}  \dmGLsym{:}  \dmGLnt{X}  \dmGLsym{,}  \dmGLmv{y}  \dmGLsym{:}  \dmGLnt{Y}  \vdash_\mathsf{G}  \dmGLnt{t}  \dmGLsym{:}  \dmGLsym{[}  \dmGLsym{(}  \dmGLmv{x}  \dmGLsym{,}  \dmGLmv{y}  \dmGLsym{)}  \dmGLsym{/}  \dmGLmv{z}  \dmGLsym{]}  \dmGLnt{W}
      }
      {
        \dmGLnt{q}   \cdot   \delta_{{\mathrm{1}}}   +  \delta_{{\mathrm{2}}}  \odot  \Delta  \vdash_\mathsf{G}  \sfoperator{let} \, \dmGLsym{(}  \dmGLmv{x}  \dmGLsym{,}  \dmGLmv{y}  \dmGLsym{)}  \dmGLsym{=}  \dmGLsym{(}  \dmGLnt{t_{{\mathrm{1}}}}  \dmGLsym{,}  \dmGLnt{t_{{\mathrm{2}}}}  \dmGLsym{)} \, \sfoperator{in} \, \dmGLnt{t}  \dmGLsym{:}  \dmGLsym{[}  \dmGLsym{(}  \dmGLnt{t_{{\mathrm{1}}}}  \dmGLsym{,}  \dmGLnt{t_{{\mathrm{2}}}}  \dmGLsym{)}  \dmGLsym{/}  \dmGLmv{z}  \dmGLsym{]}  \dmGLnt{W}
      }
    \]
    We want to prove
    \[
      \dmGLnt{q}   \cdot   \delta_{{\mathrm{1}}}   +  \delta_{{\mathrm{2}}}  \odot  \Delta  \vdash_\mathsf{G}  \dmGLsym{[}  \dmGLnt{t_{{\mathrm{1}}}}  \dmGLsym{/}  \dmGLmv{x}  \dmGLsym{]}  \dmGLsym{[}  \dmGLnt{t_{{\mathrm{2}}}}  \dmGLsym{/}  \dmGLmv{y}  \dmGLsym{]}  \dmGLnt{t}  \dmGLsym{:}  \dmGLsym{[}  \dmGLsym{(}  \dmGLnt{t_{{\mathrm{1}}}}  \dmGLsym{,}  \dmGLnt{t_{{\mathrm{2}}}}  \dmGLsym{)}  \dmGLsym{/}  \dmGLmv{z}  \dmGLsym{]}  \dmGLnt{W}.
    \]
    Consider the second of the assumptions in the above rule.
    By \Cref{lem:pairInversion}
    we see that there are $ \delta'_{{\mathrm{1}}}, \delta''_{{\mathrm{1}}} $ such that
    $ \delta'_{{\mathrm{1}}}  \odot  \Delta  \vdash_\mathsf{G}  \dmGLnt{t_{{\mathrm{1}}}}  \dmGLsym{:}  \dmGLnt{X} $,
    $ \delta''_{{\mathrm{1}}}  \odot  \Delta  \vdash_\mathsf{G}  \dmGLnt{t_{{\mathrm{2}}}}  \dmGLsym{:}  \dmGLsym{[}  \dmGLnt{t_{{\mathrm{1}}}}  \dmGLsym{/}  \dmGLmv{x}  \dmGLsym{]}  \dmGLnt{Y} $
    and $ \dmGLnt{r}   \cdot   \delta'_{{\mathrm{1}}}   +  \delta''_{{\mathrm{1}}} = \delta_{{\mathrm{1}}} $.
    Weakening the latter of the two judgments produces
    $ \delta''_{{\mathrm{1}}} ,  0   \odot  \Delta  \dmGLsym{,}  \dmGLmv{x'}  \dmGLsym{:}  \dmGLnt{X}  \vdash_\mathsf{G}  \dmGLnt{t_{{\mathrm{2}}}}  \dmGLsym{:}  \dmGLsym{[}  \dmGLnt{t_{{\mathrm{1}}}}  \dmGLsym{/}  \dmGLmv{x}  \dmGLsym{]}  \dmGLnt{X} $.
    Now applying substitution to the third assumption twice yields
    \begin{align*}
      & \delta_{{\mathrm{2}}} ,  \dmGLnt{r}  \cdot  \dmGLnt{q}  ,  \dmGLnt{q}   \odot  \Delta  \dmGLsym{,}  \dmGLmv{x}  \dmGLsym{:}  \dmGLnt{X}  \dmGLsym{,}  \dmGLmv{y}  \dmGLsym{:}  \dmGLnt{Y}  \vdash_\mathsf{G}  \dmGLnt{t}  \dmGLsym{:}  \dmGLsym{[}  \dmGLsym{(}  \dmGLmv{x}  \dmGLsym{,}  \dmGLmv{y}  \dmGLsym{)}  \dmGLsym{/}  \dmGLmv{z}  \dmGLsym{]}  \dmGLnt{W} \\
      & \text{substitute $ \dmGLnt{t_{{\mathrm{2}}}} $ for $ \dmGLmv{y} $} \\
      & \delta_{{\mathrm{2}}}  +   \dmGLnt{q}   \cdot   \delta''_{{\mathrm{1}}}  ,  \dmGLnt{r}  \cdot  \dmGLnt{q}   \odot  \Delta  \dmGLsym{,}  \dmGLmv{x}  \dmGLsym{:}  \dmGLnt{X}  \vdash_\mathsf{G}  \dmGLsym{[}  \dmGLnt{t_{{\mathrm{2}}}}  \dmGLsym{/}  \dmGLmv{y}  \dmGLsym{]}  \dmGLnt{t}  \dmGLsym{:}  \dmGLsym{[}  \dmGLsym{(}  \dmGLmv{x}  \dmGLsym{,}  \dmGLnt{t_{{\mathrm{2}}}}  \dmGLsym{)}  \dmGLsym{/}  \dmGLmv{z}  \dmGLsym{]}  \dmGLnt{W} \\
      & \text{substitute $ \dmGLnt{t_{{\mathrm{1}}}} $ for $ \dmGLmv{x} $} \\
      & \delta_{{\mathrm{2}}}  +   \dmGLnt{q}   \cdot   \delta''_{{\mathrm{1}}}   +   \dmGLnt{r}  \cdot  \dmGLnt{q}   \cdot   \delta'_{{\mathrm{1}}}   \odot  \Delta  \vdash_\mathsf{G}  \dmGLsym{[}  \dmGLnt{t_{{\mathrm{1}}}}  \dmGLsym{/}  \dmGLmv{x}  \dmGLsym{]}  \dmGLsym{[}  \dmGLnt{t_{{\mathrm{2}}}}  \dmGLsym{/}  \dmGLmv{y}  \dmGLsym{]}  \dmGLnt{t}  \dmGLsym{:}  \dmGLsym{[}  \dmGLsym{(}  \dmGLnt{t_{{\mathrm{1}}}}  \dmGLsym{,}  \dmGLnt{t_{{\mathrm{2}}}}  \dmGLsym{)}  \dmGLsym{/}  \dmGLmv{z}  \dmGLsym{]}  \dmGLnt{W}
    \end{align*}
    So, we are done.

  \proofitem{Case \drulename[G]{coproductElim}}
    In this case there are two reductions possible,
    one where the constructed term looks thus
    \[
      \inferrule
      {
        1   \leq  \dmGLnt{q} \\\\
        \delta_{{\mathrm{0}}} ,  \dmGLnt{r_{{\mathrm{0}}}}   \odot  \Delta  \dmGLsym{,}  \dmGLmv{y}  \dmGLsym{:}  \dmGLnt{X_{{\mathrm{1}}}}  \boxplus  \dmGLnt{X_{{\mathrm{2}}}}  \vdash_\mathsf{G}  \dmGLnt{Y}  \dmGLsym{:}   \mathsf{Type} \\\\
        \delta_{{\mathrm{1}}}  \odot  \Delta  \vdash_\mathsf{G}  \sfoperator{inl} \, \dmGLnt{t}  \dmGLsym{:}  \dmGLnt{X_{{\mathrm{1}}}}  \boxplus  \dmGLnt{X_{{\mathrm{2}}}}\\\\
        \delta_{{\mathrm{2}}} ,  \dmGLnt{q}   \odot  \Delta  \vdash_\mathsf{G}  \dmGLnt{s_{{\mathrm{1}}}}  \dmGLsym{:}   (  \dmGLmv{x}  :^{ \dmGLnt{q} }  \dmGLnt{X_{{\mathrm{1}}}}  )  \to   \dmGLsym{[}  \sfoperator{inl} \, \dmGLmv{x}  \dmGLsym{/}  \dmGLmv{y}  \dmGLsym{]}  \dmGLnt{Y}\\\\
        \delta_{{\mathrm{2}}} ,  \dmGLnt{q}   \odot  \Delta  \vdash_\mathsf{G}  \dmGLnt{s_{{\mathrm{2}}}}  \dmGLsym{:}   (  \dmGLmv{x}  :^{ \dmGLnt{q} }  \dmGLnt{X_{{\mathrm{2}}}}  )  \to   \dmGLsym{[}  \sfoperator{inr} \, \dmGLmv{x}  \dmGLsym{/}  \dmGLmv{y}  \dmGLsym{]}  \dmGLnt{Y}
      }
      {
        \dmGLnt{q}   \cdot   \delta_{{\mathrm{1}}}   +  \delta_{{\mathrm{2}}}  \odot  \Delta  \vdash_\mathsf{G}   \sfoperator{case} _{ \dmGLnt{q} }  \sfoperator{inl} \, \dmGLnt{t}   \sfoperator{of}   \dmGLnt{s_{{\mathrm{1}}}}  ;  \dmGLnt{s_{{\mathrm{2}}}}   \dmGLsym{:}  \dmGLsym{[}  \sfoperator{inl} \, \dmGLnt{t}  \dmGLsym{/}  \dmGLmv{y}  \dmGLsym{]}  \dmGLnt{Y}
      }
    \]
    and a symmetrical version for $ \sfoperator{inr} $ instead of $ \sfoperator{inl} $.
    We will only treat the $ \sfoperator{inl} $ case.
    Want to show
    \[
      \dmGLnt{q}   \cdot   \delta_{{\mathrm{1}}}   +  \delta_{{\mathrm{2}}}  \odot  \Delta  \vdash_\mathsf{G}  \dmGLnt{s_{{\mathrm{1}}}} \, \dmGLnt{t}  \dmGLsym{:}  \dmGLsym{[}  \sfoperator{inl} \, \dmGLnt{t}  \dmGLsym{/}  \dmGLmv{y}  \dmGLsym{]}  \dmGLnt{Y}
    \]
    By \Cref{lem:coproductInversion} we get $ \delta_{{\mathrm{1}}}  \odot  \Delta  \vdash_\mathsf{G}  \dmGLnt{t}  \dmGLsym{:}  \dmGLnt{X_{{\mathrm{1}}}} $.
    Then, by \drulename[G]{app} we get
    $ \dmGLnt{q}   \cdot   \delta_{{\mathrm{1}}}   +  \delta_{{\mathrm{2}}}  \odot  \Delta  \vdash_\mathsf{G}  \dmGLnt{s_{{\mathrm{1}}}} \, \dmGLnt{t}  \dmGLsym{:}  \dmGLsym{[}  \dmGLnt{t}  \dmGLsym{/}  \dmGLmv{x}  \dmGLsym{]}  \dmGLsym{[}  \sfoperator{inl} \, \dmGLmv{x}  \dmGLsym{/}  \dmGLmv{y}  \dmGLsym{]}  \dmGLnt{Y} $,
    and since $ \dmGLsym{[}  \dmGLnt{t}  \dmGLsym{/}  \dmGLmv{x}  \dmGLsym{]}  \dmGLsym{[}  \sfoperator{inl} \, \dmGLmv{x}  \dmGLsym{/}  \dmGLmv{y}  \dmGLsym{]}  \dmGLnt{Y} $ is equal to $ \dmGLsym{[}  \sfoperator{inl} \, \dmGLnt{t}  \dmGLsym{/}  \dmGLmv{y}  \dmGLsym{]}  \dmGLnt{Y} $ we are done.

    \proofitem{Case \drulename[G]{app}}
    In this case there are two possible reductions.
    The first is by the reduction rule \drulename[gRED]{lambda}.
    In this case the constructed term must look as follows:
    \[
      \inferrule
      {
        \delta_{{\mathrm{0}}} ,  \dmGLnt{r_{{\mathrm{0}}}}   \odot  \Delta  \dmGLsym{,}  \dmGLmv{x}  \dmGLsym{:}  \dmGLnt{X}  \vdash_\mathsf{G}  \dmGLnt{Y}  \dmGLsym{:}   \mathsf{Type} \\\\
        \delta_{{\mathrm{1}}}  \odot  \Delta  \vdash_\mathsf{G}  \dmGLsym{(}  \lambda  \dmGLmv{x}  \dmGLsym{.}  \dmGLnt{t'}  \dmGLsym{)}  \dmGLsym{:}   (  \dmGLmv{x}  :^{ \dmGLnt{r} }  \dmGLnt{X}  )  \to   \dmGLnt{Y} \\\\
        \delta_{{\mathrm{2}}}  \odot  \Delta  \vdash_\mathsf{G}  \dmGLnt{t}  \dmGLsym{:}  \dmGLnt{X}
      }
      {
        \delta_{{\mathrm{1}}}  +   \dmGLnt{r}   \cdot   \delta_{{\mathrm{2}}}   \odot  \Delta  \vdash_\mathsf{G}  \dmGLsym{(}  \lambda  \dmGLmv{x}  \dmGLsym{.}  \dmGLnt{t'}  \dmGLsym{)} \, \dmGLnt{t}  \dmGLsym{:}  \dmGLsym{[}  \dmGLnt{t}  \dmGLsym{/}  \dmGLmv{x}  \dmGLsym{]}  \dmGLnt{Y}
      }
    \]
    By \Cref{lem:lambdaInversion}
    we get $ \delta_{{\mathrm{2}}} ,  \dmGLnt{r}   \odot  \Delta  \dmGLsym{,}  \dmGLmv{x}  \dmGLsym{:}  \dmGLnt{X}  \vdash_\mathsf{G}  \dmGLnt{t'}  \dmGLsym{:}  \dmGLnt{Y} $.
    By substitution we obtain
    $ \delta_{{\mathrm{1}}}  +   \dmGLnt{r}   \cdot   \delta_{{\mathrm{2}}}   \odot  \Delta  \vdash_\mathsf{G}  \dmGLsym{[}  \dmGLnt{t}  \dmGLsym{/}  \dmGLmv{x}  \dmGLsym{]}  \dmGLnt{t'}  \dmGLsym{:}  \dmGLsym{[}  \dmGLnt{t}  \dmGLsym{/}  \dmGLmv{x}  \dmGLsym{]}  \dmGLnt{Y} $ as desired.

    The second possible reduction is by the rule \drulename[gRED]{appL}.
    The application of the rule \drulename[G]{app} is thus:
    \[
      \inferrule
      {
        \delta_{{\mathrm{0}}} ,  \dmGLnt{r_{{\mathrm{0}}}}   \odot  \Delta  \dmGLsym{,}  \dmGLmv{x}  \dmGLsym{:}  \dmGLnt{X}  \vdash_\mathsf{G}  \dmGLnt{Y}  \dmGLsym{:}   \mathsf{Type} \\\\
        \delta_{{\mathrm{1}}}  \odot  \Delta  \vdash_\mathsf{G}  \dmGLnt{t_{{\mathrm{1}}}}  \dmGLsym{:}   (  \dmGLmv{x}  :^{ \dmGLnt{r} }  \dmGLnt{X}  )  \to   \dmGLnt{Y} \\\\
        \delta_{{\mathrm{2}}}  \odot  \Delta  \vdash_\mathsf{G}  \dmGLnt{t}  \dmGLsym{:}  \dmGLnt{X}
      }
      {
        \delta_{{\mathrm{1}}}  +   \dmGLnt{r}   \cdot   \delta_{{\mathrm{2}}}   \odot  \Delta  \vdash_\mathsf{G}  \dmGLnt{t_{{\mathrm{1}}}} \, \dmGLnt{t}  \dmGLsym{:}  \dmGLsym{[}  \dmGLnt{t}  \dmGLsym{/}  \dmGLmv{x}  \dmGLsym{]}  \dmGLnt{Y}
      }
    \]
    We have by assumption that $ \dmGLnt{t_{{\mathrm{1}}}}  \leadsto  \dmGLnt{t_{{\mathrm{2}}}} $ for some $ \dmGLnt{t_{{\mathrm{2}}}} $.
    By the inductive hypothesis we know that
    $ \delta_{{\mathrm{1}}}  \odot  \Delta  \vdash_\mathsf{G}  \dmGLnt{t_{{\mathrm{2}}}}  \dmGLsym{:}   (  \dmGLmv{x}  :^{ \dmGLnt{r} }  \dmGLnt{X}  )  \to   \dmGLnt{Y} $ and by applying
    \drulename[G]{app}, we get $ \delta_{{\mathrm{1}}}  +   \dmGLnt{r}   \cdot   \delta_{{\mathrm{2}}}   \odot  \Delta  \vdash_\mathsf{G}  \dmGLnt{t_{{\mathrm{2}}}} \, \dmGLnt{t}  \dmGLsym{:}  \dmGLsym{[}  \dmGLnt{t}  \dmGLsym{/}  \dmGLmv{x}  \dmGLsym{]}  \dmGLnt{Y} $ as desired.

    The cases for the mixed reduction are similar
    and require similar inversion lemmas as the ones stated above.
    We omit these cases.
\end{proof}

\section{Metatheory of \glad{}}
\label{app:dal-meta}

\dalsubst*

\begin{proof}
  The proof is by induction on the derivation of the first sequent in each point.

  \proofitem{Case \dalrulename[glad]{var}}
  We need to match the judgment
  \[
    \vec{0}  \dalsym{,}   1   \mid  \mathcal
{M}  \dalsym{,}  \mathfrak{m}   \odot   \Gamma  \dalsym{,}  \dalmv{y}  \colon  \dalnt{B}   \vdash _{ \mathfrak{m} }  \dalmv{y}  \colon  \dalnt{B}
  \]
  with the pattern
  \[
	\delta  \dalsym{,}  \dalnt{r}  \dalsym{,}  \delta'  \mid  \mathcal
{M}  \dalsym{,}  \mathfrak{m}_{{\mathrm{0}}}  \dalsym{,}  \mathcal
{M}'   \odot   \Gamma  \dalsym{,}  \dalmv{x}  \colon  \dalnt{A}  \dalsym{,}  \Gamma'   \vdash _{ \mathfrak{m} }  \dalnt{b}  \colon  \dalnt{B}
  \]
  There are two cases.
  In the first, we have $ \delta' = \emptyset $,
  and therefore $ A = B $.
  i.e. the judgment we are performing induction on was derived as
  \[
	\inferrule
    {
      \dalmv{x}  \notin \operatorname{\mathsf{dom} }  \Gamma\\
      \delta  \mid  \mathcal
{M}   \odot   \Gamma   \vdash _{ \mathfrak{m}_{{\mathrm{0}}} }  \dalnt{A}  \colon  \mathsf{Type}
    }
    {
      \vec{0}  \dalsym{,}   1   \mid  \mathcal
{M}  \dalsym{,}  \mathfrak{m}_{{\mathrm{0}}}   \odot   \Gamma  \dalsym{,}  \dalmv{x}  \colon  \dalnt{A}   \vdash _{ \mathfrak{m}_{{\mathrm{0}}} }  \dalmv{x}  \colon  \dalnt{A}
    }
  \]
  and we need to prove
  \[
	\vec{0}  \dalsym{+}    1    \cdot   \delta_{{\mathrm{0}}}   \mid  \mathcal
{M}   \odot   \Gamma   \vdash _{ \mathfrak{m}_{{\mathrm{0}}} }  \dalsym{[}  \dalnt{a_{{\mathrm{0}}}}  \dalsym{/}  \dalmv{x}  \dalsym{]}  \dalmv{x}  \colon  \dalsym{[}  \dalnt{a_{{\mathrm{0}}}}  \dalsym{/}  \dalmv{x}  \dalsym{]}  \dalnt{A}
  \]
  But this simplifies to the assumption
  $ \delta_{{\mathrm{0}}}  \mid  \mathcal
{M}   \odot   \Gamma   \vdash _{ \mathfrak{m}_{{\mathrm{0}}} }  \dalnt{a_{{\mathrm{0}}}}  \colon  \dalnt{A} $
  once we observe that the variable $ x $ is not free in $ A $,
  and hence $ \dalsym{[}  \dalnt{a_{{\mathrm{0}}}}  \dalsym{/}  \dalmv{x}  \dalsym{]}  \dalnt{A} = A $.

  In the second case we have $ \delta' \neq \emptyset $.
  Then our derivation has the form
  \[
	\inferrule
    {
      \dalmv{y}  \notin \operatorname{\mathsf{dom} }  \dalsym{(}  \Gamma  \dalsym{,}  \dalmv{x}  \colon  \dalnt{A}  \dalsym{,}  \Gamma'  \dalsym{)} \\
      \delta  \dalsym{,}  \dalnt{r}  \dalsym{,}  \delta'  \mid  \mathcal
{M}  \dalsym{,}  \mathfrak{m}  \dalsym{,}  \mathcal
{M}'   \odot   \Gamma  \dalsym{,}  \dalmv{x}  \colon  \dalnt{A}  \dalsym{,}  \Gamma'   \vdash _{ \mathfrak{n} }  \dalnt{B}  \colon  \mathsf{Type}
    }
    {
      \vec{0}  \dalsym{,}   0   \dalsym{,}  \vec{0}  \dalsym{,}   1   \mid  \mathcal
{M}  \dalsym{,}  \mathfrak{m}  \dalsym{,}  \mathcal
{M}'  \dalsym{,}  \mathfrak{n}   \odot   \Gamma  \dalsym{,}  \dalmv{x}  \colon  \dalnt{A}  \dalsym{,}  \Gamma'  \dalsym{,}  \dalmv{y}  \colon  \dalnt{B}   \vdash _{ \mathfrak{n} }  \dalmv{y}  \colon  \dalnt{B}
    }
  \]
  and we need to prove that
  \[
	\vec{0}  \dalsym{,}  \vec{0}  \dalsym{,}   1   \mid  \mathcal
{M}  \dalsym{,}  \mathcal
{M}'  \dalsym{,}  \mathfrak{n}   \odot   \Gamma  \dalsym{,}  \dalsym{[}  \dalnt{a_{{\mathrm{0}}}}  \dalsym{/}  \dalmv{x}  \dalsym{]}  \Gamma'  \dalsym{,}  \dalmv{y}  \colon  \dalsym{[}  \dalnt{a_{{\mathrm{0}}}}  \dalsym{/}  \dalmv{x}  \dalsym{]}  \dalnt{B}   \vdash _{ \mathfrak{n} }  \dalmv{y}  \colon  \dalsym{[}  \dalnt{a_{{\mathrm{0}}}}  \dalsym{/}  \dalmv{x}  \dalsym{]}  \dalnt{B}
  \]
  which follows from the inductive hypothesis
  \[
	\delta  \dalsym{+}   \dalnt{r}   \cdot   \delta_{{\mathrm{0}}}   \dalsym{,}  \delta'  \mid  \mathcal
{M}  \dalsym{,}  \mathcal
{M}'   \odot   \Gamma  \dalsym{,}  \dalsym{[}  \dalnt{a_{{\mathrm{0}}}}  \dalsym{/}  \dalmv{x}  \dalsym{]}  \Gamma'   \vdash _{ \mathfrak{n} }  \dalsym{[}  \dalnt{a_{{\mathrm{0}}}}  \dalsym{/}  \dalmv{x}  \dalsym{]}  \dalnt{B}  \colon  \mathsf{Type}
  \]
  and an application of the rule \dalrulename[glad]{var}.

  \proofitem{Case \dalrulename[glad]{weak}}
  We need to match the judgment 
  \[
	\delta  \dalsym{,}   0   \mid  \mathcal
{M}  \dalsym{,}  \mathfrak{l}   \odot   \Gamma  \dalsym{,}  \dalmv{y}  \colon  \dalnt{C}   \vdash _{ \mathfrak{m} }  \dalnt{b}  \colon  \dalnt{B}
  \]
  with the pattern
  \[
	\delta  \dalsym{,}  \dalnt{r}  \dalsym{,}  \delta'  \mid  \mathcal
{M}  \dalsym{,}  \mathfrak{m}_{{\mathrm{0}}}  \dalsym{,}  \mathcal
{M}   \odot   \Gamma  \dalsym{,}  \dalmv{x}  \colon  \dalnt{A}  \dalsym{,}  \Gamma'   \vdash _{ \mathfrak{m} }  \dalnt{b}  \colon  \dalnt{B}
  \]
  Two cases are possible.
  In the first, we have $ \delta' = \emptyset $ and therefore
  and therefore $ A = C $
  our derivation has the form
  \[
	\inferrule
    {
      \mathsf{Weak} (  \mathfrak{l}  ) \\
      \mathfrak{m}  \leq  \mathfrak{m}_{{\mathrm{0}}} \\
      \dalmv{x}  \notin \operatorname{\mathsf{dom} }  \Gamma \\
      \delta  \mid  \mathcal
{M}   \odot   \Gamma   \vdash _{ \mathfrak{m} }  \dalnt{b}  \colon  \dalnt{B} \\
      \delta_{{\mathrm{1}}}  \mid  \mathcal
{M}   \odot   \Gamma   \vdash _{ \mathfrak{m}_{{\mathrm{0}}} }  \dalnt{A}  \colon  \mathsf{Type}
    }
    {
      \delta  \dalsym{,}   0   \mid  \mathcal
{M}  \dalsym{,}  \mathfrak{m}_{{\mathrm{0}}}   \odot   \Gamma  \dalsym{,}  \dalmv{x}  \colon  \dalnt{A}   \vdash _{ \mathfrak{m} }  \dalnt{b}  \colon  \dalnt{B}
    }
  \]
  and we need to prove
  \[
	\delta  \dalsym{+}    0    \cdot   \delta_{{\mathrm{0}}}   \mid  \mathcal
{M}   \odot   \Gamma   \vdash _{ \mathfrak{m} }  \dalsym{[}  \dalnt{a_{{\mathrm{0}}}}  \dalsym{/}  \dalmv{x}  \dalsym{]}  \dalnt{b}  \colon  \dalsym{[}  \dalnt{a_{{\mathrm{0}}}}  \dalsym{/}  \dalmv{x}  \dalsym{]}  \dalnt{B}
  \]
  which is an assumption since $ x $ is not a free variable in $ b $ or $ B $.

  In the second case we have $ \delta' \neq \emptyset $.
  In this case, our derivation has the form
  \[
	\inferrule
    {
      \mathsf{Weak} (  \mathfrak{l}  ) \\
      \mathfrak{m}  \leq  \mathfrak{l} \\
      \dalmv{y}  \notin \operatorname{\mathsf{dom} }  \dalsym{(}  \Gamma  \dalsym{,}  \dalmv{x}  \colon  \dalnt{A}  \dalsym{,}  \Gamma'  \dalsym{)} \\
      \delta  \dalsym{,}  \dalnt{r}  \dalsym{,}  \delta'  \mid  \mathcal
{M}  \dalsym{,}  \mathfrak{m}_{{\mathrm{0}}}  \dalsym{,}  \mathcal
{M}'   \odot   \Gamma  \dalsym{,}  \dalmv{x}  \colon  \dalnt{A}  \dalsym{,}  \Gamma'   \vdash _{ \mathfrak{m} }  \dalnt{b}  \colon  \dalnt{B} \\
      \delta_{{\mathrm{1}}}  \dalsym{,}  \dalnt{r_{{\mathrm{1}}}}  \dalsym{,}  \delta'_{{\mathrm{1}}}  \mid  \mathcal
{M}  \dalsym{,}  \mathfrak{m}_{{\mathrm{0}}}  \dalsym{,}  \mathcal
{M}   \odot   \Gamma  \dalsym{,}  \dalmv{x}  \colon  \dalnt{A}  \dalsym{,}  \Gamma'   \vdash _{ \mathfrak{l} }  \dalnt{C}  \colon  \mathsf{Type} \\
    }
    {
      \delta  \dalsym{,}  \dalnt{r}  \dalsym{,}  \delta'  \dalsym{,}   0   \mid  \mathcal
{M}  \dalsym{,}  \mathfrak{m}_{{\mathrm{0}}}  \dalsym{,}  \mathcal
{M}'  \dalsym{,}  \mathfrak{l}   \odot   \Gamma  \dalsym{,}  \dalmv{x}  \colon  \dalnt{A}  \dalsym{,}  \Gamma'  \dalsym{,}  \dalmv{y}  \colon  \dalnt{C}   \vdash _{ \mathfrak{m} }  \dalnt{b}  \colon  \dalnt{B}
    }
  \]
  and we need to prove
  \begin{align*}
	& \delta  \dalsym{+}   \dalnt{r}   \cdot   \delta_{{\mathrm{0}}}   \dalsym{,}  \delta'  \dalsym{,}   0
    \mid \mathcal
{M}  \dalsym{,}  \mathcal
{M}'  \dalsym{,}  \mathfrak{l} \\
    & \at \Gamma  \dalsym{,}  \dalsym{[}  \dalnt{a_{{\mathrm{0}}}}  \dalsym{/}  \dalmv{x}  \dalsym{]}  \Gamma'  \dalsym{,}  \dalmv{y}  \colon  \dalsym{[}  \dalnt{a_{{\mathrm{0}}}}  \dalsym{/}  \dalmv{x}  \dalsym{]}  \dalnt{C}
    \proves_{\mathfrak{m}} \dalsym{[}  \dalnt{a_{{\mathrm{0}}}}  \dalsym{/}  \dalmv{x}  \dalsym{]}  \dalnt{b} : \dalsym{[}  \dalnt{a_{{\mathrm{0}}}}  \dalsym{/}  \dalmv{x}  \dalsym{]}  \dalnt{B}
  \end{align*}
  This follows from the inductive hypotheses
  \begin{align*}
	& \delta  \dalsym{+}   \dalnt{r}   \cdot   \delta_{{\mathrm{0}}}   \dalsym{,}  \delta'
    \mid \mathcal
{M}  \dalsym{,}  \mathcal
{M}' \\
    & \at \Gamma  \dalsym{,}  \dalsym{[}  \dalnt{a_{{\mathrm{0}}}}  \dalsym{/}  \dalmv{x}  \dalsym{]}  \Gamma'
      \proves_{\mathfrak{m}} \dalsym{[}  \dalnt{a_{{\mathrm{0}}}}  \dalsym{/}  \dalmv{x}  \dalsym{]}  \dalnt{b} : \dalsym{[}  \dalnt{a_{{\mathrm{0}}}}  \dalsym{/}  \dalmv{x}  \dalsym{]}  \dalnt{B} \\
    & \text{and} \\
	& \delta_{{\mathrm{1}}}  \dalsym{+}   \dalnt{r}   \cdot   \delta_{{\mathrm{0}}}   \dalsym{,}  \delta'_{{\mathrm{1}}}
    \mid \mathcal
{M}  \dalsym{,}  \mathcal
{M}' \\
    & \at \Gamma  \dalsym{,}  \dalsym{[}  \dalnt{a_{{\mathrm{0}}}}  \dalsym{/}  \dalmv{x}  \dalsym{]}  \Gamma'
      \proves_{\mathfrak{l}} \dalsym{[}  \dalnt{a_{{\mathrm{0}}}}  \dalsym{/}  \dalmv{x}  \dalsym{]}  \dalnt{C} : \mathsf{Type}
  \end{align*}
  by applying the rule \dalrulename[glad]{weak}.

  \proofitem{Cases \dalrulename[glad]{unit},
    \dalrulename[glad]{unitIntro}}
  Trivial.

  \proofitem{Case \dalrulename[glad]{unitElim}}
  Our derivation has the form
  \[
	\inferrule
    {
      \delta_{{\mathrm{1}}}  \dalsym{,}  \dalnt{r_{{\mathrm{1}}}}  \dalsym{,}  \delta'_{{\mathrm{1}}}  \mid  \mathcal
{M}  \dalsym{,}  \mathfrak{m}_{{\mathrm{0}}}  \dalsym{,}  \mathcal
{M}'   \odot   \Gamma  \dalsym{,}  \dalmv{x}  \colon  \dalnt{A}  \dalsym{,}  \Gamma'   \vdash _{ \mathfrak{m}_{{\mathrm{1}}} }  \dalnt{c}  \colon   \mathbf{I}_{ \mathfrak{m}_{{\mathrm{1}}} } \\
      \delta_{{\mathrm{3}}}  \dalsym{,}  \dalnt{r_{{\mathrm{3}}}}  \dalsym{,}  \delta'_{{\mathrm{3}}}  \dalsym{,}  \dalnt{q}  \mid  \mathcal
{M}  \dalsym{,}  \mathfrak{m}_{{\mathrm{0}}}  \dalsym{,}  \mathcal
{M}'  \dalsym{,}  \mathfrak{m}_{{\mathrm{1}}}   \odot   \Gamma  \dalsym{,}  \dalmv{x}  \colon  \dalnt{A}  \dalsym{,}  \Gamma'  \dalsym{,}  \dalmv{y}  \colon   \mathbf{I}_{ \mathfrak{m}_{{\mathrm{1}}} }    \vdash _{ \mathfrak{m}_{{\mathrm{2}}} }  \dalnt{B}  \colon  \mathsf{Type} \\
      \delta_{{\mathrm{2}}}  \dalsym{,}  \dalnt{r_{{\mathrm{2}}}}  \dalsym{,}  \delta'_{{\mathrm{2}}}  \mid  \mathcal
{M}  \dalsym{,}  \mathfrak{m}_{{\mathrm{0}}}  \dalsym{,}  \mathcal
{M}'   \odot   \Gamma  \dalsym{,}  \dalmv{x}  \colon  \dalnt{A}  \dalsym{,}  \Gamma'   \vdash _{ \mathfrak{m}_{{\mathrm{2}}} }  \dalnt{b}  \colon  \dalsym{[}   \star_{ \mathfrak{m}_{{\mathrm{1}}} }   \dalsym{/}  \dalmv{y}  \dalsym{]}  \dalnt{B}
    }
    {
      \delta_{{\mathrm{1}}}  \dalsym{+}  \delta_{{\mathrm{2}}}  \dalsym{,}  \dalnt{r_{{\mathrm{1}}}}  \dalsym{+}  \dalnt{r_{{\mathrm{2}}}}  \dalsym{,}  \delta'_{{\mathrm{1}}}  \dalsym{+}  \delta'_{{\mathrm{2}}}
      \mid \mathcal
{M}  \dalsym{,}  \mathfrak{m}_{{\mathrm{0}}}  \dalsym{,}  \mathcal
{M}' \at \Gamma  \dalsym{,}  \dalmv{x}  \colon  \dalnt{A}  \dalsym{,}  \Gamma' \\
      \proves_{ \mathfrak{m}_{{\mathrm{1}}} } \operatorname{\mathsf{let} }  \star_{ \mathfrak{m}_{{\mathrm{1}}} } =  \dalnt{c}   \operatorname{\mathsf{in} }   \dalnt{b} : \dalsym{[}  \dalnt{c}  \dalsym{/}  \dalmv{y}  \dalsym{]}  \dalnt{B}
    }
  \]
  and we need to prove
  \begin{align*}
    &
      \delta_{{\mathrm{1}}}  \dalsym{+}  \delta_{{\mathrm{2}}}  \dalsym{+}   \dalsym{(}  \dalnt{r_{{\mathrm{1}}}}  \dalsym{+}  \dalnt{r_{{\mathrm{2}}}}  \dalsym{)}   \cdot   \delta_{{\mathrm{0}}}   \dalsym{,}  \delta'_{{\mathrm{1}}}  \dalsym{+}  \delta'_{{\mathrm{2}}}
      \mid \mathcal
{M}  \dalsym{,}  \mathcal
{M}'
      \at \Gamma  \dalsym{,}  \dalsym{[}  \dalnt{a_{{\mathrm{0}}}}  \dalsym{/}  \dalmv{x}  \dalsym{]}  \Gamma'
    \\
    & \proves_{ \mathfrak{m}_{{\mathrm{1}}} }
      \dalsym{[}  \dalnt{a_{{\mathrm{0}}}}  \dalsym{/}  \dalmv{x}  \dalsym{]}  \dalsym{(}   \operatorname{\mathsf{let} }  \star_{ \mathfrak{m}_{{\mathrm{1}}} } =  \dalnt{c}   \operatorname{\mathsf{in} }   \dalnt{b}   \dalsym{)} : \dalsym{[}  \dalnt{a_{{\mathrm{0}}}}  \dalsym{/}  \dalmv{x}  \dalsym{]}  \dalsym{[}  \dalnt{c}  \dalsym{/}  \dalmv{y}  \dalsym{]}  \dalnt{B}
  \end{align*}
  We have the inductive hypotheses
  \begin{align*}
    &
      \delta_{{\mathrm{1}}}  \dalsym{+}   \dalnt{r_{{\mathrm{1}}}}   \cdot   \delta_{{\mathrm{0}}}   \dalsym{,}  \delta'_{{\mathrm{1}}}  \mid  \mathcal
{M}  \dalsym{,}  \mathcal
{M}'   \odot   \Gamma  \dalsym{,}  \dalsym{[}  \dalnt{a_{{\mathrm{0}}}}  \dalsym{/}  \dalmv{x}  \dalsym{]}  \Gamma'   \vdash _{ \mathfrak{m}_{{\mathrm{1}}} }  \dalsym{[}  \dalnt{a_{{\mathrm{0}}}}  \dalsym{/}  \dalmv{x}  \dalsym{]}  \dalnt{c}  \colon  \dalsym{[}  \dalnt{a_{{\mathrm{0}}}}  \dalsym{/}  \dalmv{x}  \dalsym{]}  \dalnt{C}
    \\
    & \delta_{{\mathrm{2}}}  \dalsym{+}   \dalnt{r_{{\mathrm{2}}}}   \cdot   \delta_{{\mathrm{0}}}   \dalsym{,}  \delta'_{{\mathrm{2}}} \mid \mathcal
{M}  \dalsym{,}  \mathcal
{M}' \at \Gamma  \dalsym{,}  \dalsym{[}  \dalnt{a_{{\mathrm{0}}}}  \dalsym{/}  \dalmv{x}  \dalsym{]}  \Gamma' \\
    & \quad \proves_{ \mathfrak{m}_{{\mathrm{2}}} } \dalsym{[}  \dalnt{a_{{\mathrm{0}}}}  \dalsym{/}  \dalmv{x}  \dalsym{]}  \dalnt{b} : \dalsym{[}  \dalnt{a_{{\mathrm{0}}}}  \dalsym{/}  \dalmv{x}  \dalsym{]}  \dalsym{[}   \star_{ \mathfrak{m}_{{\mathrm{1}}} }   \dalsym{/}  \dalmv{y}  \dalsym{]}  \dalnt{B}
    \\
    & \delta_{{\mathrm{3}}}  \dalsym{+}   \dalnt{r_{{\mathrm{3}}}}   \cdot   \delta_{{\mathrm{0}}}   \dalsym{,}  \delta'_{{\mathrm{3}}} \mid \mathcal
{M}  \dalsym{,}  \mathcal
{M}'  \dalsym{,}  \mathfrak{m}_{{\mathrm{1}}} \at \Gamma  \dalsym{,}  \dalsym{[}  \dalnt{a_{{\mathrm{0}}}}  \dalsym{/}  \dalmv{x}  \dalsym{]}  \Gamma'  \dalsym{,}  \dalmv{y}  \colon   \mathbf{I}_{ \mathfrak{m}_{{\mathrm{1}}} } \\
    & \quad \proves_{ \mathfrak{m}_{{\mathrm{2}}} } \dalsym{[}  \dalnt{a_{{\mathrm{0}}}}  \dalsym{/}  \dalmv{x}  \dalsym{]}  \dalnt{B} : \mathsf{Type}
  \end{align*}
  In the second hypothesis,
  the type $ \dalsym{[}  \dalnt{a_{{\mathrm{0}}}}  \dalsym{/}  \dalmv{x}  \dalsym{]}  \dalsym{[}   \star_{ \mathfrak{m}_{{\mathrm{1}}} }   \dalsym{/}  \dalmv{y}  \dalsym{]}  \dalnt{B} $ is equal to 
  $ \dalsym{[}   \star_{ \mathfrak{m}_{{\mathrm{1}}} }   \dalsym{/}  \dalmv{y}  \dalsym{]}  \dalsym{[}  \dalnt{a_{{\mathrm{0}}}}  \dalsym{/}  \dalmv{x}  \dalsym{]}  \dalnt{B} $ since $ y $ is not free in $ \dalnt{a_{{\mathrm{0}}}} $
  and $ \star_{ \mathfrak{m}_{{\mathrm{1}}} } $ has no free variables.
  Furthermore, the types
  $ \dalsym{[}  \dalnt{a_{{\mathrm{0}}}}  \dalsym{/}  \dalmv{x}  \dalsym{]}  \dalsym{[}  \dalnt{c}  \dalsym{/}  \dalmv{y}  \dalsym{]}  \dalnt{B} $
  and $ \dalsym{[}  \dalsym{[}  \dalnt{a_{{\mathrm{0}}}}  \dalsym{/}  \dalmv{x}  \dalsym{]}  \dalnt{c}  \dalsym{/}  \dalmv{y}  \dalsym{]}  \dalsym{[}  \dalnt{a_{{\mathrm{0}}}}  \dalsym{/}  \dalmv{x}  \dalsym{]}  \dalnt{B} $ are equal.
  Because of this, we may apply the rule \dalrulename[glad]{unitElim}
  to the three inductive hypotheses to obtain the desired judgment.

  \proofitem{Case \dalrulename[glad]{function}}
  Our derivation has the form
  \[
	\inferrule
    {
      \delta  \dalsym{,}  \dalnt{r}  \dalsym{,}  \delta'  \dalsym{,}  \dalnt{q_{{\mathrm{0}}}}  \mid  \mathcal
{M}  \dalsym{,}  \mathfrak{m}_{{\mathrm{0}}}  \dalsym{,}  \mathcal
{M}'  \dalsym{,}  \mathfrak{m}   \odot   \Gamma  \dalsym{,}  \dalmv{x}  \colon  \dalnt{A}  \dalsym{,}  \Gamma'  \dalsym{,}  \dalmv{y}  \colon  \dalnt{B}   \vdash _{ \mathfrak{n} }  \dalnt{C}  \colon  \mathsf{Type} \\
    }
    {
      \delta  \dalsym{,}  \dalnt{r}  \dalsym{,}  \delta'  \dalsym{,}  \dalnt{q_{{\mathrm{0}}}}  \mid  \mathcal
{M}  \dalsym{,}  \mathfrak{m}_{{\mathrm{0}}}  \dalsym{,}  \mathcal
{M}'   \odot   \Gamma  \dalsym{,}  \dalmv{x}  \colon  \dalnt{A}  \dalsym{,}  \Gamma'   \vdash _{ \mathfrak{n} }   (  \dalmv{y}  :^{ \dalnt{q}  :  \mathfrak{m} }  \dalnt{B}  )  \multimap   \dalnt{C}   \colon  \mathsf{Type}
    }
  \]
  and we need to prove
  \begin{align*}
	& \delta  \dalsym{+}   \dalnt{r}   \cdot   \delta_{{\mathrm{0}}}   \dalsym{,}  \delta' \mid \mathcal
{M}  \dalsym{,}  \mathcal
{M}' \at \Gamma  \dalsym{,}  \dalsym{[}  \dalnt{a_{{\mathrm{0}}}}  \dalsym{/}  \dalmv{x}  \dalsym{]}  \Gamma' \\
    & \quad \proves_{\mf n} \dalsym{[}  \dalnt{a_{{\mathrm{0}}}}  \dalsym{/}  \dalmv{x}  \dalsym{]}  \dalsym{(}   (  \dalmv{y}  :^{ \dalnt{q}  :  \mathfrak{m} }  \dalnt{B}  )  \multimap   \dalnt{C}   \dalsym{)} : \mathsf{Type}
  \end{align*}
  Since the types
  $ \dalsym{[}  \dalnt{a_{{\mathrm{0}}}}  \dalsym{/}  \dalmv{x}  \dalsym{]}  \dalsym{(}   (  \dalmv{y}  :^{ \dalnt{q}  :  \mathfrak{m} }  \dalnt{B}  )  \multimap   \dalnt{C}   \dalsym{)}  $
  and
  $ (  \dalmv{y}  :^{ \dalnt{q}  :  \mathfrak{m} }  \dalsym{[}  \dalnt{a_{{\mathrm{0}}}}  \dalsym{/}  \dalmv{x}  \dalsym{]}  \dalnt{B}  )  \multimap   \dalsym{[}  \dalnt{a_{{\mathrm{0}}}}  \dalsym{/}  \dalmv{x}  \dalsym{]}  \dalnt{C}  $
  are equal, this follows from the inductive hypothesis
  \begin{align*}
    & \delta  \dalsym{+}   \dalnt{r}   \cdot   \delta_{{\mathrm{0}}}   \dalsym{,}  \delta'  \dalsym{,}  \dalnt{q_{{\mathrm{0}}}} \mid \mathcal
{M}  \dalsym{,}  \mathcal
{M}'  \dalsym{,}  \mathfrak{m} \\
    & \quad \at \Gamma  \dalsym{,}  \dalsym{[}  \dalnt{a_{{\mathrm{0}}}}  \dalsym{/}  \dalmv{x}  \dalsym{]}  \Gamma'  \dalsym{,}  \dalmv{y}  \colon  \dalsym{[}  \dalnt{a_{{\mathrm{0}}}}  \dalsym{/}  \dalmv{x}  \dalsym{]}  \dalnt{B} 
      \proves_{ \dalmv{n}} \dalsym{[}  \dalnt{a_{{\mathrm{0}}}}  \dalsym{/}  \dalmv{x}  \dalsym{]}  \dalnt{C} : \mathsf{Type}
  \end{align*}

  \proofitem{Case \dalrulename[glad]{lambda}}
  Our derivation has the form
  \[
    \inferrule
    {
      \delta  \dalsym{,}  \dalnt{r}  \dalsym{,}  \delta'  \dalsym{,}  \dalnt{q}  \mid  \mathcal
{M}  \dalsym{,}  \mathfrak{m}_{{\mathrm{0}}}  \dalsym{,}  \mathcal
{M}'  \dalsym{,}  \mathfrak{m}   \odot   \Gamma  \dalsym{,}  \dalmv{x}  \colon  \dalnt{A}  \dalsym{,}  \Gamma'  \dalsym{,}  \dalmv{y}  \colon  \dalnt{B}   \vdash _{ \mathfrak{n} }  \dalnt{c}  \colon  \dalnt{C}
    }
    {
      \delta  \dalsym{,}  \dalnt{r}  \dalsym{,}  \delta'  \mid  \mathcal
{M}  \dalsym{,}  \mathfrak{m}_{{\mathrm{0}}}  \dalsym{,}  \mathcal
{M}'   \odot   \Gamma  \dalsym{,}  \dalmv{x}  \colon  \dalnt{A}  \dalsym{,}  \Gamma'   \vdash _{ \mathfrak{n} }   \lambda  \dalmv{y}  .  \dalnt{c}   \colon   (  \dalmv{y}  :^{ \dalnt{q}  :  \mathfrak{m} }  \dalnt{B}  )  \multimap   \dalnt{C}
    }
  \]
  We need to prove
  \begin{align*}
	& \delta  \dalsym{+}   \dalnt{r}   \cdot   \delta_{{\mathrm{0}}}   \dalsym{,}  \delta'  \mid \mathcal
{M}  \dalsym{,}  \mathcal
{M}' \at \Gamma  \dalsym{,}  \dalsym{[}  \dalnt{a_{{\mathrm{0}}}}  \dalsym{/}  \dalmv{x}  \dalsym{]}  \Gamma' \proves_{ \dalmv{n}}\\
    & \quad \dalsym{[}  \dalnt{a_{{\mathrm{0}}}}  \dalsym{/}  \dalmv{x}  \dalsym{]}  \dalsym{(}   \lambda  \dalmv{y}  .  \dalnt{c}   \dalsym{)} : (  \dalmv{y}  :^{ \dalnt{q}  :  \mathfrak{m} }  \dalsym{[}  \dalnt{a_{{\mathrm{0}}}}  \dalsym{/}  \dalmv{x}  \dalsym{]}  \dalnt{B}  )  \multimap   \dalsym{[}  \dalnt{a_{{\mathrm{0}}}}  \dalsym{/}  \dalmv{x}  \dalsym{]}  \dalnt{C}
  \end{align*}
  But this follows immediately from the inductive hypothesis
  \begin{align*}
	& \delta  \dalsym{+}   \dalnt{r}   \cdot   \delta_{{\mathrm{0}}}   \dalsym{,}  \delta'  \dalsym{,}  \dalnt{q}  \mid \mathcal
{M}  \dalsym{,}  \mathcal
{M}'  \dalsym{,}  \mathfrak{m}
      \at \Gamma  \dalsym{,}  \dalsym{[}  \dalnt{a_{{\mathrm{0}}}}  \dalsym{/}  \dalmv{x}  \dalsym{]}  \Gamma'  \dalsym{,}  \dalmv{y}  \colon  \dalsym{[}  \dalnt{a_{{\mathrm{0}}}}  \dalsym{/}  \dalmv{x}  \dalsym{]}  \dalnt{B} \proves_{ \dalmv{n}}\\
    & \quad \dalsym{[}  \dalnt{a_{{\mathrm{0}}}}  \dalsym{/}  \dalmv{x}  \dalsym{]}  \dalnt{c} : \dalsym{[}  \dalnt{a_{{\mathrm{0}}}}  \dalsym{/}  \dalmv{x}  \dalsym{]}  \dalnt{C}
  \end{align*}
  and the fact that the terms
  $ \dalsym{[}  \dalnt{a_{{\mathrm{0}}}}  \dalsym{/}  \dalmv{x}  \dalsym{]}  \dalsym{(}   \lambda  \dalmv{y}  .  \dalnt{c}   \dalsym{)} $ and $ \lambda  \dalmv{y}  .  \dalsym{[}  \dalnt{a_{{\mathrm{0}}}}  \dalsym{/}  \dalmv{x}  \dalsym{]}  \dalnt{c} $ are equal.

  \proofitem{Case \dalrulename[glad]{app}}
  The derivation has the form
  \[
    \inferrule
    {
      \delta_{{\mathrm{1}}}  \dalsym{,}  \dalnt{r_{{\mathrm{1}}}}  \dalsym{,}  \delta'_{{\mathrm{1}}}  \mid  \mathcal
{M}  \dalsym{,}  \mathfrak{m}_{{\mathrm{0}}}  \dalsym{,}  \mathcal
{M}'   \odot   \Gamma  \dalsym{,}  \dalmv{x}  \colon  \dalnt{A}  \dalsym{,}  \Gamma'   \vdash _{ \mathfrak{m} }  \dalnt{b}  \colon  \dalnt{B}\\
      \delta_{{\mathrm{2}}}  \dalsym{,}  \dalnt{r_{{\mathrm{2}}}}  \dalsym{,}  \delta'_{{\mathrm{2}}}  \mid  \mathcal
{M}  \dalsym{,}  \mathfrak{m}_{{\mathrm{0}}}  \dalsym{,}  \mathcal
{M}'   \odot   \Gamma  \dalsym{,}  \dalmv{x}  \colon  \dalnt{A}  \dalsym{,}  \Gamma'   \vdash _{ \mathfrak{n} }  \dalnt{c}  \colon   (  \dalmv{y}  :^{ \dalnt{q}  :  \mathfrak{m} }  \dalnt{B}  )  \multimap   \dalnt{C}
    }
    {
      \dalsym{(}  \delta_{{\mathrm{1}}}  \dalsym{,}  \dalnt{r_{{\mathrm{1}}}}  \dalsym{,}  \delta'_{{\mathrm{1}}}  \dalsym{)}  \dalsym{+}   \dalnt{q}   \cdot   \dalsym{(}  \delta_{{\mathrm{2}}}  \dalsym{,}  \dalnt{r_{{\mathrm{2}}}}  \dalsym{,}  \delta'_{{\mathrm{2}}}  \dalsym{)}
      \mid \mathcal
{M}  \dalsym{,}  \mathfrak{m}_{{\mathrm{0}}}  \dalsym{,}  \mathcal
{M}' 
      \at \Gamma  \dalsym{,}  \dalmv{x}  \colon  \dalnt{A}  \dalsym{,}  \Gamma'\\
      \proves_{\mf n} \dalnt{c} \, \dalnt{b} : \dalsym{[}  \dalnt{b}  \dalsym{/}  \dalmv{y}  \dalsym{]}  \dalnt{C}.
    }
  \]
  and we need to show that
  \begin{align*}
    & \delta_{{\mathrm{1}}}  \dalsym{+}   \dalnt{q}   \cdot   \delta_{{\mathrm{2}}}   \dalsym{+}   \dalsym{(}   \dalnt{r_{{\mathrm{1}}}}  \dalsym{+}  \dalnt{q}   \cdot   \dalnt{r_{{\mathrm{2}}}}   \dalsym{)}   \cdot   \delta_{{\mathrm{0}}}   \dalsym{,}  \delta'_{{\mathrm{1}}}  \dalsym{+}   \dalnt{q}   \cdot   \delta'_{{\mathrm{2}}} \mid \mathcal
{M}  \dalsym{,}  \mathcal
{M}' \\
    & \quad \at \Gamma  \dalsym{,}  \dalsym{[}  \dalnt{a_{{\mathrm{0}}}}  \dalsym{/}  \dalmv{x}  \dalsym{]}  \Gamma'
      \proves_{\mf n} \dalsym{[}  \dalnt{a_{{\mathrm{0}}}}  \dalsym{/}  \dalmv{x}  \dalsym{]}  \dalsym{(}  \dalnt{c} \, \dalnt{b}  \dalsym{)} : \dalsym{[}  \dalnt{a_{{\mathrm{0}}}}  \dalsym{/}  \dalmv{x}  \dalsym{]}  \dalsym{[}  \dalnt{b}  \dalsym{/}  \dalmv{y}  \dalsym{]}  \dalnt{C}
  \end{align*}
  The inductive hypotheses are
  \begin{align*}
	& \delta_{{\mathrm{1}}}  \dalsym{+}   \dalnt{r_{{\mathrm{1}}}}   \cdot   \delta_{{\mathrm{0}}}   \dalsym{,}  \delta'_{{\mathrm{1}}}  \mid  \mathcal
{M}  \dalsym{,}  \mathcal
{M}'   \odot   \Gamma  \dalsym{,}  \dalsym{[}  \dalnt{a_{{\mathrm{0}}}}  \dalsym{/}  \dalmv{x}  \dalsym{]}  \Gamma'   \vdash _{ \mathfrak{m} }  \dalsym{[}  \dalnt{a_{{\mathrm{0}}}}  \dalsym{/}  \dalmv{x}  \dalsym{]}  \dalnt{b}  \colon  \dalsym{[}  \dalnt{a_{{\mathrm{0}}}}  \dalsym{/}  \dalmv{x}  \dalsym{]}  \dalnt{B} \\
    & \delta_{{\mathrm{2}}}  \dalsym{+}   \dalnt{r_{{\mathrm{2}}}}   \cdot   \delta_{{\mathrm{0}}}   \dalsym{,}  \delta'_{{\mathrm{2}}} \mid \mathcal
{M}  \dalsym{,}  \mathcal
{M}' \at \Gamma  \dalsym{,}  \dalsym{[}  \dalnt{a_{{\mathrm{0}}}}  \dalsym{/}  \dalmv{x}  \dalsym{]}  \Gamma' \\
    & \quad \proves_{\mf n} \dalsym{[}  \dalnt{a_{{\mathrm{0}}}}  \dalsym{/}  \dalmv{x}  \dalsym{]}  \dalnt{c} : (  \dalmv{y}  :^{ \dalnt{q}  :  \mathfrak{m} }  \dalsym{[}  \dalnt{a_{{\mathrm{0}}}}  \dalsym{/}  \dalmv{x}  \dalsym{]}  \dalnt{B}  )  \multimap   \dalsym{[}  \dalnt{a_{{\mathrm{0}}}}  \dalsym{/}  \dalmv{x}  \dalsym{]}  \dalnt{C}
  \end{align*}
  The types
  $ \dalsym{[}  \dalnt{a_{{\mathrm{0}}}}  \dalsym{/}  \dalmv{x}  \dalsym{]}  \dalsym{[}  \dalnt{b}  \dalsym{/}  \dalmv{y}  \dalsym{]}  \dalnt{C} $
  and
  $ \dalsym{[}  \dalsym{[}  \dalnt{a_{{\mathrm{0}}}}  \dalsym{/}  \dalmv{x}  \dalsym{]}  \dalnt{b}  \dalsym{/}  \dalmv{y}  \dalsym{]}  \dalsym{[}  \dalnt{a_{{\mathrm{0}}}}  \dalsym{/}  \dalmv{x}  \dalsym{]}  \dalnt{C} $
  are equal.
  Therefore we may apply the \dalrulename[glad]{app} rule to the inductive hypotheses to obtain
  \begin{align*}
    & \delta_{{\mathrm{1}}}  \dalsym{+}  \delta_{{\mathrm{2}}}  \dalsym{+}   \dalsym{(}   \dalnt{r_{{\mathrm{1}}}}  \dalsym{+}  \dalnt{q}   \cdot   \dalnt{r_{{\mathrm{2}}}}   \dalsym{)}   \cdot   \delta_{{\mathrm{0}}}   \dalsym{,}  \delta'_{{\mathrm{1}}}  \dalsym{+}   \dalnt{q}   \cdot   \delta'_{{\mathrm{2}}} \mid \mathcal
{M}  \dalsym{,}  \mathcal
{M}' \at \Gamma  \dalsym{,}  \dalsym{[}  \dalnt{a_{{\mathrm{0}}}}  \dalsym{/}  \dalmv{x}  \dalsym{]}  \Gamma' \\
    & \quad \proves_{\mf m} \dalsym{[}  \dalnt{a_{{\mathrm{0}}}}  \dalsym{/}  \dalmv{x}  \dalsym{]}  \dalnt{c} \, \dalsym{[}  \dalnt{a_{{\mathrm{0}}}}  \dalsym{/}  \dalmv{x}  \dalsym{]}  \dalnt{b} : \dalsym{[}  \dalsym{[}  \dalnt{a_{{\mathrm{0}}}}  \dalsym{/}  \dalmv{x}  \dalsym{]}  \dalnt{b}  \dalsym{/}  \dalmv{y}  \dalsym{]}  \dalsym{[}  \dalnt{a_{{\mathrm{0}}}}  \dalsym{/}  \dalmv{x}  \dalsym{]}  \dalnt{C}
  \end{align*}
  This is the desired judgment once we replace the type \linebreak
  $ \dalsym{[}  \dalsym{[}  \dalnt{a_{{\mathrm{0}}}}  \dalsym{/}  \dalmv{x}  \dalsym{]}  \dalnt{b}  \dalsym{/}  \dalmv{y}  \dalsym{]}  \dalsym{[}  \dalnt{a_{{\mathrm{0}}}}  \dalsym{/}  \dalmv{x}  \dalsym{]}  \dalnt{C} $
  by
  $ \dalsym{[}  \dalnt{a_{{\mathrm{0}}}}  \dalsym{/}  \dalmv{x}  \dalsym{]}  \dalsym{[}  \dalnt{b}  \dalsym{/}  \dalmv{y}  \dalsym{]}  \dalnt{C} $
  since we observed this equality before.

  \proofitem{Case \dalrulename[glad]{tensorIntro}}
  Our derivation has the form
  \[
    \inferrule
    {
      \delta_{{\mathrm{3}}}  \dalsym{,}  \dalnt{r_{{\mathrm{3}}}}  \dalsym{,}  \delta'_{{\mathrm{3}}}  \mid  \mathcal
{M}  \dalsym{,}  \mathfrak{m}_{{\mathrm{0}}}  \dalsym{,}  \mathcal
{M}'   \odot   \Gamma  \dalsym{,}  \dalmv{x}  \colon  \dalnt{A}  \dalsym{,}  \Gamma'  \dalsym{,}  \dalmv{y}  \colon  \dalnt{B}   \vdash _{ \mathfrak{n} }  \dalnt{C}  \colon  \mathsf{Type} \\
      \delta_{{\mathrm{1}}}  \dalsym{,}  \dalnt{r_{{\mathrm{1}}}}  \dalsym{,}  \delta'_{{\mathrm{1}}}  \mid  \mathcal
{M}  \dalsym{,}  \mathfrak{m}_{{\mathrm{0}}}  \dalsym{,}  \mathcal
{M}'   \odot   \Gamma  \dalsym{,}  \dalmv{x}  \colon  \dalnt{A}  \dalsym{,}  \Gamma'   \vdash _{ \mathfrak{m} }  \dalnt{b}  \colon  \dalnt{B} \\
      \delta_{{\mathrm{2}}}  \dalsym{,}  \dalnt{r_{{\mathrm{2}}}}  \dalsym{,}  \delta'_{{\mathrm{2}}}  \mid  \mathcal
{M}  \dalsym{,}  \mathfrak{m}_{{\mathrm{0}}}  \dalsym{,}  \mathcal
{M}'   \odot   \Gamma  \dalsym{,}  \dalmv{x}  \colon  \dalnt{A}  \dalsym{,}  \Gamma'   \vdash _{ \mathfrak{n} }  \dalnt{c}  \colon  \dalsym{[}  \dalnt{b}  \dalsym{/}  \dalmv{y}  \dalsym{]}  \dalnt{C}
    }
    {
      \dalnt{q}   \cdot   \dalsym{(}  \delta_{{\mathrm{1}}}  \dalsym{,}  \dalnt{r_{{\mathrm{1}}}}  \dalsym{,}  \dalnt{r'_{{\mathrm{1}}}}  \dalsym{)}   \dalsym{+}  \dalsym{(}  \delta_{{\mathrm{2}}}  \dalsym{,}  \dalnt{r_{{\mathrm{2}}}}  \dalsym{,}  \delta'_{{\mathrm{2}}}  \dalsym{)}
      \mid \mathcal
{M}  \dalsym{,}  \mathfrak{m}_{{\mathrm{0}}}  \dalsym{,}  \mathcal
{M}' \\
      \at \Gamma  \dalsym{,}  \dalmv{x}  \colon  \dalnt{A}  \dalsym{,}  \Gamma'
      \proves_{\mf n} \dalsym{(}  \dalnt{b}  \dalsym{,}  \dalnt{c}  \dalsym{)} : (  \dalmv{y}  :^{ \dalnt{q}  :  \mathfrak{m} }  \dalnt{B}  )  \otimes   \dalnt{C}
    }
  \]
  and we need to prove
  \begin{align*}
    & \dalnt{q}   \cdot   \delta_{{\mathrm{1}}}   \dalsym{+}  \delta_{{\mathrm{2}}}  \dalsym{+}   \dalsym{(}   \dalnt{q}   \cdot   \dalnt{r_{{\mathrm{1}}}}   \dalsym{+}  \dalnt{r_{{\mathrm{2}}}}  \dalsym{)}   \cdot   \delta_{{\mathrm{0}}}   \dalsym{,}   \dalnt{q}   \cdot   \delta'_{{\mathrm{1}}}   \dalsym{+}  \delta'_{{\mathrm{2}}}
      \mid \mathcal
{M}  \dalsym{,}  \mathcal
{M}'
    \\ & \quad
      \at \Gamma  \dalsym{,}  \dalsym{[}  \dalnt{a_{{\mathrm{0}}}}  \dalsym{/}  \dalmv{x}  \dalsym{]}  \Gamma' 
      \proves_{\mf n} \dalsym{[}  \dalnt{a_{{\mathrm{0}}}}  \dalsym{/}  \dalmv{x}  \dalsym{]}  \dalsym{(}  \dalnt{b}  \dalsym{,}  \dalnt{c}  \dalsym{)} : \dalsym{[}  \dalnt{a_{{\mathrm{0}}}}  \dalsym{/}  \dalmv{x}  \dalsym{]}  \dalsym{(}   (  \dalmv{y}  :^{ \dalnt{q}  :  \mathfrak{m} }  \dalnt{B}  )  \otimes   \dalnt{C}   \dalsym{)}
  \end{align*}
  We have the inductive hypotheses
  \begin{align*}
    &
      \delta_{{\mathrm{3}}}  \dalsym{+}   \dalnt{r_{{\mathrm{3}}}}   \cdot   \delta_{{\mathrm{0}}}   \dalsym{,}  \delta'_{{\mathrm{3}}}
      \mid \mathcal
{M}  \dalsym{,}  \mathcal
{M}'
      \at \Gamma  \dalsym{,}  \dalsym{[}  \dalnt{a_{{\mathrm{0}}}}  \dalsym{/}  \dalmv{x}  \dalsym{]}  \Gamma'
      \proves_{\mf n} \dalsym{[}  \dalnt{a_{{\mathrm{0}}}}  \dalsym{/}  \dalmv{x}  \dalsym{]}  \dalnt{C} : \mathsf{Type}
    \\ &
      \delta_{{\mathrm{1}}}  \dalsym{+}   \dalnt{r_{{\mathrm{1}}}}   \cdot   \delta_{{\mathrm{0}}}   \dalsym{,}  \delta'_{{\mathrm{1}}}
      \mid \mathcal
{M}  \dalsym{,}  \mathcal
{M}'
      \at \Gamma  \dalsym{,}  \dalsym{[}  \dalnt{a_{{\mathrm{0}}}}  \dalsym{/}  \dalmv{x}  \dalsym{]}  \Gamma'
      \proves_{\mf n} \dalsym{[}  \dalnt{a_{{\mathrm{0}}}}  \dalsym{/}  \dalmv{x}  \dalsym{]}  \dalnt{b} : \dalsym{[}  \dalnt{a_{{\mathrm{0}}}}  \dalsym{/}  \dalmv{x}  \dalsym{]}  \dalnt{B}
    \\ &
      \delta_{{\mathrm{2}}}  \dalsym{+}   \dalnt{r_{{\mathrm{2}}}}   \cdot   \delta_{{\mathrm{0}}}   \dalsym{,}  \delta'_{{\mathrm{2}}}
      \mid \mathcal
{M}  \dalsym{,}  \mathcal
{M}'
      \at \Gamma  \dalsym{,}  \dalsym{[}  \dalnt{a_{{\mathrm{0}}}}  \dalsym{/}  \dalmv{x}  \dalsym{]}  \Gamma'
    \\ & \quad
      \proves_{\mf n} \dalsym{[}  \dalnt{a_{{\mathrm{0}}}}  \dalsym{/}  \dalmv{x}  \dalsym{]}  \dalnt{c} : \dalsym{[}  \dalnt{a_{{\mathrm{0}}}}  \dalsym{/}  \dalmv{x}  \dalsym{]}  \dalsym{[}  \dalnt{b}  \dalsym{/}  \dalmv{y}  \dalsym{]}  \dalnt{C}
  \end{align*}
  The types
  $ \dalsym{[}  \dalnt{a_{{\mathrm{0}}}}  \dalsym{/}  \dalmv{x}  \dalsym{]}  \dalsym{[}  \dalnt{b}  \dalsym{/}  \dalmv{y}  \dalsym{]}  \dalnt{C} $
  and
  $ \dalsym{[}  \dalsym{[}  \dalnt{a_{{\mathrm{0}}}}  \dalsym{/}  \dalmv{x}  \dalsym{]}  \dalnt{b}  \dalsym{/}  \dalmv{y}  \dalsym{]}  \dalsym{[}  \dalnt{a_{{\mathrm{0}}}}  \dalsym{/}  \dalmv{x}  \dalsym{]}  \dalnt{C} $
  are equal.
  We therefore may apply \dalrulename[glad]{tensorIntro}
  to the three inductive hypotheses and obtain
  \begin{align*}
    &
      \dalnt{q}   \cdot   \dalsym{(}  \delta_{{\mathrm{1}}}  \dalsym{+}   \dalnt{r_{{\mathrm{1}}}}   \cdot   \delta_{{\mathrm{0}}}   \dalsym{,}  \delta'_{{\mathrm{1}}}  \dalsym{)}   \dalsym{+}  \dalsym{(}  \delta_{{\mathrm{2}}}  \dalsym{+}   \dalnt{r_{{\mathrm{2}}}}   \cdot   \delta_{{\mathrm{0}}}   \dalsym{,}  \delta'_{{\mathrm{2}}}  \dalsym{)}
      \mid \mathcal
{M}  \dalsym{,}  \mathcal
{M}'
      \at  \Gamma  \dalsym{,}  \dalsym{[}  \dalnt{a_{{\mathrm{0}}}}  \dalsym{/}  \dalmv{x}  \dalsym{]}  \Gamma'
    \\ & \quad
      \proves_{\mf n}
      \dalsym{(}  \dalsym{[}  \dalnt{a_{{\mathrm{0}}}}  \dalsym{/}  \dalmv{x}  \dalsym{]}  \dalnt{b}  \dalsym{,}  \dalsym{[}  \dalnt{a_{{\mathrm{0}}}}  \dalsym{/}  \dalmv{x}  \dalsym{]}  \dalnt{c}  \dalsym{)} :
      (  \dalmv{y}  :^{ \dalnt{q}  :  \mathfrak{m} }  \dalsym{[}  \dalnt{a_{{\mathrm{0}}}}  \dalsym{/}  \dalmv{x}  \dalsym{]}  \dalnt{B}  )  \otimes   \dalsym{[}  \dalnt{a_{{\mathrm{0}}}}  \dalsym{/}  \dalmv{x}  \dalsym{]}  \dalnt{C}
  \end{align*}
  which is the desired judgment.

  \proofitem{Case \dalrulename[glad]{tensorElim}}
  Our derivationhas the form
  \[
    \inferrule
    {
      \delta_{{\mathrm{3}}}  \dalsym{,}  \dalnt{r_{{\mathrm{3}}}}  \dalsym{,}  \delta'_{{\mathrm{3}}}  \dalsym{,}  \dalnt{r'_{{\mathrm{3}}}}
      \mid \mathcal
{M}  \dalsym{,}  \mathfrak{m}_{{\mathrm{0}}}  \dalsym{,}  \mathcal
{M}'  \dalsym{,}  \mathfrak{m}_{{\mathrm{2}}} \\
      \at \Gamma  \dalsym{,}  \dalmv{x}  \colon  \dalnt{A}  \dalsym{,}  \Gamma'  \dalsym{,}  \dalmv{z}  \colon   (  \dalmv{y}  :^{ \dalnt{q}  :  \mathfrak{m}_{{\mathrm{1}}} }  \dalnt{B_{{\mathrm{1}}}}  )  \otimes   \dalnt{B_{{\mathrm{2}}}}
      \proves_{\mf l} \dalnt{C} : \mathsf{Type}
      \\\\
      \\\\
      \delta_{{\mathrm{1}}}  \dalsym{,}  \dalnt{r_{{\mathrm{1}}}}  \dalsym{,}  \delta'_{{\mathrm{1}}}
      \mid \mathcal
{M}  \dalsym{,}  \mathfrak{m}_{{\mathrm{0}}}  \dalsym{,}  \mathcal
{M}'
      \at \Gamma  \dalsym{,}  \dalmv{x}  \colon  \dalnt{A}  \dalsym{,}  \Gamma'
      \proves_{\mf m _ 2} \dalnt{b} : (  \dalmv{y}  :^{ \dalnt{q}  :  \mathfrak{m}_{{\mathrm{1}}} }  \dalnt{B_{{\mathrm{1}}}}  )  \otimes   \dalnt{B_{{\mathrm{2}}}}
      \\\\
      \\\\
      \delta_{{\mathrm{2}}}  \dalsym{,}  \dalnt{r_{{\mathrm{2}}}}  \dalsym{,}  \delta'_{{\mathrm{2}}}  \dalsym{,}   \dalnt{s}   \cdot   \dalnt{q}   \dalsym{,}  \dalnt{s}
      \mid \mathcal
{M}  \dalsym{,}  \mathfrak{m}_{{\mathrm{0}}}  \dalsym{,}  \mathcal
{M}'  \dalsym{,}  \mathfrak{m}_{{\mathrm{1}}}  \dalsym{,}  \mathfrak{m}_{{\mathrm{2}}}
      \\
      \at \Gamma  \dalsym{,}  \dalmv{x}  \colon  \dalnt{A}  \dalsym{,}  \Gamma'  \dalsym{,}  \dalmv{y_{{\mathrm{1}}}}  \colon  \dalnt{B_{{\mathrm{1}}}}  \dalsym{,}  \dalmv{y_{{\mathrm{2}}}}  \colon  \dalnt{B_{{\mathrm{2}}}}
      \proves_{\mf l} \dalnt{c} : \dalsym{[}  \dalsym{(}  \dalmv{y_{{\mathrm{1}}}}  \dalsym{,}  \dalmv{y_{{\mathrm{2}}}}  \dalsym{)}  \dalsym{/}  \dalmv{z}  \dalsym{]}  \dalnt{C}
    }
    {
      \dalnt{s}   \cdot   \dalsym{(}  \delta_{{\mathrm{1}}}  \dalsym{,}  \dalnt{r_{{\mathrm{1}}}}  \dalsym{,}  \delta'_{{\mathrm{1}}}  \dalsym{)}   \dalsym{+}  \dalsym{(}  \delta_{{\mathrm{2}}}  \dalsym{,}  \dalnt{r_{{\mathrm{2}}}}  \dalsym{,}  \delta'_{{\mathrm{2}}}  \dalsym{)}
      \mid \mathcal
{M}  \dalsym{,}  \mathfrak{m}_{{\mathrm{0}}}  \dalsym{,}  \mathcal
{M}'
      \at \Gamma  \dalsym{,}  \dalmv{x}  \colon  \dalnt{A}  \dalsym{,}  \Gamma' \\
      \proves_{\mf l}
      \operatorname{\mathsf{let} } \, \dalsym{(}  \dalmv{y_{{\mathrm{1}}}}  \dalsym{,}  \dalmv{y_{{\mathrm{2}}}}  \dalsym{)}  \dalsym{=}  \dalnt{b} \, \operatorname{\mathsf{in} } \, \dalnt{c} : \dalsym{[}  \dalnt{b}  \dalsym{/}  \dalmv{z}  \dalsym{]}  \dalnt{C}
    }
  \]
  We need to prove
  \begin{align*}
    &
      \dalnt{s}   \cdot   \delta_{{\mathrm{1}}}   \dalsym{+}  \delta_{{\mathrm{2}}}  \dalsym{+}   \dalsym{(}   \dalnt{s}   \cdot   \dalnt{r_{{\mathrm{1}}}}   \dalsym{+}  \dalnt{r_{{\mathrm{2}}}}  \dalsym{)}   \cdot   \delta_{{\mathrm{0}}}   \dalsym{,}   \dalnt{s}   \cdot   \delta'_{{\mathrm{1}}}   \dalsym{+}  \delta'_{{\mathrm{2}}}
      \mid \mathcal
{M}  \dalsym{,}  \mathcal
{M}'
      \at \Gamma  \dalsym{,}  \dalsym{[}  \dalnt{a_{{\mathrm{0}}}}  \dalsym{/}  \dalmv{x}  \dalsym{]}  \Gamma'
    \\ & \quad
      \proves_{\mf l}
      \dalsym{[}  \dalnt{a_{{\mathrm{0}}}}  \dalsym{/}  \dalmv{x}  \dalsym{]}  \dalsym{(}  \operatorname{\mathsf{let} } \, \dalsym{(}  \dalmv{y_{{\mathrm{1}}}}  \dalsym{,}  \dalmv{y_{{\mathrm{2}}}}  \dalsym{)}  \dalsym{=}  \dalnt{b} \, \operatorname{\mathsf{in} } \, \dalnt{c}  \dalsym{)} : \dalsym{[}  \dalnt{a_{{\mathrm{0}}}}  \dalsym{/}  \dalmv{x}  \dalsym{]}  \dalsym{[}  \dalnt{b}  \dalsym{/}  \dalmv{z}  \dalsym{]}  \dalnt{C}
  \end{align*}
  The inductive hypotheses are
  \begin{align*}
    &
      \delta_{{\mathrm{3}}}  \dalsym{+}   \dalnt{r_{{\mathrm{3}}}}   \cdot   \delta_{{\mathrm{0}}}   \dalsym{,}  \delta'_{{\mathrm{3}}}  \dalsym{,}  \dalnt{r'_{{\mathrm{3}}}}
      \mid \mathcal
{M}  \dalsym{,}  \mathcal
{M}'  \dalsym{,}  \mathfrak{m}_{{\mathrm{2}}}
    \\ & \quad
         \at  \Gamma  \dalsym{,}  \dalsym{[}  \dalnt{a_{{\mathrm{0}}}}  \dalsym{/}  \dalmv{x}  \dalsym{]}  \Gamma'  \dalsym{,}  \dalmv{z}  \colon   (  \dalmv{y}  :^{ \dalnt{q}  :  \mathfrak{m}_{{\mathrm{1}}} }  \dalsym{[}  \dalnt{a_{{\mathrm{0}}}}  \dalsym{/}  \dalmv{z}  \dalsym{]}  \dalnt{B_{{\mathrm{1}}}}  )  \otimes   \dalsym{[}  \dalnt{a_{{\mathrm{0}}}}  \dalsym{/}  \dalmv{x}  \dalsym{]}  \dalnt{B_{{\mathrm{2}}}}
    \\ & \quad
         \proves_{\mf l} \dalsym{[}  \dalnt{a_{{\mathrm{0}}}}  \dalsym{/}  \dalmv{x}  \dalsym{]}  \dalnt{C} : \mathsf{Type} 
    \\ &
         \delta_{{\mathrm{1}}}  \dalsym{+}   \dalnt{r_{{\mathrm{1}}}}   \cdot   \delta_{{\mathrm{0}}}   \dalsym{,}  \delta'_{{\mathrm{1}}}
         \mid \mathcal
{M}  \dalsym{,}  \mathcal
{M}'
         \at \Gamma  \dalsym{,}  \dalsym{[}  \dalnt{a_{{\mathrm{0}}}}  \dalsym{/}  \dalmv{x}  \dalsym{]}  \Gamma'
    \\ & \quad
         \proves_{\mf m2}
         \dalsym{[}  \dalnt{a_{{\mathrm{0}}}}  \dalsym{/}  \dalmv{x}  \dalsym{]}  \dalnt{b} : (  \dalmv{y}  :^{ \dalnt{q}  :  \mathfrak{m}_{{\mathrm{1}}} }  \dalsym{[}  \dalnt{a_{{\mathrm{0}}}}  \dalsym{/}  \dalmv{x}  \dalsym{]}  \dalnt{B_{{\mathrm{1}}}}  )  \otimes   \dalsym{[}  \dalnt{a_{{\mathrm{0}}}}  \dalsym{/}  \dalmv{x}  \dalsym{]}  \dalnt{B_{{\mathrm{2}}}}
    \\ &
         \delta_{{\mathrm{2}}}  \dalsym{+}   \dalnt{r_{{\mathrm{2}}}}   \cdot   \delta_{{\mathrm{0}}}   \dalsym{,}  \delta'_{{\mathrm{2}}}  \dalsym{,}   \dalnt{s}   \cdot   \dalnt{q}   \dalsym{,}  \dalnt{s}
         \mid \mathcal
{M}  \dalsym{,}  \mathcal
{M}'  \dalsym{,}  \mathfrak{m}_{{\mathrm{1}}}  \dalsym{,}  \mathfrak{m}_{{\mathrm{2}}}
    \\ & \quad
         \at \Gamma  \dalsym{,}  \dalsym{[}  \dalnt{a_{{\mathrm{0}}}}  \dalsym{/}  \dalmv{x}  \dalsym{]}  \Gamma'  \dalsym{,}  \dalmv{y_{{\mathrm{1}}}}  \colon  \dalsym{[}  \dalnt{a_{{\mathrm{0}}}}  \dalsym{/}  \dalmv{x}  \dalsym{]}  \dalnt{B_{{\mathrm{1}}}}  \dalsym{,}  \dalmv{y_{{\mathrm{2}}}}  \colon  \dalsym{[}  \dalnt{a_{{\mathrm{0}}}}  \dalsym{/}  \dalmv{x}  \dalsym{]}  \dalnt{B_{{\mathrm{2}}}}
    \\ & \quad
         \proves_{\mf l}
         \dalsym{[}  \dalnt{a_{{\mathrm{0}}}}  \dalsym{/}  \dalmv{x}  \dalsym{]}  \dalnt{c} : \dalsym{[}  \dalnt{a_{{\mathrm{0}}}}  \dalsym{/}  \dalmv{x}  \dalsym{]}  \dalsym{[}  \dalsym{(}  \dalmv{y_{{\mathrm{1}}}}  \dalsym{,}  \dalmv{y_{{\mathrm{2}}}}  \dalsym{)}  \dalsym{/}  \dalmv{z}  \dalsym{]}  \dalnt{C}
  \end{align*}
  The types
  $ \dalsym{[}  \dalnt{a_{{\mathrm{0}}}}  \dalsym{/}  \dalmv{x}  \dalsym{]}  \dalsym{[}  \dalsym{(}  \dalmv{y_{{\mathrm{1}}}}  \dalsym{,}  \dalmv{y_{{\mathrm{2}}}}  \dalsym{)}  \dalsym{/}  \dalmv{z}  \dalsym{]}  \dalnt{C} $
  and
  $ \dalsym{[}  \dalsym{(}  \dalmv{y_{{\mathrm{1}}}}  \dalsym{,}  \dalmv{y_{{\mathrm{2}}}}  \dalsym{)}  \dalsym{/}  \dalmv{z}  \dalsym{]}  \dalsym{[}  \dalnt{a_{{\mathrm{0}}}}  \dalsym{/}  \dalmv{x}  \dalsym{]}  \dalnt{C} $
  are equal.
  We may therefore apply
  \dalrulename[glad]{tensorElim}
  to the three inductive hypotheses and obtain
  \begin{align*}
    &
      \dalnt{s}   \cdot   \dalsym{(}  \delta_{{\mathrm{1}}}  \dalsym{+}   \dalnt{r_{{\mathrm{1}}}}   \cdot   \delta_{{\mathrm{0}}}   \dalsym{,}  \delta'_{{\mathrm{1}}}  \dalsym{)}   \dalsym{+}  \dalsym{(}  \delta_{{\mathrm{2}}}  \dalsym{+}   \dalnt{r_{{\mathrm{2}}}}   \cdot   \delta_{{\mathrm{0}}}   \dalsym{,}  \delta'_{{\mathrm{2}}}  \dalsym{)}
      \mid \mathcal
{M}  \dalsym{,}  \mathcal
{M}'
      \at \Gamma  \dalsym{,}  \dalsym{[}  \dalnt{a_{{\mathrm{0}}}}  \dalsym{/}  \dalmv{x}  \dalsym{]}  \Gamma'
    \\ & \quad
         \proves_{\mf l}
         \operatorname{\mathsf{let} } \, \dalsym{(}  \dalmv{y_{{\mathrm{1}}}}  \dalsym{,}  \dalmv{y_{{\mathrm{2}}}}  \dalsym{)}  \dalsym{=}  \dalsym{[}  \dalnt{a_{{\mathrm{0}}}}  \dalsym{/}  \dalmv{x}  \dalsym{]}  \dalnt{b} \, \operatorname{\mathsf{in} } \, \dalsym{[}  \dalnt{a_{{\mathrm{0}}}}  \dalsym{/}  \dalmv{x}  \dalsym{]}  \dalnt{c} :
         \dalsym{[}  \dalsym{[}  \dalnt{a_{{\mathrm{0}}}}  \dalsym{/}  \dalmv{x}  \dalsym{]}  \dalnt{b}  \dalsym{/}  \dalmv{z}  \dalsym{]}  \dalsym{[}  \dalnt{a_{{\mathrm{0}}}}  \dalsym{/}  \dalmv{x}  \dalsym{]}  \dalnt{C}
  \end{align*}
  which is the desired judgment once we observe that the types
  $ \dalsym{[}  \dalsym{[}  \dalnt{a_{{\mathrm{0}}}}  \dalsym{/}  \dalmv{x}  \dalsym{]}  \dalnt{b}  \dalsym{/}  \dalmv{z}  \dalsym{]}  \dalsym{[}  \dalnt{a_{{\mathrm{0}}}}  \dalsym{/}  \dalmv{x}  \dalsym{]}  \dalnt{C} $
  and 
  $ \dalsym{[}  \dalnt{a_{{\mathrm{0}}}}  \dalsym{/}  \dalmv{x}  \dalsym{]}  \dalsym{[}  \dalnt{b}  \dalsym{/}  \dalmv{z}  \dalsym{]}  \dalnt{C} $ are equal.
  
  \proofitem{Case \dalrulename[glad]{raise}}
  Our derivation has the form
  \[
	\inferrule
    {
      \mathfrak{m}_{{\mathrm{1}}}  \leq  \mathfrak{m}_{{\mathrm{2}}} \\
      \mathfrak{m}_{{\mathrm{2}}}  \leq  \dalsym{(}  \mathcal
{M}  \dalsym{,}  \mathfrak{m}_{{\mathrm{0}}}  \dalsym{,}  \mathcal
{M}'  \dalsym{)} \\
      \delta  \dalsym{,}  \dalnt{r}  \dalsym{,}  \delta'  \mid  \mathcal
{M}  \dalsym{,}  \mathfrak{m}_{{\mathrm{0}}}  \dalsym{,}  \mathcal
{M}'   \odot   \Gamma  \dalsym{,}  \dalmv{x}  \colon  \dalnt{A}  \dalsym{,}  \Gamma'   \vdash _{ \mathfrak{m}_{{\mathrm{1}}} }  \dalnt{b}  \colon  \dalnt{B}
    }
    {
      \delta  \dalsym{,}  \dalnt{r}  \dalsym{,}  \delta'  \mid  \mathcal
{M}  \dalsym{,}  \mathfrak{m}_{{\mathrm{0}}}  \dalsym{,}  \mathcal
{M}'   \odot   \Gamma  \dalsym{,}  \dalmv{x}  \colon  \dalnt{A}  \dalsym{,}  \Gamma'   \vdash _{ \mathfrak{m}_{{\mathrm{2}}} }   \uparrow_{ \mathfrak{m}_{{\mathrm{1}}} }^{ \mathfrak{m}_{{\mathrm{2}}} }\!\!  \dalnt{b}   \colon   \uparrow_{ \mathfrak{m}_{{\mathrm{1}}} }^{ \mathfrak{m}_{{\mathrm{2}}} }\!\!  \dalnt{B}
    }
  \]
  We need to show
  \begin{align*}
	& \delta  \dalsym{+}   \dalnt{r}   \cdot   \delta_{{\mathrm{0}}}   \dalsym{,}  \delta'
    \mid \mathcal
{M}  \dalsym{,}  \mathcal
{M}' \at
    \Gamma  \dalsym{,}  \dalsym{[}  \dalnt{a_{{\mathrm{0}}}}  \dalsym{/}  \dalmv{x}  \dalsym{]}  \Gamma' \\
    & \quad \proves_{ \mathfrak{m}_{{\mathrm{2}}} }
    \uparrow_{ \mathfrak{m}_{{\mathrm{1}}} }^{ \mathfrak{m}_{{\mathrm{2}}} }\!\!  \dalsym{[}  \dalnt{a_{{\mathrm{0}}}}  \dalsym{/}  \dalmv{x}  \dalsym{]}  \dalnt{b} : \uparrow_{ \mathfrak{m}_{{\mathrm{1}}} }^{ \mathfrak{m}_{{\mathrm{2}}} }\!\!  \dalsym{[}  \dalnt{a_{{\mathrm{0}}}}  \dalsym{/}  \dalmv{x}  \dalsym{]}  \dalnt{B}
  \end{align*}
  The inductive hypothesis is
  \begin{align*}
	& \delta  \dalsym{+}   \dalnt{r}   \cdot   \delta_{{\mathrm{0}}}   \dalsym{,}  \delta'
    \mid \mathcal
{M}  \dalsym{,}  \mathcal
{M}' \at
    \Gamma  \dalsym{,}  \dalsym{[}  \dalnt{a_{{\mathrm{0}}}}  \dalsym{/}  \dalmv{x}  \dalsym{]}  \Gamma' \\
    & \quad \proves_{ \mathfrak{m}_{{\mathrm{1}}} }
    \dalsym{[}  \dalnt{a_{{\mathrm{0}}}}  \dalsym{/}  \dalmv{x}  \dalsym{]}  \dalnt{b} : \dalsym{[}  \dalnt{a_{{\mathrm{0}}}}  \dalsym{/}  \dalmv{x}  \dalsym{]}  \dalnt{B}
  \end{align*}
  Furthermore, $ \mathfrak{m}_{{\mathrm{2}}}  \leq  \dalsym{(}  \mathcal
{M}  \dalsym{,}  \mathcal
{M}'  \dalsym{)} $ follows from $ \mathfrak{m}_{{\mathrm{2}}}  \leq  \dalsym{(}  \mathcal
{M}  \dalsym{,}  \mathfrak{m}_{{\mathrm{0}}}  \dalsym{,}  \mathcal
{M}'  \dalsym{)} $.
  Therefore we can apply \dalrulename[glad]{raise} to obtain the desired result.

  \proofitem{Case \dalrulename[glad]{unraise}}
  In this case, our derivation has the form
  \[
	\inferrule
    {
      \delta  \dalsym{,}  \dalnt{r}  \dalsym{,}  \delta'  \mid  \mathcal
{M}  \dalsym{,}  \mathfrak{m}_{{\mathrm{0}}}  \dalsym{,}  \mathcal
{M}'   \odot   \Gamma  \dalsym{,}  \dalmv{x}  \colon  \dalnt{A}  \dalsym{,}  \Gamma'   \vdash _{ \mathfrak{m}_{{\mathrm{2}}} }  \dalnt{b}  \colon   \uparrow_{ \mathfrak{m}_{{\mathrm{1}}} }^{ \mathfrak{m}_{{\mathrm{2}}} }\!\!  \dalnt{B}
    }
    {
      \delta  \dalsym{,}  \dalnt{r}  \dalsym{,}  \delta'  \mid  \mathcal
{M}  \dalsym{,}  \mathfrak{m}_{{\mathrm{0}}}  \dalsym{,}  \mathcal
{M}'   \odot   \Gamma  \dalsym{,}  \dalmv{x}  \colon  \dalnt{A}  \dalsym{,}  \Gamma'   \vdash _{ \mathfrak{m}_{{\mathrm{1}}} }   \downarrow_{ \mathfrak{m}_{{\mathrm{1}}} }^{ \mathfrak{m}_{{\mathrm{2}}} }\!\!  \dalnt{b}   \colon  \dalnt{B}
    }
  \]
  We need to show
  \begin{align*}
	& \delta  \dalsym{+}   \dalnt{r}   \cdot   \delta_{{\mathrm{0}}}   \dalsym{,}  \delta'
    \mid \mathcal
{M}  \dalsym{,}  \mathcal
{M}' \at
    \Gamma  \dalsym{,}  \dalsym{[}  \dalnt{a_{{\mathrm{0}}}}  \dalsym{/}  \dalmv{x}  \dalsym{]}  \Gamma' \\
    & \quad \proves_{ \mathfrak{m}_{{\mathrm{1}}} }
    \downarrow_{ \mathfrak{m}_{{\mathrm{1}}} }^{ \mathfrak{m}_{{\mathrm{2}}} }\!\!  \dalsym{[}  \dalnt{a_{{\mathrm{0}}}}  \dalsym{/}  \dalmv{x}  \dalsym{]}  \dalnt{b} : \dalsym{[}  \dalnt{a_{{\mathrm{0}}}}  \dalsym{/}  \dalmv{x}  \dalsym{]}  \dalnt{B}
  \end{align*}
  which follows immediately from the inductive hypothesis
  \begin{align*}
	& \delta  \dalsym{+}   \dalnt{r}   \cdot   \delta_{{\mathrm{0}}}   \dalsym{,}  \delta'
    \mid \mathcal
{M}  \dalsym{,}  \mathcal
{M}' \at
    \Gamma  \dalsym{,}  \dalsym{[}  \dalnt{a_{{\mathrm{0}}}}  \dalsym{/}  \dalmv{x}  \dalsym{]}  \Gamma' \\
    & \quad \proves_{ \mathfrak{m}_{{\mathrm{2}}} }
    \dalsym{[}  \dalnt{a_{{\mathrm{0}}}}  \dalsym{/}  \dalmv{x}  \dalsym{]}  \dalnt{b} : \uparrow_{ \mathfrak{m}_{{\mathrm{1}}} }^{ \mathfrak{m}_{{\mathrm{2}}} }\!\!  \dalsym{[}  \dalnt{a_{{\mathrm{0}}}}  \dalsym{/}  \dalmv{x}  \dalsym{]}  \dalnt{B}
  \end{align*}
  by applying \dalrulename[glad]{unraise}.

  The remaining cases are similar and we omit them.
  \qedhere
\end{proof}

\end{document}